\newtheorem{corollary}{Corollary}
\newtheorem{definition}{Definition}
\newtheorem{lemma}{Lemma}
\newtheorem{remark}{Remark}
\newtheorem{theorem}{Theorem}
\newtheorem{problem}{Problem}
\newcommand{\mc}{\mathcal}
\renewcommand{\E}{\mathop{\mathbb E\/}}
\newcommand{\vspan}{\mathrm{span}}
\newclass{\sharpp}{\#P}
\newclass{\cocequalp}{coC_{=}P}
\newfunc{\expp}{exp}
\newcommand{\ssym}[2]{\vee^{#2}\mathbb{C}^{#1}}
\newcommand{\psym}[2]{\Pi_{\mathrm{sym}}^{#1,#2}}
\newcommand{\Var}{\mathrm{Var}}
\newcommand{\swapop}{\mathrm{SWAP}}
\newcommand{\mmix}{\sigma_{\mathrm{m}}}
\newcommand{\ipe}{\mathrm{IP}_{\varepsilon}}
\definecolor{darkred}  {rgb}{0.5,0,0}
\definecolor{darkblue} {rgb}{0,0,0.5}
\definecolor{darkgreen}{rgb}{0,0.5,0}
\newtheorem*{theorem*}{Theorem}
\begin{document}
\title{Distributed quantum inner product estimation}
\author{
Anurag Anshu\thanks{Department of EECS, Challenge Institute for Quantum Computation and Simons Institute for the Theory of Computing, UC Berkeley. \href{mailto:anuraganshu@berkeley.edu}{anuraganshu@berkeley.edu}}
\and Zeph Landau\thanks{Department of EECS, UC Berkeley. \href{mailto:zeph.landau@gmail.com}{zeph.landau@gmail.com}}
\and Yunchao Liu\thanks{Department of EECS, UC Berkeley. \href{mailto:yunchaoliu@berkeley.edu}{yunchaoliu@berkeley.edu}}
}

\date{}
\maketitle

\begin{abstract}
As small quantum computers are becoming available on different physical platforms, a benchmarking task known as cross-platform verification has been proposed that aims to estimate the fidelity of states prepared on two quantum computers. This task is fundamentally distributed, as no quantum communication can be performed between the two physical platforms due to hardware constraints, which prohibits a joint SWAP test. In this paper we settle the sample complexity of this task across all measurement and communication settings. The essence of the task, which we call distributed quantum inner product estimation, involves two players Alice and Bob who have $k$ copies of unknown states $\rho,\sigma$ (acting on $\mathbb{C}^{d}$) respectively. Their goal is to estimate $\mathrm{Tr}(\rho\sigma)$ up to additive error $\varepsilon\in(0,1)$, using local quantum operations and classical communication. In the weakest setting where only non-adaptive single-copy measurements and simultaneous message passing are allowed, we show that $k=O(\max\{1/\varepsilon^2,\sqrt{d}/\varepsilon\})$ copies suffice. This achieves a savings compared to full tomography which takes $\Omega(d^3)$ copies with single-copy measurements. Surprisingly, we also show that the sample complexity must be at least $\Omega(\max\{1/\varepsilon^2,\sqrt{d}/\varepsilon\})$, even in the strongest setting where adaptive multi-copy measurements and arbitrary rounds of communication are allowed. This shows that the success achieved by shadow tomography, for sample-efficiently learning the properties of a single system, cannot be generalized to the distributed setting. Furthermore, the fact that the sample complexity remains the same with single and multi-copy measurements contrasts with single system quantum property testing, which often demonstrate exponential separations in sample complexity with single and multi-copy measurements. 
\end{abstract}

\section{Introduction}
We consider the following task.  Alice is given $k$ copies of an unknown state $\rho$ acting on $\mathbb{C}^{d}$ and Bob is given $k$ copies of a possibly different unknown state $\sigma$ acting on $\mathbb{C}^{d}$. Their goal is to estimate $\Tr(\rho\sigma)$ up to additive error $\varepsilon\in(0,1)$ with success probability at least $2/3$, using local quantum operations and classical communication (LOCC). We refer to this task as {\it distributed quantum inner product estimation}.

We stress that the distributed aspect of the task, namely that there is no quantum channel between Alice and Bob, is fundamental to the problem.  If we were to allow a quantum channel, the task would be easily accomplished with $k= O(1/\varepsilon^2)$ copies, by Alice sending Bob copies of her state and Bob performing the SWAP test with his copies.  Distributed quantum inner product estimation is the central step of a procedure known as cross-platform verification, proposed in~\cite{Elben2020cross}: the goal is to test whether the output states of two quantum computers on different platforms (such as ion traps and superconducting qubits) are close to each other, but joint quantum operation  across the platforms cannot be performed. A related application is to use a well-calibrated quantum computer to test another quantum computer on a different platform. See Section~\ref{sec:discussion} for a discussion of other applications.

Within our distributed setting without quantum communication there are still a number of communication features to consider, namely what kind of classical communication shall be allowed between Alice and Bob.  The weakest would be to simply allow Alice and Bob to send information to a third party that then performs a classical calculation (known as simultaneous message passing, Fig.~\ref{fig:communication}a). The strongest would be to allow multi-round communication between Alice and Bob (Fig.~\ref{fig:communication}c). In addition, we also consider in what form Alice and Bob can access their samples: the simplest setting would be access to a single copy of the state at a time; we also consider the most general setting where Alice and Bob can access multiple copies of their state at a time, allowing them to do a joint measurement on multiple copies in tensor product. Furthermore, their measurement bases can be \emph{adaptive}: each new measurement operator could depend on all previous measurement outcomes and communication.

In this paper we settle the sample complexity of distributed quantum inner product estimation under all these possible specifications of communication and sample access. We show that the sample complexity remains the same (up to a constant factor) across all measurement and communication settings, ranging from the weakest non-adaptive single-copy measurements and simultaneous message passing (Fig.~\ref{fig:communication}a), to the strongest adaptive multi-copy measurements and arbitrary rounds of classical communication (Fig.~\ref{fig:communication}c). 

\begin{theorem*} [Main result] There exists an algorithm for inner product estimation with non-adaptive single-copy measurements and simultaneous message passing that uses $k=O(\max\{1/\varepsilon^2,\sqrt{d}/\varepsilon\})$ copies. Furthermore, any algorithm for inner product estimation with arbitrary rounds of adaptive multi-copy measurements and classical communication must use at least $\Omega(\max\{1/\varepsilon^2,\sqrt{d}/\varepsilon\})$ copies.
\end{theorem*}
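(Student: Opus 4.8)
The plan is to prove the upper and lower bounds separately. For the upper bound I would use a shared‑randomness randomized‑measurement estimator. Alice and Bob use public coins to sample a Haar‑random unitary $U$ on $\mathbb{C}^d$, both apply $U$ to each copy, measure in the computational basis, and forward the outcome strings to the referee; this is non‑adaptive, single‑copy, and fits simultaneous message passing. Writing $p_a^U=\langle a|U\rho U^\dagger|a\rangle$ and $q_a^U=\langle a|U\sigma U^\dagger|a\rangle$, a second‑moment (Weingarten) calculation gives $\E_U[\sum_a p_a^U q_a^U]=\tfrac{1+\Tr(\rho\sigma)}{d+1}$, so the referee's coincidence rate $\hat C$ (the empirical frequency that one of Alice's samples and one of Bob's agree under the shared $U$) yields the unbiased estimator $\hat Y=(d+1)\hat C-1$ with $\E[\hat Y]=\Tr(\rho\sigma)$.

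The work is in the variance. I would split $\Var[\hat Y]$ into an across‑unitary part $\Var_U[(d+1)\sum_a p_a^U q_a^U-1]$ and a within‑unitary coincidence‑sampling (``shot‑noise'') part. Bounding the first uniformly over all states is the main obstacle here: a fourth‑moment Weingarten computation should show $\Var_U[\sum_a p_a^U q_a^U]=O(1/d^3)$, so this part is $O(1/d)$ for every $\rho,\sigma$. The shot‑noise part, for $N$ copies per party under a single $U$, scales like $d/N^2$, since the expected number of coincidences is $\sim N^2/d$. In the coarse regime $\varepsilon\ge 1/\sqrt d$ a single shared $U$ already makes the across‑unitary part $O(1/d)\le\varepsilon^2$, and $N=O(\sqrt d/\varepsilon)$ kills the shot noise; in the fine regime $\varepsilon<1/\sqrt d$ one instead averages over $M=O(1/\varepsilon^2)$ independent unitaries. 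Optimizing the split of $k=NM$ copies between $N$ and $M$ then gives $k=O(\max\{1/\varepsilon^2,\sqrt d/\varepsilon\})$.

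For the lower bound, the $\Omega(1/\varepsilon^2)$ term is classical: fix $\sigma=|0\rangle\langle 0|$ and let $\rho$ be diagonal with $\langle 0|\rho|0\rangle\in\{1/2,\,1/2+\varepsilon\}$, reducing the task to distinguishing two Bernoulli parameters, which needs $\Omega(1/\varepsilon^2)$ samples regardless of communication or joint measurements. The crux is $\Omega(\sqrt d/\varepsilon)$. I would take the hard instance to be correlated versus independent random states with \emph{matched marginals}: with $\psi$ Haar‑random and $\rho_\psi=\tfrac Id+p(\psi-\tfrac Id)$, compare Case~A ($\rho=\sigma=\rho_\psi$) against Case~B ($\rho=\rho_\psi$, $\sigma=\rho_\phi$ independent), choosing $p^2=\Theta(\varepsilon)$ so that $\Tr(\rho\sigma)$ differs by $\Theta(\varepsilon)$ between the cases while each party's reduced state is identical in both — forcing any algorithm to detect the cross‑correlation rather than a local signal. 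The key structural fact, which uses that there is no quantum channel, is that for any LOCC protocol the transcript $\tau$ factorizes as $\Pr[\tau\mid\rho,\sigma]=\tilde P_A(\tau;\rho)\,\tilde P_B(\tau;\sigma)$ (a rectangle / conditional‑independence property), where $\tilde P_A(\tau;\rho)=\Tr(E_A(\tau)\,\rho^{\otimes k})$ for operators $E_A(\tau)\succeq 0$, and likewise for Bob.

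Under this factorization the distinguishing advantage between the induced transcript distributions $Q_A,Q_B$ equals $\tfrac12\sum_\tau|\mathrm{Cov}_\psi(\tilde P_A(\tau;\rho_\psi),\tilde P_B(\tau;\rho_\psi))|$, and the entire lower bound reduces to bounding this covariance. The hard part is that it must be bounded \emph{tightly}: a naive triangle‑inequality bound controls only the first‑order correlation and yields the weaker $\Omega(\sqrt{d/\varepsilon})$, because it discards cancellations. I would instead bound the $\chi^2$‑divergence between the two transcript distributions, which squares the covariance and hence naturally produces the correct $\varepsilon^2$ factor, via the two‑independent‑copies identity
\[
\chi^2(Q_A\,\|\,Q_B)+1=\E_{\psi,\psi'}\ \sum_\tau \frac{\tilde P_A(\tau;\psi)\,\tilde P_A(\tau;\psi')\,\tilde P_B(\tau;\psi)\,\tilde P_B(\tau;\psi')}{\bar P_A(\tau)\,\bar P_B(\tau)},
\]
with $\bar P_A(\tau)=\E_\psi \tilde P_A(\tau;\psi)$. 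Reducing the resulting Haar integrals to overlaps $|\langle\psi|\psi'\rangle|^2\approx 1/d$ through symmetric‑subspace / Weingarten calculus, the target estimate is $\chi^2=O(k^2\varepsilon^2/d)$, giving indistinguishability whenever $k=o(\sqrt d/\varepsilon)$. Carrying this moment computation through for arbitrary adaptive, multi‑copy protocols — where $E_A(\tau),E_B(\tau)$ are only constrained to form a product POVM and the denominators $\bar P_A,\bar P_B$ must be handled operator‑theoretically — is the main obstacle, and is precisely the step that must show single‑copy and multi‑copy measurements to behave identically.
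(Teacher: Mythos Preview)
Your upper bound is essentially the paper's Algorithm~2: the same shared-random-unitary collision estimator, the same law-of-total-variance split into an across-unitary part (the paper shows this is $O(1/d^3)$ via a fourth-moment calculation) and a shot-noise part, and the same $N$-versus-$m$ optimization giving $k=O(\max\{1/\varepsilon^2,\sqrt d/\varepsilon\})$. The $\Omega(1/\varepsilon^2)$ lower bound is likewise the same idea.

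For $\Omega(\sqrt d/\varepsilon)$ the paper's route is quite different, and your plan has a real gap at exactly the step you flag. The paper does not use $\chi^2$ or transcript covariances. It first proves $k=\Omega(\sqrt d)$ for the constant-$\varepsilon$ decisional problem DIPE (same vs.\ independent Haar-random \emph{pure} states) against all separable measurements, via a \emph{max-relative entropy} bound: for any POVM element $M$ on Alice's side, Bob's posterior state satisfies $\rho_M\ge e^{-k^2/d}\sigma_{\mathrm m}$. This follows by recognizing $\rho_M$ as the output of the optimal measure-and-prepare channel on (a normalization of) $M$ and invoking Chiribella's theorem that this channel is a convex combination of optimal cloning maps $\mathrm{Clone}_{s\to k}$; the $s=0$ term alone already gives $\rho_M\ge\tfrac{\binom{d+k-1}{k}}{\binom{d+2k-1}{k}}\,\sigma_{\mathrm m}$. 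Because this is a one-sided \emph{operator} inequality, the extension from one-way to arbitrary separable measurements is two lines; the paper remarks explicitly that trace-distance closeness would not suffice here. The $\varepsilon$-dependence is then obtained by a separate reduction: the hard instance uses pure states $\sqrt{1-\varepsilon}\,e^{i\theta}|0\rangle+\sqrt\varepsilon\,|\phi\rangle$ with independent random phases, and averaging over $\theta$ converts $k$ copies into a binomial mixture containing effectively $\Theta(k\varepsilon)$ copies of $|\phi\rangle$, reducing back to DIPE.

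The obstacle you identify in your $\chi^2$ route is genuine and is precisely what the paper's technique sidesteps. For your instance $\rho_\psi=(1-p)I/d+p\psi$ with $p=\Theta(\sqrt\varepsilon)$, the per-transcript likelihood ratio $\tilde P_A(\tau;\psi)/\bar P_A(\tau)$ is not uniformly bounded: for $E_A^\tau$ aligned with $\psi^{\otimes k}$ it is $\sim(pd)^k$, exponentially large in the regime of interest. Controlling $\sum_\tau[\cdots]/(\bar P_A(\tau)\bar P_B(\tau))$ uniformly over all product POVMs therefore requires more than the Weingarten identities you describe; one effectively needs a multiplicative posterior bound, which is exactly what Chiribella's decomposition supplies. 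I do not see how your $\chi^2$ computation closes without importing an ingredient of comparable strength.
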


\begin{figure}[t]
    \centering
    \begin{subfigure}[b]{0.3\textwidth}
    \begin{tikzpicture}
    \useasboundingbox (-3,0) rectangle (3,2);
	\node[](A) at (-2, 0) {Alice};
	\node[](B) at (2, 0) {Bob};
	\node[](R) at (0, 2) {Referee};
	
	\draw[->] (A) [out=45, in=225] to (R);
	\draw[->] (B) [out=135, in=315] to (R);
\end{tikzpicture}
\caption{SMP}
\end{subfigure}
\hfill
\begin{subfigure}[b]{0.3\textwidth}
    \begin{tikzpicture}
    \useasboundingbox (-3,0) rectangle (3,2);
	\node[](A) at (-2, 1) {Alice};
	\node[](B) at (2, 1) {Bob};
	
	\draw[->] (A) [out=0, in=180] to (B);
\end{tikzpicture}
\caption{One-way}
\end{subfigure}
\hfill
\begin{subfigure}[b]{0.3\textwidth}
    \begin{tikzpicture}
    \useasboundingbox (-3,0) rectangle (3,2);
	\node[](A) at (-2, 1) {Alice};
	\node[](B) at (2, 1) {Bob};
	
	\draw[transform canvas={yshift=2mm}][->] (A) to (B);
	\draw[transform canvas={yshift=0mm}][<-] (A) to (B);
	\draw[transform canvas={yshift=-6mm}][->](A) to node[above,midway] {$\vdots$} (B) ;
\end{tikzpicture}
\caption{Interactive}
\end{subfigure}
    \caption{We consider 3 different communication settings. (a) Simultaneous message passing (SMP), where Alice and Bob each sends a message to the Referee, who computes a function of the two messages. (b) One-way communication, where Alice sends a message to Bob, and Bob performs a measurement on his states conditioned on Alice's message. (c) Arbitrary rounds of local quantum operations and classical communication (LOCC), where the operation in each round is conditioned on all messages in previous rounds.}
    \label{fig:communication}
\end{figure}

Our algorithm with single-copy measurements is a variant of the one proposed in \cite{Elben2020cross}, where they also observed similar performance via numerical simulation. 

There are multiple lenses to view our result:

\begin{itemize}
    \item 
The first is as a practical answer to the aforementioned question of comparing (or calibrating) the outputs of two quantum systems that don't share a quantum channel. From this point of view, with the more experimentally relevant single-copy measurements,  the sample cost of $O(\sqrt{d})$ (when $\varepsilon$ is a constant) represents a saving from the naive approach of doing full tomography on each system's output state, which takes $\Omega(d^3)$ copies~\cite{Haah2017sample}, making the task experimentally feasible with $\sim 20-30$ qubits instead of just a few qubits using full tomography. On the other hand, the task is not scalable in general as  our lower bound shows that the task still requires an exponential number of copies (relative to the number of qubits) even with arbitrary multi-copy and adaptive measurements.

\item The second is to view the result in contrast to similar problems that are sample-efficient. 
Specifically, if instead of both Alice and Bob having unknown quantum states, the goal is to compare Alice's unknown state to the classical description of a known state, then this task reduces to direct fidelity estimation~\cite{flammia2011direct,dasilva2011practical} and can be done using  $O(1/\varepsilon^2)$ copies, via for example the classical shadows algorithm~\cite{Huang2020predicting}.
In fact, as our task only involves learning a specific property of Alice and Bob's quantum systems, one might expect that it is analogous to \emph{shadow tomography}, which is known to be sample-efficient in many cases~\cite{Aaronson2018shadow,Aaronson2019gentle,Huang2020predicting,Badescu2021improved}. In contrast, our result shows that learning the properties of multiple quantum systems in a distributed setting is very different from learning a single system, and the classical communication constraint is a significant obstacle for sample-efficiency.

\item The third lens involves focusing on the result's sample complexity staying the same under different measurement and communication models. This is surprising because these different models are known to have very different power for other tasks.  With respect to the power of multi-copy measurements, the sample complexity of learning an unknown mixed state $\rho$ acting on  $\mathbb{C}^{d}$ is $\Omega(d^3)$ with single-copy measurements, and improves to  $O(d^2)$ with multi-copy measurements~\cite{ODonnell2016efficient,ODonnell2017efficient,Haah2017sample}. Moreover, exponential separations in sample complexity between single and multi-copy measurements have been shown in several single system quantum property testing tasks~\cite{Bubeck2020entanglement,aharonov2021quantum,Chen2021exponential}. 
In the distributed setting, our results provide a counterexample to this mounting evidence that coherent access to multiple copies provides an advantage in learning properties of quantum systems.
In contrast with our results that do not vary with the different communication models described in Fig.~\ref{fig:communication}, we remark that there are certain partial functions that achieve exponential separations in communication complexity \cite{KNTZ01}.

\item Finally, we can step back and consider the model of distributed quantum property testing, where the goal is to estimate some linear function $f(\rho, \sigma)$ of the states with Alice and Bob, where there is no quantum channel between them -- a setting motivated by hardware constraints as it is hard to perform cross-platform operations. With this view, our paper resolves the sample complexity for estimating $f(\rho, \sigma)= \Tr(\rho \sigma)$.   It is an intriguing open question to investigate whether the features of our results -- both the sample complexity and the fact that the complexity doesn't change for a variety of classical communication protocols and access to multi-copy measurements -- come from the special structure of inner product, or are general and apply to other distributed quantum estimation tasks. 
\end{itemize}

Recently a comprehensive framework was established that considered different models of how an experimentalist can estimate properties of a physical system~\cite{aharonov2021quantum}. Looking into the future we will have many different physical platforms for quantum computing, as well as limited quantum communication networks. Therefore it is well motivated to generalize this framework to the distributed setting and study the power of different models, where besides the classical communication considered in this work we can also allow limited amounts of quantum communication across different systems. These efforts will be relevant in both the near-term era for quantum device characterization~\cite{Eisert2020,Kliesch2021theory,Carrasco2021Theoretical,Greganti2021cross}, and the fault-tolerant era for distributed quantum information processing~\cite{Wehner2018quantum}.

\section{Overview}
Our task, abbreviated as $\ipe$, is formally defined as follows.

\begin{definition}[Inner product estimation, $\ipe$]
Alice is given $k$ copies of an unknown state $\rho$ acting on $\mathbb{C}^{d}$ and Bob is given $k$ copies of an unknown state $\sigma$ acting on $\mathbb{C}^{d}$. Their goal is to estimate $\Tr(\rho\sigma)$ up to additive error $\varepsilon\in(0,1)$ with success probability at least $2/3$, using local quantum operations and classical communication (LOCC).
\end{definition}

We will often work with the following decision problem. 
\begin{definition}[Decisional inner product estimation, DIPE]
\label{def:DIPE}
Alice and Bob are each given $k$ copies of a pure state in $\mathbb{C}^d$. They are promised that one of the following two cases hold:
\begin{enumerate}
    \item Alice and Bob both have $\ket{\phi}^{\otimes k}$, for a uniformly random state $\ket{\phi}\sim\mathbb{C}^d$.
    \item Alice has $\ket{\phi}^{\otimes k}$ and Bob has $\ket{\psi}^{\otimes k}$, where $\ket{\phi}$ and $\ket{\psi}$ are independent uniformly random states in $\mathbb{C}^d$.
\end{enumerate}
Their goal is to decide which case they are in with success probability at least $2/3$, using an interactive protocol that involves local quantum operations and classical communication.
\end{definition}

DIPE is a special instance of $\ipe$. To see this, notice that in case 1, the inner product between Alice and Bob's state is 1, while in case 2 the inner product is
\begin{equation}
    \E_{\phi,\psi\sim\mathbb{C}^d}\Tr(\phi\psi)=\E_{\phi,\psi\sim\mathbb{C}^d}\left|\braket{\phi}{\psi}\right|^2=\frac{1}{d}
\end{equation}
on average, which is exponentially small. Here we use $\phi$ to represent the density matrix $\ketbra{\phi}$. Therefore, if Alice and Bob can solve $\ipe$ with accuracy say $\varepsilon=0.1$, then they can also solve DIPE with high probability. Our main result thus implies that DIPE can be solved with single-copy measurements using $k=O(\sqrt{d})$ copies.

The reason for introducing DIPE is that it is the main problem for showing our lower bounds. We prove a $\Omega(\sqrt{d})$ lower bound for DIPE against arbitrary interactive protocols with multi-copy measurements, even though the input states in DIPE are only limited to pure states. Based on this lower bound, we then show a $\Omega(\sqrt{d}/\varepsilon)$ lower bound for estimating the inner product between pure states via a reduction to DIPE, which implies the lower bound in our main result.

\subsection{Algorithms in the symmetric subspace}
In the following we show how to develop algorithms with multi-copy measurements to solve DIPE, and develop tools towards proving lower bounds. 

Our starting point is to notice that locally, Alice's state has the same distribution between the two cases, which is $\ket{\psi}^{\otimes k}$ for a uniformly random state $\ket{\psi}\sim\mathbb{C}^d$, and the same for Bob. This implies that some kind of communication is necessary, as Alice and Bob cannot distinguish between the two cases by themselves. Furthermore, notice that states of the form $\ket{\psi}^{\otimes k}$ have a special structure, due to the invariance under permutation of different subsystems. We will heavily exploit this structure in both algorithms and lower bounds, which is captured by the symmetric subspace.

\begin{definition}[Symmetric subspace]
\label{def:symmetricsubspace}
The symmetric subspace of $(\mathbb{C}^d)^{\otimes k}$ is defined by
\begin{equation}
    \ssym{d}{k}=\left\{\ket{\omega}\in(\mathbb{C}^d)^{\otimes k} : P_d(\pi)\ket\omega = \ket\omega,\forall \pi\in S_k\right\}.
\end{equation}
Here $P_d(\pi)$ is a unitary operator given by $P_d(\pi)= \sum_{i_1,\ldots,i_k=0}^{d-1} \ket{i_{\pi^{-1}(1)}, \ldots, i_{\pi^{-1}(k)}}\bra{i_1,\ldots,i_k}$, and $\pi$ is a permutation on $k$ items. An equivalent definition of the symmetric subspace is
\begin{equation}
    \ssym{d}{k}=\vspan\left\{\ket{\psi}^{\otimes k}: \ket{\psi}\in\mathbb{C}^d\right\}.
\end{equation}
Here $\vspan$ denotes linear combination with complex coefficients. Let $T$ be the type function that acts on vectors $i=(i_1,\dots,i_k)\in\{0,\dots,d-1\}^k$ as $T(i)=(\ell_0,\dots,\ell_{d-1})$, where $\ell_j$ is the number of times that $j$ appears in $(i_1,\dots,i_k)$. Another equivalent definition of the symmetric subspace is
\begin{equation}
    \ssym{d}{k}=\vspan\left\{\sum_{i:T(i)=\ell}\ket{i_1,\dots,i_k}:\ell=(\ell_0,\dots,\ell_{d-1}),\ell_j\geq 0,\sum_{j=0}^{d-1}\ell_j=k\right\}.
\end{equation}
\end{definition}

The equivalence of the three definitions is non-trivial. We refer to \cite{harrow2013church} for the proof of this fact as well as other facts about the symmetric subspace that we use.

From the definition, it is clear that no matter which case they are in, Alice and Bob's state is always inside the symmetric subspace. Therefore they only need to construct measurement operators that are supported on $\ssym{d}{k}$ as well. To characterize such measurements, it is helpful to consider the orthogonal projector on to $\ssym{d}{k}$.

\begin{lemma}\label{lemma:symprojector}
Let $\psym{d}{k}$ be the orthogonal projector on to $\ssym{d}{k}$, that is, the positive Hermitian operator with eigenspace $\ssym{d}{k}$ which satisfies $\psym{d}{k}\cdot\psym{d}{k} =\psym{d}{k}$. This operator has the following equivalent forms
\begin{equation}
    \psym{d}{k}=\binom{d+k-1}{k}\E_{\phi\sim\mathbb{C}^d}\ketbra{\phi}^{\otimes k}=\frac{1}{k!}\sum_{\pi\in S_k}P_d(\pi).
\end{equation}
In particular, the dimension of the symmetric subspace is
\begin{equation}
    \mathrm{dim}\ssym{d}{k}=\Tr(\psym{d}{k})=\binom{d+k-1}{k}.
\end{equation}
\end{lemma}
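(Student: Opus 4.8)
The plan is to establish the two claimed expressions for $\psym{d}{k}$ separately, and then read off the dimension from the trace of the projector. I would begin with the averaging formula $P := \frac{1}{k!}\sum_{\pi\in S_k}P_d(\pi)$, since it is the most self-contained. To show $P=\psym{d}{k}$ it suffices to verify that $P$ is an orthogonal projector whose range is exactly $\ssym{d}{k}$. Hermiticity follows because $P_d(\pi)^\dagger=P_d(\pi^{-1})$ and the map $\pi\mapsto\pi^{-1}$ permutes $S_k$. Idempotence follows from the group law $P_d(\pi)P_d(\tau)=P_d(\pi\tau)$: expanding $P^2$ and reindexing the inner sum shows $P^2=P$. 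Finally, for the range, any symmetric $\ket\omega$ satisfies $P\ket\omega=\ket\omega$ by definition, while for arbitrary $\ket v$ the vector $P\ket v$ is fixed by every $P_d(\sigma)$ (again by reindexing $P_d(\sigma)P=P$), hence lies in $\ssym{d}{k}$; so $\mathrm{range}(P)=\ssym{d}{k}$.

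For the dimension, I would use $\dim\ssym{d}{k}=\Tr(\psym{d}{k})=\Tr(P)=\frac{1}{k!}\sum_{\pi\in S_k}\Tr(P_d(\pi))$. Here $\Tr(P_d(\pi))=d^{c(\pi)}$, where $c(\pi)$ is the number of cycles of $\pi$, since $P_d(\pi)$ has a $1$ on the diagonal exactly for basis vectors that are constant on each cycle. The identity $\sum_{\pi\in S_k}d^{c(\pi)}=d(d+1)\cdots(d+k-1)=k!\binom{d+k-1}{k}$ (the rising-factorial generating function for the Stirling numbers of the first kind) then gives $\Tr(P)=\binom{d+k-1}{k}$. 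As a cross-check one can instead count directly using the third characterization in Definition~\ref{def:symmetricsubspace}: the type vectors $\ell$ index an orthogonal basis, and stars-and-bars counts them as $\binom{d+k-1}{k}$.

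The substantive step is the Haar-averaged expression. Let $Q:=\E_{\phi\sim\mathbb{C}^d}\ketbra{\phi}^{\otimes k}$. Each $\ket\phi^{\otimes k}$ is symmetric, so $Q$ is supported on $\ssym{d}{k}$. The key property is that $Q$ commutes with $U^{\otimes k}$ for every unitary $U$: by invariance of the Haar measure, $U^{\otimes k}Q(U^{\otimes k})^\dagger=\E_\phi\ketbra{U\phi}^{\otimes k}=\E_\phi\ketbra{\phi}^{\otimes k}=Q$. Since the symmetric subspace carries an irreducible representation of the unitary group under $U\mapsto U^{\otimes k}$ (a standard fact, see \cite{harrow2013church}), Schur's lemma forces $Q$ to act as a scalar on $\ssym{d}{k}$, i.e.\ $Q=c\,\psym{d}{k}$. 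Taking traces, $\Tr(Q)=\E_\phi\Tr(\ketbra{\phi}^{\otimes k})=1$ while $\Tr(\psym{d}{k})=\binom{d+k-1}{k}$, so $c=\binom{d+k-1}{k}^{-1}$ and hence $\binom{d+k-1}{k}Q=\psym{d}{k}$, as claimed.

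I expect the main obstacle to be this last step, specifically the justification that $Q$ is proportional to the projector. The clean argument rests on irreducibility of $\ssym{d}{k}$ under the diagonal $U(d)$-action, which is itself a nontrivial representation-theoretic input (equivalently one half of Schur--Weyl duality); since the excerpt already defers facts about the symmetric subspace to \cite{harrow2013church}, I would cite it here rather than reprove it. If one wished to avoid irreducibility entirely, the alternative is to compute the Haar $k$-th moment $\E_\phi\ketbra{\phi}^{\otimes k}$ directly and match it term by term against $\frac{1}{k!}\sum_{\pi\in S_k}P_d(\pi)$, but this is exactly the moment calculation that irreducibility lets us sidestep, so the representation-theoretic route is preferable.
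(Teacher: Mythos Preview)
Your proposal is correct and complete. The paper does not actually prove this lemma: it states the result and, in the sentence immediately preceding it, defers ``other facts about the symmetric subspace that we use'' to \cite{harrow2013church}. Your outline is precisely the standard argument one finds in that reference---group-averaging to identify $\frac{1}{k!}\sum_\pi P_d(\pi)$ as the projector, Schur's lemma on the irreducible $U(d)$-action to pin down the Haar integral, and either the cycle-index identity or stars-and-bars for the dimension---so there is nothing to contrast.
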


Suppose Alice measures her state $\ket{\phi}^{\otimes k}$ with a POVM $\{M_i\}$. The probability of seeing the outcome $i$ is given by
\begin{equation}
    \Tr(M_i\ketbra{\phi}^{\otimes k})=\Tr(M_i\psym{d}{k}\ketbra{\phi}^{\otimes k}\psym{d}{k})=\Tr(\psym{d}{k} M_i\psym{d}{k}\ketbra{\phi}^{\otimes k}).
\end{equation}
Therefore, we can assume without loss of generality that the POVM is supported on $\ssym{d}{k}$, that is, it satisfies $M_i=\psym{d}{k} M_i\psym{d}{k}$ and $\sum_i M_i=\psym{d}{k}$. We single out the following POVM as the ``standard POVM" in $\ssym{d}{k}$.

\begin{definition}[Standard POVM]
It is defined as the following continuous POVM:
\begin{equation}\label{eq:standardpovm}
    \left\{\binom{d+k-1}{k}\ketbra{u}^{\otimes k}\dd u\right\}.
\end{equation}
Here $\dd u$ denotes the uniform measure over pure states in $\mathbb{C}^d$.
\end{definition}

\begin{figure}[t]
\centering
\begin{subfigure}[b]{0.48\textwidth}
\fbox{\parbox[t][5.5cm][s]{\textwidth}{
Alice and Bob each has $k$ copies of an unknown state.        
\begin{enumerate}
            \item Alice measures all copies with the standard POVM, gets result $u$.
            \item Bob measures all copies with the standard POVM, gets result $v$.
            \item They compute a function of $u,v$.
        \end{enumerate}
}}
\caption{Multi-copy}
\end{subfigure}
\hfill
\begin{subfigure}[b]{0.48\textwidth}
\fbox{\parbox[t][5.5cm][s]{\textwidth}{
    Alice and Bob each has $k$ copies of an unknown state. They share a random unitary matrix $U$.
        \begin{enumerate}
            \item Alice rotates each copy of her state by $U$ and measures in the computational basis, gets $A=\{a_1,\dots,a_k\}$.
            \item Bob rotates each copy of his state by $U$ and measures in the computational basis, gets $B=\{b_1,\dots,b_k\}$.
            \item They compute a function of $U,A,B$.
        \end{enumerate}
}}
\caption{Single-copy}
\end{subfigure}
\caption{Algorithm templates for distributed quantum estimation tasks. Both algorithms only use simultaneous message passing for communication (Fig.~\ref{fig:communication}a). The multi-copy algorithm (a) is a deterministic strategy, while the single-copy algorithm (b) is randomized and requires shared randomness between Alice and Bob.}
\label{fig:algtemplate}
\end{figure}

Lemma~\ref{lemma:symprojector} ensures that Eq.~\eqref{eq:standardpovm} is indeed a POVM in $\ssym{d}{k}$. What is special about it and why do we call it ``standard"? As we now show, the motivation for considering this POVM in our tasks comes from the fact that it is the optimal POVM for performing tomography on an unknown pure state. More precisely, given $\ket{\phi}^{\otimes k}$ as $k$ copies of an unknown pure state, consider the following simple algorithm: Measure $\ket{\phi}^{\otimes k}$ with the standard POVM, and output the measurement outcome $\ket{u}$. The average fidelity is given by
\begin{equation}\label{eq:tomography}
    \begin{split}
        \E\left|\braket{\phi}{u}\right|^2&=\E_{u\sim\mathbb{C}^d}\left|\braket{\phi}{u}\right|^2\cdot\binom{d+k-1}{k}\left|\braket{\phi}{u}\right|^{2k}\\
        &=\binom{d+k-1}{k}\E_{u\sim\mathbb{C}^d}\Tr(\phi^{\otimes k+1}\cdot u^{\otimes k+1})\\
        &=\binom{d+k-1}{k}\binom{d+k}{k+1}^{-1}\Tr(\psym{d}{k+1} \phi^{\otimes k+1})\\
        &=\frac{k+1}{d+k}.
    \end{split}
\end{equation}
Here the third line follows from Lemma~\ref{lemma:symprojector}. It is shown in~\cite{harrow2013church} that the above average fidelity is the highest when $\phi$ is given uniformly at random. This implies that $k=O(d)$ copies is necessary and sufficient to learn an unknown pure state with 0.99 fidelity using multi-copy measurements.

\begin{lemma}[See e.g. \cite{harrow2013church}]
Let $\ket{\phi}^{\otimes k}$ be $k$ copies of an unknown pure state, where $\ket{\phi}\sim\mathbb{C}^d$ is uniformly random. Suppose we would like to do full tomography: learn $\ket{\phi}$ with as few copies as possible. Consider the general strategy of measuring $\ket{\phi}^{\otimes k}$ with a POVM and preparing a state $\ket{\phi'}$ based on the measurement outcome. Then the standard POVM is the optimal measurement for this task, in the sense that it achieves the optimal average fidelity
\begin{equation}
    \E\left|\braket{\phi}{\phi'}\right|^2
\end{equation}
where the expectation is over the randomness in $\ket{\phi}$ as well as the randomness in quantum measurement.
\end{lemma}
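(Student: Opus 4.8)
The plan is to reduce the average fidelity of an arbitrary tomography strategy to a single trace against the symmetric projector, and then to control it with one operator inequality. A strategy consists of a POVM $\{M_i\}$ on $(\mathbb{C}^d)^{\otimes k}$ together with a prepared state $\ket{\phi'_i}$ for each outcome $i$; preparing a mixed state never helps, by linearity of the fidelity in the prepared state, so pure preparations suffice. Writing the outcome probability as $\Tr(M_i\phi^{\otimes k})$ and the fidelity as $|\braket{\phi}{\phi'_i}|^2=\Tr(\phi\ketbra{\phi'_i})$ (using $\phi=\ketbra{\phi}$), the average fidelity becomes
\begin{equation}
F = \E_{\phi\sim\mathbb{C}^d}\sum_i \Tr(M_i\phi^{\otimes k})\,\Tr(\phi\ketbra{\phi'_i}) = \sum_i \Tr\!\left[(M_i\otimes\ketbra{\phi'_i})\,\E_{\phi\sim\mathbb{C}^d}\phi^{\otimes(k+1)}\right].
\end{equation}

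The key move I would set up carefully is to absorb the fidelity factor into an extra $(k{+}1)$-th tensor copy, so that the whole expression is governed by the symmetric subspace on $k+1$ systems. By Lemma~\ref{lemma:symprojector}, $\E_{\phi}\phi^{\otimes(k+1)} = \binom{d+k}{k+1}^{-1}\psym{d}{k+1}$, giving
\begin{equation}
F = \binom{d+k}{k+1}^{-1}\sum_i \Tr\!\left[(M_i\otimes\ketbra{\phi'_i})\,\psym{d}{k+1}\right].
\end{equation}

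The main step is the operator inequality $\psym{d}{k+1}\preceq\psym{d}{k}\otimes\mathbb{1}$, which holds because any fully symmetric state on $k+1$ systems is in particular invariant under permutations of the first $k$ systems, so $\ssym{d}{k+1}\subseteq\ssym{d}{k}\otimes\mathbb{C}^d$. Since each $M_i\otimes\ketbra{\phi'_i}\succeq 0$, monotonicity of the trace gives $\Tr[(M_i\otimes\ketbra{\phi'_i})\psym{d}{k+1}]\le\Tr[(M_i\otimes\ketbra{\phi'_i})(\psym{d}{k}\otimes\mathbb{1})]=\Tr[M_i\psym{d}{k}]$, where the last equality uses $\Tr\ketbra{\phi'_i}=1$. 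Summing over $i$ with $\sum_i M_i=\mathbb{1}$ collapses the dependence on the preparation entirely: $\sum_i\Tr[M_i\psym{d}{k}]=\Tr\psym{d}{k}=\binom{d+k-1}{k}$. Hence $F\le\binom{d+k}{k+1}^{-1}\binom{d+k-1}{k}=\frac{k+1}{d+k}$, which is exactly the value achieved by the standard POVM in Eq.~\eqref{eq:tomography}, so optimality follows.

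I expect the only real obstacle to be conceptual rather than computational: recognizing that the fidelity term should be folded into a $(k{+}1)$-th copy, so that the bound reduces to the projector containment $\psym{d}{k+1}\preceq\psym{d}{k}\otimes\mathbb{1}$. Everything after that is bookkeeping with binomial coefficients, and notably the argument needs neither covariance nor representation theory, nor even the reduction to symmetric-supported POVMs. If one additionally wants to characterize the optimizers, one would analyze when the inequality is tight, i.e.\ when a product $\ket{\omega}\otimes\ket{\chi}$ with $\ket{\omega}\in\ssym{d}{k}$ lies in $\ssym{d}{k+1}$; this forces $\ket{\omega}=\ket{\chi}^{\otimes k}$ and recovers precisely the standard POVM.
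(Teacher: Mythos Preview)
The paper does not give its own proof of this lemma; it simply cites \cite{harrow2013church}. Your argument is correct and is essentially the standard proof found there: absorb the fidelity factor into a $(k{+}1)$-th tensor copy, use Lemma~\ref{lemma:symprojector} to replace $\E_\phi\phi^{\otimes(k+1)}$ by the normalized symmetric projector, and then bound via the projector inclusion $\psym{d}{k+1}\preceq\psym{d}{k}\otimes\mathbb{1}$. The binomial-coefficient bookkeeping and the comparison with Eq.~\eqref{eq:tomography} are correct. Your side remark on characterizing optimizers is also right, and the clean way to see it is that if $\ket{\omega}\otimes\ket{\chi}\in\ssym{d}{k+1}$ then every single-system marginal equals the pure state $\ketbra{\chi}$, which forces $\ket{\omega}$ to be the product $\ket{\chi}^{\otimes k}$.
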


Given the optimality of the standard POVM in tomography, perhaps it is useful for inner product estimation as well? Here the main challenges are two-fold. First, we have two unknown quantum systems instead of one, so we need to design a distributed and possibly interactive algorithm. Second, we want to use much less than $O(d)$ copies, as otherwise we can just perform tomography.

Here we propose an algorithm template for solving the distributed inner product estimation problem, which can also be applied to other distributed estimation problems. The idea (described in Fig.~\ref{fig:algtemplate}a) is that Alice and Bob each performs the standard POVM, and then they compute a function of their measurement outcomes. Note that this algorithm can also be implemented in the SMP communication model, as Alice and Bob can send their results to a Referee, who then computes the function. Despite its simple and non-adaptive nature, we show that $k=O(\sqrt{d})$ copies suffices to solve DIPE in this model.
\begin{theorem}
There exists an algorithm (using the template in Fig.~\ref{fig:algtemplate}a) that solves DIPE using $k=O(\sqrt{d})$ copies, with success probability at least $2/3$.
\end{theorem}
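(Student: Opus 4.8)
The plan is to analyze the multi-copy template of Fig.~\ref{fig:algtemplate}a with the simplest possible post-processing: from Alice's outcome $\ketbra{u}$ and Bob's outcome $\ketbra{v}$ form the single scalar statistic $X=\abs{\braket{u}{v}}^2=\Tr(\ketbra{u}\ketbra{v})$, and have the referee declare Case~1 iff a fixed affine rescaling of $X$ exceeds a threshold. Concretely I would work with the rescaled statistic
\[
Y=\Tr(\hat\phi\,\hat\psi),\qquad \hat\phi=\frac{d+k}{k}\ketbra{u}-\frac1k\mathbb{1},\quad \hat\psi=\frac{d+k}{k}\ketbra{v}-\frac1k\mathbb{1},
\]
which is an affine function of $X$ and hence computable by the referee. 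The motivation is that $\hat\phi$ is an \emph{unbiased} estimator of Alice's true state, being the natural inversion of the depolarizing channel induced by the standard POVM, so $Y$ is an unbiased estimator of the inner product.

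First I would pin down the first moment of a single outcome. By unitary invariance, $M_\phi:=\E_{u\mid\phi}[\ketbra{u}]$ commutes with every unitary fixing $\ket\phi$, hence $M_\phi=\alpha\ketbra{\phi}+\beta(\mathbb{1}-\ketbra{\phi})$. The fidelity computation in Eq.~\eqref{eq:tomography} gives $\alpha=\Tr(\ketbra{\phi}M_\phi)=\frac{k+1}{d+k}$, and the constraint $\Tr M_\phi=1$ forces $\beta=\frac1{d+k}$, so $M_\phi=\frac{k}{d+k}\ketbra{\phi}+\frac1{d+k}\mathbb{1}$. A one-line check then shows $\E[\hat\phi\mid\phi]=\frac{d+k}{k}M_\phi-\frac1k\mathbb{1}=\ketbra{\phi}$, confirming unbiasedness. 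Since $u$ and $v$ are independent given the underlying states, $\E[Y]=\Tr(\E[\hat\phi]\,\E[\hat\psi])$, which equals $\Tr(\phi^2)=1$ in Case~1 and $\E_{\phi,\psi}\Tr(\phi\psi)=1/d$ in Case~2. Thus the two cases are separated in mean by $1-1/d=\Theta(1)$, independent of $k$.

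The crux, and the step I expect to be the main obstacle, is bounding $\Var[Y]$ in both cases, because the test is \emph{single-shot}: it performs one measurement of all $k$ copies. Writing $Y=\frac{(d+k)^2}{k^2}X+\mathrm{const}$, this reduces to controlling $\Var[X]$, for which I need the second moment $S_\phi:=\E_{u\mid\phi}[\ketbra{u}^{\otimes2}]$. I would compute $S_\phi$ from the identity $\int\ketbra{u}^{\otimes m}\dd u=\binom{d+m-1}{m}^{-1}\psym{d}{m}$ (Lemma~\ref{lemma:symprojector}) applied with $m=k+2$, partial-tracing the last $k$ tensor factors against $\phi^{\otimes k}$; decomposing the outcome in the basis of operators on $\ssym{d}{2}$ commuting with the stabilizer of $\ket\phi$ (spanned by $\phi^{\otimes2}$, the symmetrization of $\phi\otimes\mathbb{1}$, and $\psym{d}{2}$) yields closed forms for the coefficients. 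From these, $\E[X^2]$ in each case is a trace of products of the relevant second moments (Case~1: $\Tr(S_\phi^2)$; Case~2: $\Tr(\bar S^2)$ with $\bar S=\binom{d+1}{2}^{-1}\psym{d}{2}$), and subtracting $\E[X]^2$ gives the variances. Keeping leading orders I expect $\Var[Y]=O(d^2/k^4)$ together with lower-order contributions of order $1/k$; the delicate part is the bookkeeping of these subleading terms and, in particular, verifying that no $\Theta(1)$ term survives in either case.

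Finally I would conclude by setting $k=C\sqrt d$. Then $\Var[Y]=O(1/C^4)+o(1)$, which tends to $0$ as $C\to\infty$ uniformly in $d$, so for a large enough absolute constant $C$ we get $\Var[Y]\le 1/12$ in both cases. Taking the acceptance threshold $\tau=1/2$ (safely between the two means for all large $d$) and applying Chebyshev's inequality bounds the error probability in each case by $4\Var[Y]\le 1/3$, giving success probability at least $2/3$ with $k=O(\sqrt d)$ copies, as claimed. The only remaining arithmetic subtleties are handling small $d$ and the exact placement of $\tau$, which are routine.
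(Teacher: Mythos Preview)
Your approach is correct and your estimator $Y$ is exactly the one the paper analyzes in Algorithm~\ref{alg:innerproductmulti} and Lemma~\ref{lemma:varianceofmulticopyestimation} for the \emph{general} $\ipe$ problem; specializing that variance bound to $f=1$ (Case~1) and averaging over Haar $\phi,\psi$ (Case~2) gives exactly what you need. One small correction: in Case~1 the term $df/k^2=d/k^2$ survives, so at $k=C\sqrt d$ you get $\Var[Y]=O(1/C^2)+o(1)$ rather than $O(1/C^4)+o(1)$; this still tends to zero with $C$ and your Chebyshev step goes through unchanged.

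The paper's own proof of this particular theorem takes a lighter, more elementary route that avoids computing the second-moment tensor $S_\phi$ altogether. For Case~2 it observes that $u$ and $v$ are marginally Haar-random and independent, so $\E|\braket{u}{v}|^2=1/d$ and a single application of Markov's inequality handles it without any variance calculation. For Case~1 it uses the decomposition $\ket{u}=\alpha e^{i\theta}\ket{\phi}+\sqrt{1-\alpha^2}\ket{u'}$ (Lemma~\ref{lemma:decomposition}) together with the already-computed concentration of $|\braket{\phi}{u}|^2$ around $k/d$ to lower-bound $|\braket{u}{v}|$ directly. The trade-off is clear: the paper's argument is shorter and needs only first moments plus a geometric decomposition, while your second-moment route is heavier but immediately yields the unbiased estimator and variance bound for arbitrary $f$, which is precisely what the paper needs later for Theorem~\ref{thm:multicopyalg}.
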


In this algorithm, after receiving Alice and Bob's measurement results $u$ and $v$, the Referee simply computes $\left|\braket{u}{v}\right|^2$, and then decides case 1 or case 2 based on its value. To see how this works, note that Eq.~\eqref{eq:tomography} implies that the overlap between Alice's state $\ket{\phi}$ and her measurement outcome $u$ is roughly $\left|\braket{\phi}{u}\right|\approx \sqrt{k/d}$. Using this observation, in case 1 where Alice and Bob have the same state, we show that their measurement results roughly have overlap
\begin{equation}
    \left|\braket{u}{v}\right|\approx\frac{k}{d}.
\end{equation}
On the other hand, in case 2 where Alice and Bob have independent random states, their measurement results are uncorrelated, and we show that $\left|\braket{u}{v}\right|\approx\frac{1}{\sqrt{d}}$. Therefore $k=O(\sqrt{d})$ is sufficient to distinguish between the two cases. See Section~\ref{sec:multicopy} for a detailed analysis.

Next, we show that this algorithm can be generalized to solve $\ipe$ for pure states.

\begin{theorem}
Suppose Alice has $\ket{\phi}^{\otimes k}$ and Bob has $\ket{\psi}^{\otimes k}$, for arbitrary unknown pure states $\ket{\phi},\ket{\psi}\in\mathbb{C}^d$. There exists an algorithm (using the template in Fig.~\ref{fig:algtemplate}a) that estimates $\Tr(\phi\psi)=\left|\braket{\phi}{\psi}\right|^2$ within additive error $\varepsilon\in(0,1)$ using $k=O(\max\{1/\varepsilon^2,\sqrt{d}/\varepsilon\})$ copies, with success probability at least $2/3$.
\end{theorem}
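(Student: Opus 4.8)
The plan is to instantiate the template of Fig.~\ref{fig:algtemplate}a directly: Alice applies the standard POVM to her full register $\ket{\phi}^{\otimes k}$, obtaining an outcome $\ket{u}$, Bob likewise obtains $\ket{v}$, and the Referee outputs the rescaled statistic
\[
\hat{t} = \frac{(d+k)^2\,|\braket{u}{v}|^2 - (2k+d)}{k^2}.
\]
I would first show this estimator is unbiased, then control its variance, and finally choose $k$ so that $\Var(\hat t)=O(\varepsilon^2)$; Chebyshev's inequality (with a constant-factor amplification) then yields additive error $\varepsilon$ with probability $2/3$.

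First I would compute the first moment. Writing $X=|\braket{u}{v}|^2=\Tr(\ketbra{u}\ketbra{v})$ and using independence of the two measurements, $\E[X]=\Tr(\rho_u\rho_v)$ where $\rho_u:=\E_u[\ketbra{u}]$ is the average measured state. By covariance of the standard POVM under the stabilizer of $\ket\phi$, $\rho_u$ must take the form $a\,\ketbra\phi+b\,(I-\ketbra\phi)$, and Eq.~\eqref{eq:tomography} already identifies $a=\E|\braket{\phi}{u}|^2=\tfrac{k+1}{d+k}$; the trace condition then forces $\rho_u=\tfrac{k\,\ketbra\phi+I}{d+k}$, and similarly for $\rho_v$. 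Substituting gives $\E[X]=\tfrac{k^2\Tr(\phi\psi)+2k+d}{(d+k)^2}$, so that $\E[\hat t]=\Tr(\phi\psi)=t$ exactly.

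The crux is the variance. Here I would need the second moment $\E[X^2]=\Tr\!\big(\E_u[\ketbra{u}^{\otimes 2}]\,\E_v[\ketbra{v}^{\otimes 2}]\big)$, using $|\braket{u}{v}|^4=\Tr(\ketbra{u}^{\otimes 2}\ketbra{v}^{\otimes 2})$. To evaluate $\E_u[\ketbra{u}^{\otimes 2}]$, I would decompose $\psym{d}{2}$ into the three irreducible sectors of the stabilizer of $\ket\phi$ — the line $\mathbb{C}\ket{\phi}\ket{\phi}$, the ``one orthogonal excitation'' sector, and the symmetric square of $\phi^\perp$ — and note that by Schur's lemma $\E_u[\ketbra{u}^{\otimes 2}]$ is a scalar on each, with eigenvalues given by the first two moments of the Beta-distributed overlap $|\braket{\phi}{u}|^2$. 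Tracing against the analogous operator for $\ket\psi$ reduces everything to overlaps of these projectors, each a low-degree polynomial in $t$. Carrying this out yields $\Var(\hat t)=O\!\big(\tfrac{t(1-t)}{k}+\tfrac{dt}{k^2}+\tfrac{d^2}{k^4}\big)$.

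Finally I would optimize. Maximizing over $t\in[0,1]$ gives a worst-case bound $\Var(\hat t)=O(\tfrac1k+\tfrac{d}{k^2}+\tfrac{d^2}{k^4})$; forcing each term below $\varepsilon^2$ requires $k=\Omega(1/\varepsilon^2)$, $k=\Omega(\sqrt d/\varepsilon)$, and $k=\Omega(\sqrt d/\sqrt\varepsilon)$ respectively, the last of which is always dominated by the other two, so $k=O(\max\{1/\varepsilon^2,\sqrt d/\varepsilon\})$ suffices. The main obstacle is the variance computation, and in particular the $\tfrac{t(1-t)}{k}$ contribution: it is the ``shot noise'' from comparing two independently measured states, it is exactly what produces the $1/\varepsilon^2$ floor (matching the trivial SWAP-test lower bound), and it is easy to lose in a careless expansion where the leading $t^2$ terms cancel. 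Tracking these near-cancellations in the symmetric-subspace moments is where the real work lies.
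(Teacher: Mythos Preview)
Your proposal is correct and arrives at the same estimator and the same sample complexity as the paper, but the route you take to the moments is genuinely different. The paper computes both $\E|\braket{u}{v}|^2$ and $\E|\braket{u}{v}|^4$ by writing $\ket{u}=\alpha e^{i\theta}\ket\phi+\sqrt{1-\alpha^2}\ket{\phi'}$ (Lemma~\ref{lemma:decomposition}), expanding $\braket{u}{v}$ into four terms, averaging out the random phases via a dedicated algebraic identity (Lemma~\ref{lem:complexlem}), and then evaluating roughly a dozen individual expectations using Lemma~\ref{lemma:symperpprojector}. Your approach instead exploits the $U(1)\times U(d-1)$ stabilizer of $\ket\phi$ and Schur's lemma to identify $\E_u[\ketbra{u}]$ and $\E_u[\ketbra{u}^{\otimes 2}]$ as block scalars on the three irreducible sectors of $\psym{d}{2}$, reducing the second moment to a trace of two such block-diagonal operators. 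This is more conceptual and avoids the phase-expansion bookkeeping; the paper's approach is more elementary but heavier. One small caveat: the paper's explicit computation (Appendix~\ref{app:varianceofmulticopyestimation}) gives $\Var(\hat t)\le \tfrac{4f-2f^2}{k}+\tfrac{2df+f^2+4}{k^2}+\tfrac{4d+4}{k^3}+\tfrac{d^2+2d}{k^4}$, which contains $t$-independent pieces $O(1/k^2)$ and $O(d/k^3)$ that your stated bound $O(\tfrac{t(1-t)}{k}+\tfrac{dt}{k^2}+\tfrac{d^2}{k^4})$ omits; these are present in any honest evaluation of the block overlaps and you should expect to see them when you actually carry out the trace, but they are dominated in the worst case over $t$ and do not affect the final $k=O(\max\{1/\varepsilon^2,\sqrt d/\varepsilon\})$.
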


Here as well, the Referee computes a simple function of $\left|\braket{u}{v}\right|^2$ as the estimator for $\left|\braket{\phi}{\psi}\right|^2$. The analysis of this algorithm involves heavy calculations and is presented in Section~\ref{sec:multicopy}.

Note that the above algorithms require that the input states are i.i.d. copies of a pure state. When the input is $\rho^{\otimes k}$ for some mixed state $\rho$, the above algorithms do not apply as $\rho^{\otimes k}$ is in general not supported on $\ssym{d}{k}$. We do not consider multi-copy measurements on mixed states in this paper because single-copy measurements already suffice (Theorem~\ref{thm:overviewsinglecopy}). However, we believe that similar sample complexity can be achieved if we replace the standard POVM (which only works for pure states) with the POVM used in optimal mixed state tomography~\cite{ODonnell2016efficient,ODonnell2017efficient,Haah2017sample}.

\subsection{Lower bounds against multi-copy measurements}

Following the algorithms presented above which only uses SMP communication, a natural question is can Alice and Bob solve the inner product estimation problem using fewer copies with an interactive algorithm? Surprisingly, we show that interaction and adaptive strategies cannot improve the sample complexity by more than a constant factor.

We start by analyzing the following simple algorithm for DIPE which uses one-way communication (Fig.~\ref{fig:communication}b):
\begin{enumerate}
    \item Alice measures all copies of her state with the standard POVM, gets result $u$, and sends to Bob.
    \item Conditioned on $u$, Bob performs a two-outcome measurement $\{M_u,\psym{d}{k}-M_u\}$ on all copies of his state, and decides case 2 if he sees $M_u$, and case 1 otherwise.
\end{enumerate}

This algorithm is a natural generalization of the algorithm template shown in Fig.~\ref{fig:algtemplate}a. Here the main difference is adaptiveness: after seeing Alice's measurement outcome $u$, Bob can choose his measurement based on $u$, and therefore for different $u$ he can perform different measurements. This model is more powerful than the algorithm in Fig.~\ref{fig:algtemplate}a, as Bob can always perform the standard measurement and simulate the SMP communication by himself. Therefore the main question is what is Bob's optimal measurement to distinguish between the two cases, and what performance it achieves.

In the following we show how to exactly identify Bob's optimal measurement strategy, and then show that it still requires $\Omega(\sqrt{d})$ copies to solve DIPE. To start with, we consider the distinguishing task that Bob needs to solve after receiving Alice's message. In case 2 when they have independent random states, Bob's state is always the maximally mixed state (in $\ssym{d}{k}$) 
\begin{equation}
    \mmix:=\frac{1}{\binom{d+k-1}{k}}\psym{d}{k}.
\end{equation}
In case 1 when they have same random states, Bob's state is ``updated" after seeing the outcome $u$ as
\begin{equation}
    \rho_u:=\binom{d+k-1}{k}\E_{\phi\sim\mathbb{C}^d}\ketbra{\phi}^{\otimes k}\left|\braket{u}{\phi}\right|^{2k}.
\end{equation}
One way to see this is by Bayes' theorem $\Pr[\phi|u]=\Pr[u|\phi]=\binom{d+k-1}{k}\left|\braket{u}{\phi}\right|^{2k}$ as both $u$ and $\phi$ are marginally uniformly distributed. Then, the optimal distinguishing measurement is related to the trace distance,
\begin{equation}\label{eq:tracedistance}
    \max_{0\leq M_u\leq \psym{d}{k}}\Tr(M_u(\mmix-\rho_u))=\frac{1}{2}\|\rho_u-\mmix\|_1.
\end{equation}

What is the optimal measurement to distinguish $\mmix$ and $\rho_u$? Intuitively, Bob's knowledge is that if his state is in case 1, then from the expression $\Pr[\phi|u]\propto\left|\braket{u}{\phi}\right|^{2k}$ it will be slightly biased towards the vector $\ket{u}$, while in case 2 his state is uniform. Therefore, a projector onto a subspace without $\ket{u}$ would have larger overlap with $\mmix$ and smaller overlap with $\rho_u$. Motivated by this observation, we consider the following subspace of $\ssym{d}{k}$,
\begin{equation}
    W_u:=\mathrm{span}\left\{\ket{v}^{\otimes k}:\ket{v}\in\mathbb{C}^d, \braket{v}{u}=0\right\},
\end{equation}
and let $\Pi_{W_u}$ be the orthogonal projector onto $W_u$. We show via a sequence of arguments that $\Pi_{W_u}$ indeed achieves the maximum in LHS of Eq.~\eqref{eq:tracedistance}. Furthermore, using this we can prove that $\frac{1}{2}\|\rho_u-\mmix\|_1=O(k^2/d)$, which leads to the following lower bound.

\begin{theorem}\label{thm:overviewoneway}
Consider a one-way protocol where Alice performs the standard measurement and sends the result to Bob. Then $k=\Omega(\sqrt{d})$ copies are necessary for Alice and Bob to solve DIPE.
\end{theorem}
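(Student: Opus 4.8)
The plan is to reduce the theorem to an upper bound on the distinguishing advantage available to Bob, and then to compute that advantage exactly by identifying his optimal measurement. First I would observe that Alice's outcome $u$ has the same marginal distribution (the uniform measure $\dd u$) in both cases of DIPE, so her message carries no information by itself; conditioned on $u$, Bob must distinguish $\rho_u$ (case 1) from $\mmix$ (case 2). By the operational meaning of trace distance in Eq.~\eqref{eq:tracedistance}, the optimal success probability of any such one-way protocol is $\tfrac12 + \tfrac12\,\E_{u}\big[\tfrac12\|\rho_u-\mmix\|_1\big]$, so achieving success probability $2/3$ requires $\E_u\big[\tfrac12\|\rho_u-\mmix\|_1\big]\geq \tfrac13$. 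It therefore suffices to prove the pointwise bound $\tfrac12\|\rho_u-\mmix\|_1\leq k^2/d$ for every $u$.

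The central step is to show that the projector $\Pi_{W_u}$ realizes the maximum in Eq.~\eqref{eq:tracedistance}, i.e. that $W_u$ is exactly the positive eigenspace of $\mmix-\rho_u$. Both $\mmix$ and $\rho_u$ are invariant under the stabilizer of $\ket u$ (phases on $\ket u$, arbitrary rotations of $u^{\perp}$), so they are simultaneously block-diagonal with respect to the decomposition of $\ssym dk$ into the subspaces $V_m$ ($m=0,\dots,k$) spanned by symmetrizations of $\ket u^{\otimes(k-m)}$ with $m$ factors drawn from $u^{\perp}$; here $V_k=W_u$ and each $V_m$ is irreducible under $U(d-1)$, so $\mmix$ and $\rho_u$ act as scalars $1/\binom{d+k-1}{k}$ and $\lambda_m$ respectively on $V_m$. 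I would compute $\lambda_m$ by integrating against the standard-POVM posterior: writing $t=|\braket u\phi|^2$, the squared overlap $\|\Pi_{V_m}\ket\phi^{\otimes k}\|^2$ equals the binomial probability $\binom km t^{k-m}(1-t)^m$, and averaging over the $\mathrm{Beta}(1,d-1)$ law of $t$ gives a closed form whose ratio $\lambda_m/\lambda_{m+1}=(2k-m)/(k-m)>1$ shows $\lambda_m$ is strictly decreasing in $m$. Hence the positive eigenspace of $\mmix-\rho_u$ is a union of top blocks, and a direct comparison (using log-convexity of the factorial) gives $\lambda_k<1/\binom{d+k-1}{k}\le\lambda_{k-1}$ in the relevant regime $k=O(\sqrt d)$, pinning the positive eigenspace down to $V_k=W_u$.

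With optimality established, the trace distance equals $\Tr\!\big(\Pi_{W_u}(\mmix-\rho_u)\big)$ exactly. I would evaluate the two terms separately: dimension counting gives $\Tr(\Pi_{W_u}\mmix)=\dim W_u/\binom{d+k-1}{k}=(d-1)/(d+k-1)$, while $\Tr(\Pi_{W_u}\rho_u)=\binom{d+k-1}{k}\,\E_\phi(1-|\braket u\phi|^2)^k|\braket u\phi|^{2k}$ is another Beta integral, since $\|\Pi_{W_u}\ket\phi^{\otimes k}\|^2=(1-|\braket u\phi|^2)^k$. Collecting terms, the difference factors cleanly as $\tfrac{d-1}{d+k-1}\big(1-\prod_{j=0}^{k-1}\tfrac{d+j}{d+k+j}\big)$; bounding each factor of the product below by $1-k/d$ and applying Bernoulli's inequality yields $\tfrac12\|\rho_u-\mmix\|_1\leq k^2/d$. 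Combining with the first paragraph forces $k^2/d\geq 1/3$, i.e. $k=\Omega(\sqrt d)$.

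I expect the main obstacle to be the second paragraph: establishing that $\Pi_{W_u}$ is genuinely optimal rather than merely one admissible measurement. This is precisely what upgrades the trivial inequality $\tfrac12\|\rho_u-\mmix\|_1\geq \Tr(\Pi_{W_u}(\mmix-\rho_u))$ into the equality needed for an \emph{upper} bound on the trace distance, and it is the only place requiring the representation-theoretic block decomposition together with the delicate eigenvalue crossover at $m=k$. The remaining integrals are routine once the binomial and Beta-integral structure is in place.
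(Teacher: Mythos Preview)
Your proposal is correct and follows essentially the same approach as the paper: block-diagonalize $\rho_u$ with respect to the decomposition $\ssym{d}{k}=\bigoplus_m V_m$ (the paper's $W_u^{k-m}$), identify $\Pi_{W_u}=\Pi_u^0$ as the exact optimal measurement by locating the eigenvalue crossover, and evaluate $\Tr(\Pi_{W_u}(\mmix-\rho_u))$ via the Beta integral to obtain the $O(k^2/d)$ bound. The only cosmetic differences are that you invoke Schur's lemma and the monotonicity ratio $\lambda_m/\lambda_{m+1}=(2k-m)/(k-m)$ where the paper computes the block constants $\beta_t$ directly in Lemma~\ref{lemma:blockdiagonal} and checks each against $1/\binom{d+k-1}{k}$; both routes yield the same formula $\tfrac{d-1}{d+k-1}-\tfrac{(d+k-2)\cdots(d-1)}{(d+2k-1)\cdots(d+k)}$ and the same conclusion.
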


There are two important issues that we need to solve in order to obtain a lower bound against arbitrary interactive protocols. First, in the above argument we only analyzed Bob's optimal strategy if Alice performs the standard measurement. However this is not necessarily the overall optimal strategy, as it is possible that Alice could perform some other measurement so that Bob could do much better. Second, Alice and Bob could potentially use a multi-round interactive protocol to beat the $\Omega(\sqrt{d})$ lower bound.

In the following we sketch a proof of a $\Omega(\sqrt{d})$ lower bound for DIPE against arbitrary interactive protocols. First consider the general one-way protocol: Alice performs a POVM $\{M_i\}$ on her copies and sends the result $i$ to Bob. After receiving $i$, Bob performs a two-outcome POVM on his copies to decide which case they are in. Same as before, in case 2 Bob's state always equals the maximally mixed state $\sigma_{\mathrm{m}}$. On the other hand, in case 1 Bob's state gets updated after seeing $i$, given by
\begin{equation}
\label{eq:Mi}
    \rho_i=\frac{\binom{d+k-1}{k}}{\Tr(M_i\psym{d}{k})}\cdot\E_{\phi\sim\mathbb{C}^d}\Tr(M_i \ketbra{\phi}^{\otimes k})\ketbra{\phi}^{\otimes k}.
\end{equation} 
We show that when $k=o(\sqrt{d})$, $\rho_i$ is indistinguishable from $\sigma_{\mathrm{m}}$ no matter what $M_i$ is, which implies a $\Omega(\sqrt{d})$ lower bound for DIPE against one-way protocols, and then generalize this argument against arbitrary interactive protocols.

The indistinguishability between $\rho_i$ and $\sigma_{\mathrm{m}}$ follows from a connection to optimal measure-and-prepare channels. First, we change our view on $M_i$, from a measurement to a state (appropriately normalized), and then consider it as an input to the quantum channel
\begin{equation}
    \mathrm{MP}(\tau):=\binom{d+k-1}{k}\E_{\phi\sim\mathbb{C}^d}\Tr(\tau\cdot \ketbra{\phi}^{\otimes k})\ketbra{\phi}^{\otimes k}.
\end{equation}
From Eq.~\eqref{eq:Mi}, Bob's state $\rho_i$ is the output state of this channel on the input $\frac{\psym{d}{k}M_i\psym{d}{k}}{\Tr(M_i\psym{d}{k})}$. The channel $\mathrm{MP}$ is a well known measure-and-prepare channel, which was related to optimal cloning by Chiribella~\cite{Chiribella2011on}. Using this connection, we argue that when $k=o(\sqrt{d})$, the output of $\mathrm{MP}$ on arbitrary input is indistinguishable from $\sigma_{\mathrm{m}}$, which is shown via an upper bound of the max-relative entropy between $\rho_i$ and $\mmix$.

\begin{definition}[Max-relative entropy]
The max-relative entropy of two mixed states $\rho$ and $\sigma$ is defined as
\begin{equation}
    D_{\max}(\rho\|\sigma):=\inf\left\{\lambda\in\mathbb{R}:\rho\leq e^{\lambda}\sigma\right\}.
\end{equation}
Here $A\geq B$ means that $A-B$ is positive semi-definite.
\end{definition}

Closeness in max-relative entropy can be understood as a form of multiplicative closeness, which is very strong as it implies closeness in many other distances including the trace distance. We show that $D_{\max}(\mmix\|\rho_i)\leq k^2/d$ always holds regardless of Alice's strategy. This implies a stronger version of Theorem~\ref{thm:overviewoneway} that holds against arbitrary one-way protocols.

Interestingly, we then show that the above closeness in max-relative entropy is sufficient to prove lower bound against arbitrary interactive protocols. The argument does not follow if we only have, for example, closeness in trace distance.

\begin{theorem}
$k=\Omega(\sqrt{d})$ copies is necessary for Alice and Bob to solve DIPE, even when they are allowed arbitrary interactive protocols (or arbitrary LOCC operations).
\end{theorem}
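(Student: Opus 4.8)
The plan is to reduce the fully interactive problem to the single max-relative entropy estimate already available for one-way protocols, exploiting that in case~2 the joint Alice--Bob state is a product. The key is a factorization of the transcript distribution on product inputs, which collapses an arbitrary number of rounds into one application of the bound $D_{\max}(\mmix\|\rho_i)\le k^2/d$.

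First I would fix an arbitrary interactive LOCC protocol and analyze its complete classical transcript $t$. On a fixed product input $\phi^{\otimes k}\otimes\psi^{\otimes k}$, a local operation of one party never changes the reduced state of the other, so an induction over rounds shows that the probability of generating $t$ factorizes as $\Tr[F_t\,\phi^{\otimes k}]\cdot\Tr[G_t\,\psi^{\otimes k}]$. Here $F_t=A_1^\dagger\cdots A_L^\dagger A_L\cdots A_1$ is the positive operator obtained by composing the Kraus operators $A_1,\dots,A_L$ that Alice applies in the rounds where she speaks (each determined by the preceding transcript), and $G_t$ is the analogous operator for Bob; the measurement choices may depend adaptively on the whole transcript, but once $t$ is fixed these operators are well defined and supported on the symmetric subspace. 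Each $F_t$ is positive, so the earlier analysis — which bounds $D_{\max}(\mmix\|\rho_i)$ for an arbitrary positive operator $M_i$ regardless of Alice's strategy — applies with $M_i=F_t$.

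Next I would write the two transcript distributions in terms of these operators. Using $\E_{\phi}\phi^{\otimes k}=\mmix$, in case~1 we get $P_1(t)=\E_{\phi}\Tr[F_t\phi^{\otimes k}]\Tr[G_t\phi^{\otimes k}]$, while independence in case~2 gives $P_2(t)=\Tr[F_t\mmix]\,\Tr[G_t\mmix]$. The unnormalized conditional state of Bob in case~1 is $\E_{\phi}\Tr[F_t\phi^{\otimes k}]\,\phi^{\otimes k}=\Tr[F_t\mmix]\,\rho_{F_t}$, with $\rho_{F_t}$ exactly the updated state of Eq.~\eqref{eq:Mi} for $M_i=F_t$. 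The bound $D_{\max}(\mmix\|\rho_{F_t})\le k^2/d$ is equivalent to the operator inequality $\rho_{F_t}\ge e^{-k^2/d}\,\mmix$, and this is precisely where max-relative entropy rather than trace distance is needed: I want to insert this inequality inside the positive linear functional $\Tr[G_t\,\cdot\,]$, which a trace-distance estimate would not support. Doing so yields the pointwise multiplicative bound
\[
  P_1(t)\ \ge\ e^{-k^2/d}\,\Tr[F_t\mmix]\,\Tr[G_t\mmix]\ =\ e^{-k^2/d}\,P_2(t),
\]
i.e.\ $P_2(t)\le e^{k^2/d}P_1(t)$ for every transcript $t$.

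Finally, a pointwise multiplicative bound controls the total variation directly:
\[
  \|P_1-P_2\|_{\mathrm{TV}}=\sum_{t:\,P_2(t)>P_1(t)}\big(P_2(t)-P_1(t)\big)\ \le\ (e^{k^2/d}-1)\sum_t P_1(t)=e^{k^2/d}-1.
\]
Since solving DIPE with success probability $2/3$ requires a distinguishing advantage of $\Omega(1)$, hence $\|P_1-P_2\|_{\mathrm{TV}}=\Omega(1)$, we need $e^{k^2/d}-1=\Omega(1)$, i.e.\ $k^2/d=\Omega(1)$, giving $k=\Omega(\sqrt d)$. The main obstacle is establishing the factorization rigorously for adaptive, possibly continuous-outcome measurements such as the standard POVM (and checking that $F_t,G_t$ stay supported on the symmetric subspace), together with the realization that it reduces arbitrarily many interactive rounds to a \emph{single} use of the one-way operator bound; once the factorization and the operator inequality $\rho_{F_t}\ge e^{-k^2/d}\mmix$ are in hand the remaining manipulation is short.
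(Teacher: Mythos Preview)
Your proof is correct and follows essentially the same approach as the paper: both hinge on the operator inequality $\rho_M\ge e^{-k^2/d}\mmix$ from Lemma~\ref{lemma:dmaxcloseness} and insert it inside a positive linear functional to get the multiplicative bound $P_2\le e^{k^2/d}P_1$, yielding $e^{k^2/d}-1\ge 1/3$. The only cosmetic difference is that the paper relaxes LOCC to separable measurements $M=\sum_t A_t\otimes B_t$ (which gives a slightly stronger statement against all of SEP), whereas you obtain the same tensor decomposition $F_t\otimes G_t$ directly from the LOCC transcript factorization on product inputs; the two framings are equivalent for LOCC since the transcript factorization is precisely what makes LOCC measurements separable.
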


Finally, we construct a lower bound instance for estimating the inner product of arbitrary pure states with accuracy $\varepsilon$, which allows a reduction to DIPE. This proves the following general lower bound.

\begin{theorem}\label{thm:overviewlowerbound}
Suppose Alice has input $\ket{\phi}^{\otimes k}$ and Bob has input $\ket{\psi}^{\otimes k}$, for arbitrary unknown pure states $\ket{\phi},\ket{\psi}\in\mathbb{C}^{d}$. Then $k=\Omega(\sqrt{d}/\varepsilon)$ copies is necessary for them to estimate $\Tr(\phi\psi)$ up to additive error $\varepsilon\in(0,1)$ with success probability $2/3$, even when they are allowed arbitrary interactive protocols (or arbitrary LOCC operations).
\end{theorem}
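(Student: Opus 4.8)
The plan is to prove Theorem~\ref{thm:overviewlowerbound} by reducing the $\varepsilon$-accurate inner product estimation problem to DIPE, for which we already have a $\Omega(\sqrt{d})$ lower bound against arbitrary interactive protocols. The key idea is to embed a DIPE instance living in a smaller dimension $d'$ into the full $d$-dimensional problem, in such a way that solving inner product estimation to accuracy $\varepsilon$ forces the protocol to solve the embedded DIPE instance. Concretely, I would choose a dimension $d' = \Theta(\varepsilon^2 d)$ and consider input states that are supported on a fixed $d'$-dimensional subspace $V \subseteq \mathbb{C}^d$ (say the span of the first $d'$ computational basis vectors). On this subspace we run the two cases of DIPE: in case~1 Alice and Bob share a common Haar-random $\ket\phi \in V$, and in case~2 they hold independent Haar-random $\ket\phi, \ket\psi \in V$. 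Restricted to $V$, these are exactly the two DIPE cases in dimension $d'$.

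The second step is to rescale so that the inner products fall $\varepsilon$ apart rather than $1$ versus $1/d'$. In case~1 the inner product is $\Tr(\phi\psi) = 1$, while in case~2 it is on average $1/d' = \Theta(1/(\varepsilon^2 d))$, which is far smaller than $\varepsilon$. To make the gap exactly on the order of $\varepsilon$, I would interpolate the two states away from pure states living entirely in $V$: for instance, mix each of Alice's and Bob's pure states with a fixed reference direction so that the two promised values of $\Tr(\phi\psi)$ differ by $\Theta(\varepsilon)$. The cleanest implementation is to take states of the form $\ket{\chi} = \sqrt{1-\varepsilon}\,\ket{e} + \sqrt{\varepsilon}\,\ket{\phi}$, where $\ket{e}$ is a fixed unit vector orthogonal to $V$ and $\ket{\phi} \in V$ is the DIPE-distributed vector; then $\Tr(\chi_A \chi_B) = (1-\varepsilon) + \varepsilon |\braket{\phi}{\psi}|^2 + (\text{cross terms that vanish by orthogonality of } \ket e \text{ to } V)$. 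In case~1 this equals $(1-\varepsilon) + \varepsilon = 1$, and in case~2 it equals $(1-\varepsilon) + \varepsilon/d'$ on average, so the two values are separated by $\varepsilon(1 - 1/d') = \Theta(\varepsilon)$. Hence any estimator with additive error $\varepsilon/4$ distinguishes the two cases.

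The third step is to argue that a protocol solving inner product estimation to accuracy $\varepsilon$ with $k$ copies yields a DIPE protocol in dimension $d'$ with $k$ copies. Since the reference-state part of each input is a fixed, publicly known product $\ket e^{\otimes k}$, Alice and Bob can simulate: given DIPE inputs $\ket\phi^{\otimes k}, \ket\psi^{\otimes k}$ in $\mathbb{C}^{d'}$, they each locally prepare the enlarged states $\ket{\chi}^{\otimes k}$ by appending the known ancilla direction (this is a local, state-independent operation and introduces no quantum communication). Running the inner-product protocol on these enlarged states and thresholding the output at $1 - \varepsilon/2$ decides the DIPE instance. Therefore the $\Omega(\sqrt{d'})$ lower bound for DIPE transfers, giving $k = \Omega(\sqrt{d'}) = \Omega(\sqrt{\varepsilon^2 d}) = \Omega(\sqrt d\,\varepsilon)$. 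This only yields $\Omega(\varepsilon\sqrt d)$, which is weaker than the claimed $\Omega(\sqrt d/\varepsilon)$, so the naive interpolation loses a factor of $\varepsilon^2$ and must be corrected.

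The main obstacle, and the step I expect to require the most care, is recovering the correct $1/\varepsilon$ scaling rather than $\varepsilon$. The resolution is to choose the embedded dimension to be $d' = \Theta(d)$ (a constant fraction of $d$, not $\varepsilon^2 d$) and instead tune the interpolation weight to separate the two inner-product values by $\Theta(\varepsilon)$ while keeping the per-copy statistical distinguishability suppressed by the same $\varepsilon$ factor that the estimator must resolve. Intuitively, shrinking the signal by a factor $\varepsilon$ (through the mixing weight) means each copy carries a factor-$\varepsilon$ weaker imprint of the hidden DIPE instance, so the effective number of ``useful'' copies is $k\varepsilon$ rather than $k$; combined with the DIPE bound this forces $k\varepsilon = \Omega(\sqrt d)$, i.e. $k = \Omega(\sqrt d/\varepsilon)$. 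Making this precise requires tracking how the mixing weight rescales the relevant max-relative entropy bound $D_{\max}(\mmix \| \rho_i) \leq k^2/d$ from the DIPE analysis: the weighted construction should replace $k^2/d$ by $(k\varepsilon)^2/d$ (up to constants), so that indistinguishability holds whenever $k\varepsilon = o(\sqrt d)$. Verifying that the interactive lower bound machinery — in particular the max-relative entropy chaining across rounds — survives this reweighting is the technical crux, and is where I would focus the detailed argument.
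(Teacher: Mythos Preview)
Your high-level intuition in the final paragraph is exactly right: the mixing weight $\sqrt{\varepsilon}$ should dilute each copy so that only about $k\varepsilon$ copies carry information about the DIPE instance, forcing $k\varepsilon = \Omega(\sqrt d)$. But the concrete reduction you propose in step~3 is broken, and the fallback you sketch does not identify the mechanism that actually makes this work.

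The error is the claim that Alice can locally prepare $\ket{\chi}^{\otimes k}$ from $\ket{\phi}^{\otimes k}$. The map $\ket{\phi}\mapsto\ket{\chi}=\sqrt{1-\varepsilon}\,\ket e+\sqrt{\varepsilon}\,\ket\phi$ sends $\braket{\phi}{\phi'}$ to $(1-\varepsilon)+\varepsilon\braket{\phi}{\phi'}$, so $\braket{\chi_\phi}{\chi_{\phi'}}^{k}\neq\braket{\phi}{\phi'}^{k}$ and no state-independent isometry can implement $\ket\phi^{\otimes k}\mapsto\ket\chi^{\otimes k}$. Thus your reduction from DIPE (with $k$ copies) to inner product estimation (with $k$ copies) does not exist, and you cannot directly transfer the $\Omega(\sqrt d)$ bound this way. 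Your fallback of ``re-running the max-relative entropy bound with $k$ replaced by $k\varepsilon$'' is the right target, but you have not said what structural feature of the $\ket\chi$ states would cause the $k^2/d$ to become $(k\varepsilon)^2/d$.

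The paper's resolution is to add independent random phases: Alice holds $(\sqrt{1-\varepsilon}\,e^{i\theta}\ket 0+\sqrt{\varepsilon}\,\ket\phi)^{\otimes k}$ and Bob similarly with an independent phase $\theta'$. Averaging over $\theta$ kills all cross terms and turns the $k$-fold tensor power into the mixture $\sum_{t}\binom{k}{t}\varepsilon^{t}(1-\varepsilon)^{k-t}\ketbra{\phi_t}$, where $\ket{\phi_t}$ is the symmetrized state with $t$ registers holding $\ket\phi$ and $k-t$ holding $\ket 0$. The point is that $\ket{\phi_t}$ \emph{can} be prepared from $t$ copies of $\ket\phi$ by a $\phi$-independent unitary (since $\ket 0\perp\ket\phi$ makes the relevant map inner-product preserving). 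Since $t\sim B(k,\varepsilon)$ is at most $O(k\varepsilon)$ with high probability, Alice and Bob can simulate this ensemble starting from a DIPE instance with only $m=O(k\varepsilon)$ copies. The $\Omega(\sqrt d)$ DIPE bound then gives $k\varepsilon=\Omega(\sqrt d)$. The random phase is not cosmetic: it is precisely what converts the unpreparable coherent superposition $\ket\chi^{\otimes k}$ into a preparable binomial mixture, and is the missing ingredient in your proposal.
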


Recall that in the non-distributed setting, $k=O(1/\varepsilon^2)$ copies suffice to solve $\ipe$ using SWAP test. In Section~\ref{sec:swaptest} we discuss the SWAP test in more detail, and also show it is optimal by proving a $\Omega(1/\varepsilon^2)$ lower bound, even when the input states are pure. Clearly, this lower bound proven in the non-distributed setting also applies to our distributed setting. Combining the two lower bounds gives the lower bound in our main result.

\subsection{Single-copy measurement suffices}
We now turn to the weaker and more experimentally relevant setting of single-copy measurements. Here, Alice and Bob can only measure their states one copy at a time. At any given time step, when Alice is measuring a copy of her state, her choice of the measurement basis could depend on her previous measurement outcomes as well as the messages she received from Bob.

In the following we consider the algorithm template in Fig.~\ref{fig:algtemplate}b, which is in the weakest measurement and communication setting. Here Alice and Bob only communicate in the SMP model, and are allowed to have shared randomness so that they can implement the same random unitary matrix $U$. Moreover, their measurement strategy is non-adaptive, as they always measure in the same basis, independent of previous outcomes. We show that an algorithm in this setting already suffices to essentially achieve the  optimal sample complexity.

\begin{theorem}\label{thm:overviewsinglecopy}
Suppose Alice has $\rho^{\otimes k}$ and Bob has $\sigma^{\otimes k}$, for arbitrary unknown mixed states $\rho,\sigma$ acting on $\mathbb{C}^{d}$. There exists an algorithm (using the template in Fig.~\ref{fig:algtemplate}b) that estimates $\Tr(\rho\sigma)$ within additive error $\varepsilon\in(0,1)$ using $k=O(\max\{1/\varepsilon^2,\sqrt{d}/\varepsilon\})$ copies, with success probability at least $2/3$.
\end{theorem}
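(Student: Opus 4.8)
The plan is to analyze the natural shared-randomness estimator that counts collisions between Alice's and Bob's measurement outcomes; this is the single-copy analogue of the multi-copy overlap $\left|\braket{u}{v}\right|^2$. Fix a Haar-random unitary $U$ shared through the common randomness, and let $p_a=\bra{a}U\rho U^\dagger\ket{a}$ and $q_b=\bra{b}U\sigma U^\dagger\ket{b}$ be the computational-basis outcome distributions of the two parties after rotating by $U$. Given one outcome $a$ from Alice and one outcome $b$ from Bob (independent conditioned on $U$), I take the single-shot statistic $W=(d+1)\delta_{ab}-1$. First I would establish that it is unbiased. Conditioned on $U$, $\E[W\mid U]=(d+1)\sum_a p_aq_a-1=:f(U)$, so $\E[W]=(d+1)\,\E_U[\sum_a p_aq_a]-1$, and using only the second moment (the $2$-design property) of the Haar measure, $\E_U[U\rho U^\dagger\otimes U\sigma U^\dagger]=\alpha\, I+\beta\,\swapop$ with $\alpha=\frac{d-\Tr(\rho\sigma)}{d(d^2-1)}$ and $\beta=\frac{d\Tr(\rho\sigma)-1}{d(d^2-1)}$. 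Contracting with $\sum_a\ketbra{a}\otimes\ketbra{a}$ gives $\E_U[\sum_a p_aq_a]=d(\alpha+\beta)=\frac{1+\Tr(\rho\sigma)}{d+1}$, hence $\E[W]=\Tr(\rho\sigma)$ exactly, for arbitrary mixed $\rho,\sigma$. Since only the second moment is used, a random Clifford (an exact $3$-design) can replace the Haar unitary to keep the protocol efficient.

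The variance analysis drives the sample complexity, and here I would run the template $R$ times with independent shared unitaries and $m$ copies per party per run (total $k=Rm$, still a non-adaptive SMP strategy), forming $\hat X=\frac1R\sum_r\frac1{m^2}\sum_{i,j}\big[(d+1)\delta_{a_i^r b_j^r}-1\big]$ and boosting to success probability $2/3$ by Chebyshev once $\Var[\hat X]\le\varepsilon^2/3$. I would decompose $\Var[\hat X]=\frac1R\big(V_U+(\text{shot noise})\big)$, where $V_U:=\Var_U[f(U)]$ is the irreducible fluctuation of the per-unitary mean and the shot noise is the within-run sampling error at fixed $U$. A $V$-statistic covariance bookkeeping shows the shot noise scales as $O(d/m^2+1/m)$: the $O(d/m^2)$ term arises from the diagonal $\E[W^2\mid U]=(d^2-1)\sum_a p_aq_a+1=O(d)$, and the $O(1/m)$ term from pairs sharing one index.

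The hard part will be bounding $V_U=(d+1)^2\Var_U[\sum_a p_aq_a]$, which requires the fourth moment of the Haar measure, i.e.\ a Weingarten computation over $S_4$. I expect to show $V_U=O(1/d)$, with the worst case at pure states and only purity factors $\le1$ entering, so that $V_U$ is a floor that a single unitary cannot beat. Optimizing $k=Rm$ subject to $\frac1R\big(V_U+d/m^2+1/m\big)\le c\varepsilon^2$ then produces exactly the two regimes. When $\varepsilon\ge 1/\sqrt d$ one takes $R=1$ and $m=O(\max\{\sqrt d/\varepsilon,1/\varepsilon^2\})=O(\sqrt d/\varepsilon)$, which is admissible precisely because $V_U=O(1/d)\le\varepsilon^2$; when $\varepsilon<1/\sqrt d$ one instead balances $m/d$ against $d/m$ to get $m=\Theta(d)$ and $R=\Theta(1/(d\varepsilon^2))$, for a total $k=O(1/\varepsilon^2)$, the multiple unitaries being genuinely necessary to average the $1/d$ floor $V_U$ below $\varepsilon^2$. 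Combining the two cases yields $k=O(\max\{1/\varepsilon^2,\sqrt d/\varepsilon\})$.

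Two points would round out the write-up. First, the unbiasedness identity and the moment bounds are derived for general mixed $\rho,\sigma$, so I would verify that mixed-state purity factors $\Tr(\rho^2),\Tr(\sigma^2)\le1$ never worsen the scaling, reducing the worst case to pure inputs. Second, I would confirm that substituting a Clifford ensemble (or an approximate $4$-design) for the Haar measure changes the relevant fourth-moment quantities only by lower-order corrections that leave the $V_U=O(1/d)$ bound and hence the final complexity intact. The single most delicate step remains the explicit fourth-moment estimate for $V_U$, on which the entire $\sqrt d/\varepsilon$ versus $1/\varepsilon^2$ crossover rests.
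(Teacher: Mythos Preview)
Your proposal is correct and follows essentially the same approach as the paper: the same collision estimator $w=(d+1)\tilde g-1$, unbiasedness via the $2$-design identity $\E_U\sum_a p_aq_a=(1+f)/(d+1)$, the same law-of-total-variance split into $V_U$ and shot noise with the same scalings $O(1/d)$, $O(d/m^2)$, $O(1/m)$, and the same optimization over $(R,m)$ yielding the two regimes. The step you flag as most delicate---the fourth-moment bound $V_U=(d+1)^2\Var_U[\sum_a p_aq_a]=O(1/d)$---is exactly what the paper carries out explicitly (via the decomposition $\sum_{b_1,b_2}$ into $b_1=b_2$ and $b_1\neq b_2$ and Lemma~\ref{lemma:symperpprojector}), confirming your expected scaling; your remarks on substituting a Clifford or approximate $4$-design are extras the paper does not address.
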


A variant of our algorithm was analyzed numerically in \cite{Elben2020cross} which observed similar performance. It is interesting to point out that the function computed by the algorithm in the 3rd step (Fig.~\ref{fig:algtemplate}b) actually does not depend on $U$ -- the frequencies of the outcome bit strings are all we need. This is because of the rotation invariance of the inner product $\Tr(\rho\sigma)=\Tr(U\rho U^\dag\cdot U\sigma U^\dag)$.

The main idea of the algorithm, as well as the main spirit of the template in Fig.~\ref{fig:algtemplate}b, follows from the idea that quantum property estimation problems can often be reduced to classical property estimation problems after we apply a random projection on the unknown quantum state, which can be implemented by a random unitary rotation with computational basis measurement. This idea has led to powerful algorithms that achieve optimal performance with single-copy measurements~\cite{Huang2020predicting,Bubeck2020entanglement}. In our task, we show that the quantum inner product estimation problem reduces to a classical inner product estimation problem of two discrete distributions
\begin{equation}
    p_b:=\expval{U\rho U^\dag}{b},\,\,\,\,q_b:=\expval{U\sigma U^\dag}{b},
\end{equation}
where $b\in\{0,\dots,d-1\}$. We use a standard collision estimator that counts the number of pairwise collisions among the measurement outcomes to estimate this classical inner product $\sum_b p_b q_b$, and show that $k=O(\max\{1/\varepsilon^2,\sqrt{d}/\varepsilon\})$ measurement samples suffice.

Although this algorithm achieves similar performance as the algorithm with multi-copy measurements, they can be very different in some aspects. First, the single-copy algorithm (Fig.~\ref{fig:algtemplate}b) requires shared randomness between Alice and Bob, while the multi-copy algorithm (Fig.~\ref{fig:algtemplate}a) is a deterministic strategy. Second, the multi-copy algorithm can be more efficient if we have some prior knowledge such as an upper bound on the inner product (see Theorem~\ref{thm:multicopyalg} and Theorem~\ref{thm:singlecopyalg}). 

In addition, a natural question is can Alice and Bob still achieve the optimal sample complexity using single-copy measurements and simultaneous message passing, but without shared randomness. In Appendix~\ref{app:independentshadow} we analyze a natural algorithm in this setting where Alice and Bob perform independent classical shadow estimation, and show that this algorithm requires $O(d)$ samples. This gives evidence that shared randomness is necessary to achieve the optimal sample complexity (using single-copy measurements and SMP), and we leave the proof of this statement for future work.

\section{Discussion}
\label{sec:discussion}

Our results for distributed quantum inner product estimation can be applied to prove upper and lower bounds for estimating several other quantities. For example, our algorithm with single-copy measurements (Theorem~\ref{thm:overviewsinglecopy}) implies $O(\max\{1/\varepsilon^2,\sqrt{d}/\varepsilon\})$ sample complexity for estimating purity $\Tr(\rho^2)$ within additive error $\varepsilon$. Combining purity estimation with inner product estimation, we can estimate fidelity-like measures by renormalizing $\Tr(\rho\sigma)$ with $\Tr(\rho^2)$ and $\Tr(\sigma^2)$ (see \cite{Liang2019quantum,Elben2020cross} for more details). We can also estimate the squared Hilbert-Schmidt distance
\begin{equation}
    D_{\mathrm{HS}}^2(\rho,\sigma):=\Tr((\rho-\sigma)^2)=\Tr(\rho^2)+\Tr(\sigma^2)-2\Tr(\rho\sigma)
\end{equation}
in the distributed setting with single-copy measurements which has the same sample complexity. In fact our lower bound for inner product estimation also implies a tight lower bound for estimating $D_{\mathrm{HS}}^2(\rho,\sigma)$.

\begin{corollary}
There exists an algorithm for estimating $D_{\mathrm{HS}}^2(\rho,\sigma)$ within additive error $\varepsilon\in(0,1)$ with non-adaptive single-copy measurements and simultaneous message passing that uses $k=O(\max\{1/\varepsilon^2,\sqrt{d}/\varepsilon\})$ copies. Furthermore, any algorithm for estimating $D_{\mathrm{HS}}^2(\rho,\sigma)$ with arbitrary rounds of adaptive multi-copy measurements and classical communication must use at least \\ $\Omega(\max\{1/\varepsilon^2,\sqrt{d}/\varepsilon\})$ copies.
\end{corollary}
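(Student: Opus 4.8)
The plan is to prove the upper and lower bounds separately, in each case reducing to the corresponding statement for $\Tr(\rho\sigma)$ established earlier. For the upper bound I would exploit the decomposition $D_{\mathrm{HS}}^2(\rho,\sigma)=\Tr(\rho^2)+\Tr(\sigma^2)-2\Tr(\rho\sigma)$ and estimate each of the three terms separately, assigning an additive error budget of $\varepsilon/4$ to each. The cross term is handled directly by the single-copy SMP algorithm of Theorem~\ref{thm:overviewsinglecopy} with accuracy $\varepsilon/4$. Each purity term is simply the special case $\sigma=\rho$ of inner product estimation, so Alice can estimate $\Tr(\rho^2)$ locally---designating part of her copies as the second system and running the algorithm of Theorem~\ref{thm:overviewsinglecopy} in both roles---and symmetrically for Bob; this stays within the non-adaptive single-copy SMP model and costs $O(\max\{1/\varepsilon^2,\sqrt{d}/\varepsilon\})$ copies, as already noted above.

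A convenient feature worth recording is that all three estimates can be produced by the Referee from the very same messages: once Alice and Bob report the outcome frequencies of their shared-basis measurements, the Referee can form the collision statistics needed for both purities and for the cross term, so no additional samples or communication are required beyond those of Theorem~\ref{thm:overviewsinglecopy}. Since each accuracy is $\varepsilon/4$, the triangle inequality bounds the total error by $\varepsilon/4+\varepsilon/4+2\cdot\varepsilon/4=\varepsilon$, and the sample complexity $O(\max\{1/(\varepsilon/4)^2,\sqrt{d}/(\varepsilon/4)\})=O(\max\{1/\varepsilon^2,\sqrt{d}/\varepsilon\})$ is unchanged up to constants. To obtain overall success probability $2/3$ I would boost each of the three estimates to success probability $8/9$ by a standard median-of-means amplification and finish with a union bound.

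For the lower bound I would reduce from inner product estimation on pure states. The key observation is that if $\rho=\phi$ and $\sigma=\psi$ are pure then $\Tr(\phi^2)=\Tr(\psi^2)=1$, so the Hilbert-Schmidt distance collapses to the affine function $D_{\mathrm{HS}}^2(\phi,\psi)=2-2\Tr(\phi\psi)=2\left(1-\left|\braket{\phi}{\psi}\right|^2\right)$ of the inner product. Consequently, any protocol estimating $D_{\mathrm{HS}}^2$ to additive error $\varepsilon$ on pure-state inputs yields, through $\Tr(\phi\psi)=1-\tfrac{1}{2}D_{\mathrm{HS}}^2(\phi,\psi)$, an estimate of $\Tr(\phi\psi)$ to additive error $\varepsilon/2$ in exactly the same measurement and communication model. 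Applying Theorem~\ref{thm:overviewlowerbound} then forces $k=\Omega(\sqrt{d}/\varepsilon)$, and applying the $\Omega(1/\varepsilon^2)$ SWAP-test lower bound---which already holds for pure inputs---forces $k=\Omega(1/\varepsilon^2)$; together they give the claimed $\Omega(\max\{1/\varepsilon^2,\sqrt{d}/\varepsilon\})$.

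I do not anticipate a genuine obstacle, as both directions are short reductions to results proven earlier. The one point deserving care is that the hardness instances are pure states, which is exactly what makes the reduction exact: because the two purities are deterministically equal to $1$, no estimation error is introduced when passing between $D_{\mathrm{HS}}^2$ and $\Tr(\phi\psi)$, so the full strength of the inner-product lower bounds transfers with only a constant loss in the accuracy parameter.
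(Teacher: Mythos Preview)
Your proposal is correct and follows essentially the same approach as the paper: the upper bound via the decomposition $D_{\mathrm{HS}}^2(\rho,\sigma)=\Tr(\rho^2)+\Tr(\sigma^2)-2\Tr(\rho\sigma)$ with each term estimated by the single-copy algorithm, and the lower bound via the pure-state identity $\Tr(\phi\psi)=1-\tfrac{1}{2}D_{\mathrm{HS}}^2(\phi,\psi)$ together with Theorem~\ref{thm:overviewlowerbound} and the $\Omega(1/\varepsilon^2)$ bound. Your write-up is somewhat more detailed (explicit error budgets, success-probability amplification, the observation that the same measurement data can serve all three estimates), but the underlying argument is the same.
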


To see the lower bound, note that our lower bound for inner product estimation holds even for pure states (Theorem~\ref{thm:overviewlowerbound}). For pure states $\rho$ and $\sigma$, the inner product and $D_{\mathrm{HS}}^2$ is related by
\begin{equation}
    \Tr(\rho\sigma)=1-\frac{D_{\mathrm{HS}}^2(\rho,\sigma)}{2},
\end{equation}
which directly implies a $\Omega(\max\{1/\varepsilon^2,\sqrt{d}/\varepsilon\})$ lower bound for estimating $D_{\mathrm{HS}}^2$ with pure input states.

Finally, it is an interesting future direction to apply our algorithm templates in Fig.~\ref{fig:algtemplate} as well as our techniques for proving lower bounds to other distributed quantum property estimation problems. For example, we can consider estimating linear functions $f(\rho,\sigma)=\Tr(W\cdot \rho\otimes \sigma)$ in the distributed setting, where $W$ is an arbitrary Hermitian operator supported on $\mathbb{C}^d\otimes \mathbb{C}^d$. Our results settle the sample complexity for estimating $W=\swapop$. Clearly the sample complexity should depend on some property of $W$, as the task is easy when $W=W_A\otimes W_B$ is separable. In addition, it is also interesting to consider estimating multiple linear functions at the same time using ideas from shadow tomography~\cite{Aaronson2018shadow,Aaronson2019gentle,Huang2020predicting,Badescu2021improved}.

\section{Inner product estimation with multi-copy measurements}
\label{sec:multicopy}

In this section we develop algorithms for solving DIPE as well as $\ipe$ with multi-copy measurements. These algorithms only use the simultaneous message passing model for communication (Fig.~\ref{fig:communication}) and are instances of the template in Fig.~\ref{fig:algtemplate}a.

We start with the following technical lemma which is used frequently in the paper and can be understood as follows: to sample a random $d$-dimensional unit vector, we can first sample a random $d-1$-dimensional unit vector, and then sample the $d$-th entry from an appropriate distribution, and then renormalize the vector.

\begin{lemma}[Decomposition of a random state]\label{lemma:decomposition}
Let $\ket{\phi}$ be a random unit vector in $\mathbb{C}^d$, and let $\{\ket{0},\ket{1},\dots,\ket{d-1}\}$ be an arbitrary orthonormal basis of $\mathbb{C}^d$. Then $\ket{\phi}$ can be decomposed into 3 independent random variables $\alpha,\theta,\ket{v}$ as
\begin{equation}
    \ket{\phi}=\alpha e^{i\theta}\ket{0}+\sqrt{1-\alpha^2}\ket{v}.
\end{equation}
Here,
\begin{itemize}
    \item $\alpha\in[0,1]$ is a positive real random variable with density $2(d-1)(1-x^2)^{d-2}x$, and $\alpha^2$ has density $(d-1)(1-x)^{d-2}$, where $x\in[0,1]$,
    \item $\theta\in[0,2\pi]$ is a uniformly random phase,
    \item $\ket{v}$ is a random unit vector in $\vspan\{\ket{1},\dots,\ket{d-1}\}$.
\end{itemize}
\end{lemma}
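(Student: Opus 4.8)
The plan is to establish the decomposition by verifying two things: that the claimed construction produces a vector distributed uniformly on the unit sphere in $\mathbb{C}^d$, and that the three random variables $\alpha,\theta,\ket{v}$ are mutually independent with the stated marginals. The cleanest route is to work backwards from the standard realization of a Haar-random (uniform) unit vector. Recall that a uniformly random $\ket{\phi}\in\mathbb{C}^d$ can be sampled by drawing $2d$ i.i.d.\ real standard Gaussians, assembling them into a complex vector $\ket{g}=(g_0,\dots,g_{d-1})\in\mathbb{C}^d$, and normalizing $\ket{\phi}=\ket{g}/\|\ket{g}\|$. This follows from the rotational invariance of the Gaussian measure under the unitary group $U(d)$, which characterizes the uniform measure $\dd u$ on the sphere. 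I would take this Gaussian model as the starting point, since it makes the independence structure transparent.

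First I would write $g_0=r_0 e^{i\theta}$ in polar form, where $r_0=|g_0|$ and $\theta$ is its phase. Since the real and imaginary parts of $g_0$ are independent standard Gaussians, $\theta$ is uniform on $[0,2\pi]$ and independent of $r_0$, and $r_0^2$ is (a scalar multiple of) a chi-squared random variable with $2$ degrees of freedom, i.e.\ exponential. Independently, the remaining coordinates $(g_1,\dots,g_{d-1})$ form a Gaussian vector in $\vspan\{\ket{1},\dots,\ket{d-1}\}$, so after normalization $\ket{v}:=(g_1,\dots,g_{d-1})/\sqrt{\sum_{j\geq 1}|g_j|^2}$ is a uniformly random unit vector in that $(d-1)$-dimensional space, independent of both $\theta$ and the radial magnitudes. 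Setting $\alpha:=r_0/\|\ket{g}\|=|g_0|/\|\ket{g}\|$, the normalized vector becomes
\begin{equation}
    \ket{\phi}=\frac{g_0}{\|\ket{g}\|}\ket{0}+\frac{1}{\|\ket{g}\|}\sum_{j=1}^{d-1}g_j\ket{j}=\alpha e^{i\theta}\ket{0}+\sqrt{1-\alpha^2}\,\ket{v},
\end{equation}
which is exactly the claimed form, with $\sqrt{1-\alpha^2}=\sqrt{\sum_{j\geq 1}|g_j|^2}/\|\ket{g}\|$ by Pythagoras. The independence of $\alpha$, $\theta$, and $\ket{v}$ follows because $\theta$ is independent of all magnitudes; $\ket{v}$ depends only on the \emph{directions} of $(g_1,\dots,g_{d-1})$, whereas $\alpha$ is a function of magnitudes only; and the standard Gaussian fact that the normalized direction of a Gaussian vector is independent of its norm decouples $\ket{v}$ from $\alpha$ as well.

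The remaining task is to compute the density of $\alpha$ and confirm it matches $2(d-1)(1-x^2)^{d-2}x$. The quantity $\alpha^2=|g_0|^2/\sum_{j=0}^{d-1}|g_j|^2$ is a ratio of sums of squared Gaussians: the numerator involves $2$ real Gaussians and the denominator $2d$, so $\alpha^2$ follows a Beta distribution $\mathrm{Beta}(1,d-1)$, whose density on $[0,1]$ is $(d-1)(1-x)^{d-2}$ — precisely the stated density for $\alpha^2$. The density for $\alpha$ itself then follows by the change of variables $x\mapsto x^2$, giving $2(d-1)(1-x^2)^{d-2}x$. I expect the main (though mild) obstacle to be the careful bookkeeping in this Beta-distribution step: one must correctly count degrees of freedom and justify that the direction/magnitude split of a complex Gaussian yields the independence being claimed. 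An alternative, purely geometric derivation computes the density of $\alpha$ directly from the surface measure of the sphere by integrating out the orthogonal directions, using that the measure of the slice at fixed $|\langle 0|\phi\rangle|=x$ scales like $x\,(1-x^2)^{d-2}$; this avoids Gaussians entirely but requires a more delicate Jacobian computation, so I would favor the Gaussian approach for rigor and brevity.
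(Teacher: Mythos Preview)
Your proposal is correct and follows essentially the same approach as the paper: both realize the Haar-uniform unit vector via i.i.d.\ complex Gaussians, split off the $\ket{0}$ coordinate in polar form, and use the independence of the direction and norm of a Gaussian vector to decouple $\ket{v}$ from $\alpha$. In fact you supply more detail than the paper does, explicitly identifying $\alpha^2\sim\mathrm{Beta}(1,d-1)$ and deriving the density by change of variables, whereas the paper defers that computation to a reference.
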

\begin{proof}
The proof follows from basic properties of the distribution of random unit vector in $\mathbb{C}^d$ and we refer to \cite[Chapter 4.2]{Hiai2006semicircle} and \cite{Nechita2007} for more details.

A random unit vector in $\mathbb{C}^d$ can be characterized as\footnote{Technically the distribution of a random pure state should be the distribution of a random unit vector modulo the global phase. Here for simplicity we just think of random pure states as random unit vectors; this will not affect any results.}
\begin{equation}
    \ket{\phi}=\frac{\left(\xi_1,\dots,\xi_d\right)^T}{\sqrt{|\xi_1|^2+\cdots +|\xi_d|^2}}
\end{equation}
where $\xi_j$ are independent standard complex normal random variables. $\xi_j$ has the following properties: it can be written as $\xi_j=|\xi_j|e^{i\theta_j}$, where $\theta_j\in[0,2\pi]$ is a uniformly random phase and $|\xi_j|^2$ has exponential distribution with density $e^{-x}$ ($x>0$). Then we have
\begin{equation}
\begin{split}
    \ket{\phi}&=\frac{\left(\xi_1,0,\dots,0\right)^T}{\sqrt{|\xi_1|^2+\cdots +|\xi_d|^2}}+\frac{\left(0,\xi_2,\dots,\xi_d\right)^T}{\sqrt{|\xi_1|^2+\cdots +|\xi_d|^2}}\\
    &=\alpha e^{i\theta_1}\left(1,0,\dots,0\right)^T+\sqrt{1-\alpha^2}\frac{\left(0,\xi_2,\dots,\xi_d\right)^T}{\sqrt{|\xi_2|^2+\cdots +|\xi_d|^2}}
\end{split}
\end{equation}
where $\alpha=\frac{|\xi_1|}{\sqrt{|\xi_1|^2+\cdots +|\xi_d|^2}}$ whose density is easy to calculate (see e.g. \cite[Lemma 4.2.4]{Hiai2006semicircle}). The independence between $\alpha$ and $\ket{v}=\frac{\left(0,\xi_2,\dots,\xi_d\right)^T}{\sqrt{|\xi_2|^2+\cdots +|\xi_d|^2}}$ follows from the fact that the distribution of $\ket{v}$ is independent of $\sqrt{|\xi_2|^2+\cdots +|\xi_d|^2}$. This concludes the proof.
\end{proof}

\subsection{A simple algorithm for solving DIPE}
In the following we show how to solve DIPE by performing the standard POVM. Suppose Alice has state $\ket{\phi}^{\otimes k}$ and performs the standard POVM $\left\{\binom{d+k-1}{k}\ketbra{u}^{\otimes k}\dd u\right\}$. The probability density of the measurement outcome $\ket{u}$ can be written as
\begin{equation}
    \binom{d+k-1}{k}\left|\braket{\phi}{u}\right|^{2k}\dd u
\end{equation}
where $\dd u$ denotes the density of random unit vector. This resulting distribution is certainly not uniform; it can be understood as a biased distribution towards the vector $\ket{\phi}$. Following the spirit of Lemma~\ref{lemma:decomposition}, we decompose $\ket{u}$ as
\begin{equation}
    \ket{u}=\alpha e^{i\theta}\ket{\phi}+\sqrt{1-\alpha^2}\ket{u'},
\end{equation}
where $\alpha,\theta,\ket{u'}$ are independent random variables, $\theta$ is a uniformly random phase, and $\ket{u'}$ is a random unit vector in $\mathbb{C}^{d-1}_{\phi_\perp}$, where we define $\mathbb{C}^{d-1}_{\phi_\perp}$ as the $d-1$ dimensional subspace of $\mathbb{C}^d$ that is perpendicular to $\ket{\phi}$. This looks similar to the distribution of a uniformly random unit vector; the only difference is that now $\alpha=\left|\braket{\phi}{u}\right|$ has density
\begin{equation}
    \binom{d+k-1}{k}x^{2k}\cdot 2(d-1)(1-x^2)^{d-2}x,\,\,\,\,x\in(0,1).
\end{equation}
We have already shown in Eq.~\eqref{eq:tomography} that $\E\left|\braket{\phi}{u}\right|^2=\frac{k+1}{d+k}$. This means that $\ket{u}$ and $\ket{\phi}$ have a larger overlap than the typical overlap of two random vectors. We can also calculate the variance following a similar argument as Eq.~\eqref{eq:tomography},
\begin{equation}
\begin{split}
    \Var(\left|\braket{\phi}{u}\right|^2)&=\E\left|\braket{\phi}{u}\right|^4-\left(\frac{k+1}{d+k}\right)^2\\
    &=\E_{u\sim\mathbb{C}^d}\left|\braket{\phi}{u}\right|^4\cdot\binom{d+k-1}{k}\left|\braket{\phi}{u}\right|^{2k}-\left(\frac{k+1}{d+k}\right)^2\\
        &=\binom{d+k-1}{k}\E_{u\sim\mathbb{C}^d}\Tr(\phi^{\otimes k+2}\cdot u^{\otimes k+2})-\left(\frac{k+1}{d+k}\right)^2\\
        &=\binom{d+k-1}{k}\binom{d+k+1}{k+2}^{-1}-\left(\frac{k+1}{d+k}\right)^2=\frac{(d-1) (k+1)}{(d+k)^2 (d+k+1)}.
\end{split}
\end{equation}
This means that with high probability
\begin{equation}\label{eq:standardpovmvariance}
    \left|\braket{\phi}{u}\right|^2=\frac{k}{d}\pm O\left(\frac{\sqrt{k}}{d}\right).
\end{equation}
We show that this leads to a simple algorithm for DIPE with $k=O(\sqrt{d})$ copies.

\begin{theorem}
Consider the following algorithm for DIPE which uses the template in Fig.~\ref{fig:algtemplate}a: Alice and Bob both perform the standard POVM on all copies of their states, and obtain results $\ket{u},\ket{v}$, respectively. Then they compute $\left|\braket{u}{v}\right|^2$, and decide case 2 if $\left|\braket{u}{v}\right|^2\leq\frac{10}{d}$, and case 1 otherwise. This algorithm decides DIPE with high probability when $k=O(\sqrt{d})$.
\end{theorem}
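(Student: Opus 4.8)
The plan is to analyze the two cases of DIPE separately and to show that the test statistic $\left|\braket{u}{v}\right|^2$ lands on opposite sides of the threshold $10/d$ with high probability. The easier direction is case 2, where Alice and Bob hold independent Haar-random states $\ket{\phi},\ket{\psi}$. The key observation is that, after averaging over the uniform prior, the marginal distribution of each standard-POVM outcome is itself exactly Haar-uniform: integrating the outcome density $\binom{d+k-1}{k}\left|\braket{\phi}{u}\right|^{2k}$ over a uniform $\ket{\phi}$ gives a density independent of $\ket{u}$ (this is the same symmetry used in \eqref{eq:tomography}). Since $\ket{\phi},\ket{\psi}$ and the two measurements are mutually independent, $\ket{u}$ and $\ket{v}$ are two independent Haar-random vectors, so $\E\left|\braket{u}{v}\right|^2=1/d$ and Markov's inequality gives $\Pr\!\left[\left|\braket{u}{v}\right|^2>10/d\right]\le 1/10$. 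Thus the algorithm outputs case 2 correctly except with probability at most $1/10$.

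Case 1 is the substantive part. Here both outcomes are biased towards the common hidden vector $\ket{\phi}$. Conditioning on $\ket{\phi}$ and applying the decomposition of Lemma~\ref{lemma:decomposition} to each (biased) outcome, I would write $\ket{u}=\alpha_u e^{i\theta_u}\ket{\phi}+\sqrt{1-\alpha_u^2}\ket{u'}$ and $\ket{v}=\alpha_v e^{i\theta_v}\ket{\phi}+\sqrt{1-\alpha_v^2}\ket{v'}$, where given $\ket{\phi}$ the triples $(\alpha_u,\theta_u,\ket{u'})$ and $(\alpha_v,\theta_v,\ket{v'})$ are mutually independent, $\alpha_u=\left|\braket{\phi}{u}\right|$, $\alpha_v=\left|\braket{\phi}{v}\right|$, and $\ket{u'},\ket{v'}$ are independent Haar-random unit vectors in the perpendicular subspace $\mathbb{C}^{d-1}_{\phi_\perp}$. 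Because $\ket{u'},\ket{v'}\perp\ket{\phi}$, the cross terms vanish and the inner product collapses to
\begin{equation}
    \braket{u}{v}=\alpha_u\alpha_v e^{i(\theta_v-\theta_u)}+\sqrt{(1-\alpha_u^2)(1-\alpha_v^2)}\,\braket{u'}{v'},
\end{equation}
so by the reverse triangle inequality $\left|\braket{u}{v}\right|\ge \alpha_u\alpha_v-\left|\braket{u'}{v'}\right|$.

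It then remains to lower bound $\alpha_u\alpha_v$ and upper bound $\left|\braket{u'}{v'}\right|$. For the former, \eqref{eq:tomography} gives $\E\,\alpha_u^2=(k+1)/(d+k)$ and the variance computation in \eqref{eq:standardpovmvariance} shows $\alpha_u^2=k/d\pm O(\sqrt{k}/d)$; Chebyshev's inequality with this explicit variance yields $\alpha_u^2,\alpha_v^2\ge (1-\delta)k/d$ each with probability $1-O(1/(c\sqrt{d}))$, hence $\alpha_u\alpha_v\ge (1-\delta)k/d$. For the latter, $\E\left|\braket{u'}{v'}\right|^2=1/(d-1)$, so Markov gives $\left|\braket{u'}{v'}\right|\le 10/\sqrt{d}$ except with probability at most $1/100$. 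Combining, $\left|\braket{u}{v}\right|\ge (1-\delta)k/d-10/\sqrt{d}$, and taking $k=c\sqrt{d}$ with $c$ a sufficiently large constant makes $\left|\braket{u}{v}\right|^2>10/d$ with probability at least $2/3$, so the algorithm outputs case 1 correctly. A union bound over the failure events completes the proof.

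The main obstacle is controlling, in case 1, the interference between the $\ket{\phi}$-component (of magnitude $\approx k/d$) and the perpendicular component (of magnitude $\approx 1/\sqrt d$), whose relative phase is random. The clean route is to discard the signed, phase-dependent cross term entirely via the reverse triangle inequality, reducing the problem to a one-sided concentration of $\alpha_u\alpha_v$ (supplied by the variance bound \eqref{eq:standardpovmvariance}) together with a crude Markov bound on $\left|\braket{u'}{v'}\right|$; the only cost is a larger constant $c$ in $k=c\sqrt{d}$, which I would not attempt to optimize. A secondary point requiring care is the justification that $\ket{u'},\ket{v'}$ are Haar-random in $\mathbb{C}^{d-1}_{\phi_\perp}$ and independent of $\alpha_u,\alpha_v$; this is the biased analogue of Lemma~\ref{lemma:decomposition} and follows from the same rotational-symmetry argument, since the bias only reweights the radial variable $\alpha$.
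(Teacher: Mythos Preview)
Your proposal is correct and follows essentially the same approach as the paper's proof: Markov's inequality on $\E\left|\braket{u}{v}\right|^2=1/d$ for case~2, and for case~1 the decomposition of $\ket{u},\ket{v}$ along $\ket{\phi}$, the reverse triangle inequality $\left|\braket{u}{v}\right|\ge\alpha_u\alpha_v-\left|\braket{u'}{v'}\right|$, the variance bound~\eqref{eq:standardpovmvariance} for $\alpha_u,\alpha_v$, and a Markov bound on $\left|\braket{u'}{v'}\right|$. Your explicit computation showing the cross terms $\braket{\phi}{v'},\braket{u'}{\phi}$ vanish is a nice clarification the paper leaves implicit, but otherwise the arguments are the same.
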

\begin{proof}
We start with case 2 where Alice has $\ket{\phi}^{\otimes k}$ and Bob has $\ket{\psi}^{\otimes k}$, for independent random vectors $\ket{\phi},\ket{\psi}\sim\mathbb{C}^d$. In this case the results $u,v$ are independent; their average overlap is easy to calculate as
\begin{equation}
\begin{split}
    \E \Tr(u v)&=\binom{d+k-1}{k}^2\E_{u,v,\phi,\psi\sim\mathbb{C}^d} \Tr(u v)\Tr(u^{\otimes k}\cdot \phi^{\otimes k})\Tr(v^{\otimes k}\cdot \psi^{\otimes k})\\
    &=\E_{u,v\sim\mathbb{C}^d} \Tr(u v)\Tr(u^{\otimes k}\cdot \psym{d}{k})\Tr(v^{\otimes k}\cdot \psym{d}{k})\\
    &=\E_{u,v\sim\mathbb{C}^d} \Tr(u v)=\frac{1}{d}.
\end{split}
\end{equation}
By Markov's inequality, we have
\begin{equation}
    \Pr[\left|\braket{u}{v}\right|^2\geq \frac{10}{d}]\leq 0.1,
\end{equation}
so the algorithm returns ``case 2" with probability at least 0.9.

In case 1 where Alice and Bob both has $\ket{\phi}^{\otimes k}$ for a random state $\ket{\phi}\sim\mathbb{C}^d$, following the above discussions we write their measurement outcomes as
$\ket{u}=\alpha e^{i\theta}\ket{\phi}+\sqrt{1-\alpha^2}\ket{u'}$ and $\ket{v}=\beta e^{i\theta'}\ket{\phi}+\sqrt{1-\beta^2}\ket{v'}$. Then we have
\begin{equation}
    \left|\braket{u}{v}\right|\geq \left|\braket{\phi}{u}\right|\cdot \left|\braket{\phi}{v}\right|-\left|\braket{u'}{v'}\right|.
\end{equation}
Note that $\ket{u'}$ and $\ket{v'}$ are independent random vectors in $\mathbb{C}^{d-1}_{\phi_\perp}$. Therefore, using Eq.~\eqref{eq:standardpovmvariance}, with high probability we have
\begin{equation}
    \begin{split}
  \left|\braket{u}{v}\right|&\geq\frac{k-O(\sqrt{k})}{d}-\frac{O(1)}{\sqrt{d}}\geq\frac{\sqrt{10}}{\sqrt{d}}
    \end{split}
\end{equation}
when $k=C\cdot\sqrt{d}$ for some large constant $C>0$. This implies that the algorithm returns ``case 1" with high probability.
\end{proof}

\subsection{Analysis of the estimation algorithm}
\begin{figure}[t]
  \begin{algorithm}[H]
    \caption{Quantum inner product estimation with multi-copy measurement}\label{alg:innerproductmulti}
    \raggedright\textbf{Input:} $k$ copies of unknown pure states $\ket{\phi},\ket{\psi}\in\mathbb{C}^{d}$\\
    \textbf{Output:} an estimate of $f=\left|\braket{\phi}{\psi}\right|^2$
    \begin{algorithmic}[1]
    \State Alice measures $\ket{\phi}^{\otimes k}$ with the standard POVM and obtains result $\ket{u}$
    \State Bob measures $\ket{\psi}^{\otimes k}$ with the standard POVM and obtains result $\ket{v}$
    \State\textbf{Return} $\frac{(d+k)^2}{k^2}\left|\braket{u}{v}\right|^2-\frac{d+2k}{k^2}$
    \end{algorithmic}
    \end{algorithm}
\end{figure}

Next we consider quantum inner product estimation ($\ipe$) with pure input states. Suppose Alice has input state $\ket{\phi}^{\otimes k}$ and Bob has input state $\ket{\psi}^{\otimes k}$, we would like to estimate $f:=\left|\braket{\phi}{\psi}\right|^2$ up to additive error $\varepsilon$, with success probability at least $2/3$. Note that unlike DIPE, here the input states are no longer random; the algorithm has to succeed with probability at least $2/3$ for all possible input states.

The estimation algorithm (presented in Algorithm~\ref{alg:innerproductmulti}) is similar to the previous section, which simply returns a function of $\left|\braket{u}{v}\right|^2$, the overlap of Alice and Bob's measurement outcomes after performing the standard POVM. However here with arbitrary input states the performance of this algorithm is much less clear, as $\left|\braket{u}{v}\right|^2$ does not necessarily have a nice correlation with $f$. As a first step, we show that the estimator given by Algorithm~\ref{alg:innerproductmulti} is unbiased.

\begin{lemma}
Let $\ket{u},\ket{v}$ be Alice and Bob's measurement outcomes in Algorithm~\ref{alg:innerproductmulti}, respectively. We have
\begin{equation}
    \E\left|\braket{u}{v}\right|^2=A+Bf
\end{equation}
where $A:=\frac{d+2k}{(d+k)^2}$, $B:=\frac{k^2}{(d+k)^2}$. Here the expectation is over the intrinsic randomness of quantum measurements.
\end{lemma}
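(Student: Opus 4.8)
The plan is to exploit that Alice's and Bob's measurements are independent, so that the expectation factorizes through the single-copy average states of their outcomes. Write $\rho_\phi := \E_u\ketbra{u}$ for the average outcome density matrix when Alice measures $\ket{\phi}^{\otimes k}$ with the standard POVM, and $\rho_\psi := \E_v\ketbra{v}$ analogously for Bob. Since $\Tr(\ketbra{u}\ketbra{v})=|\braket{u}{v}|^2$ is linear in each of $\ketbra{u}$ and $\ketbra{v}$, independence of $u$ and $v$ gives
\begin{equation}
\E\left|\braket{u}{v}\right|^2 = \Tr(\rho_\phi\,\rho_\psi).
\end{equation}
Thus the whole problem reduces to identifying the two $d\times d$ matrices $\rho_\phi$ and $\rho_\psi$.

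Next I would compute $\rho_\phi$. The outcome $\ket{u}$ has density $\binom{d+k-1}{k}|\braket{\phi}{u}|^{2k}\,\dd u$, which is invariant under every unitary fixing $\ket{\phi}$. Hence $\rho_\phi$ commutes with this stabilizer, and since the commutant is spanned by $\phi$ and $I$, it must take the form $\rho_\phi = a\,\phi + b\,I$ for scalars $a,b$ depending only on $d,k$. I fix these via two trace identities: first $\Tr(\rho_\phi)=1$ (an average of unit-trace projectors), which gives $a+bd=1$; second $\Tr(\phi\,\rho_\phi)=\E|\braket{\phi}{u}|^2=\frac{k+1}{d+k}$ by Eq.~\eqref{eq:tomography}, which gives $a+b=\frac{k+1}{d+k}$. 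Solving yields $a=\frac{k}{d+k}$ and $b=\frac{1}{d+k}$, so that $\rho_\phi=\frac{k}{d+k}\phi+\frac{1}{d+k}I$, and symmetrically $\rho_\psi=\frac{k}{d+k}\psi+\frac{1}{d+k}I$. An alternative, fully explicit route is to decompose $\ket{u}=\alpha e^{i\theta}\ket{\phi}+\sqrt{1-\alpha^2}\ket{u'}$ using Lemma~\ref{lemma:decomposition}: the uniform phase $\theta$ kills the cross terms $\ket{\phi}\bra{u'}$, the factor $\E\ketbra{u'}=\frac{1}{d-1}(I-\phi)$ handles the orthogonal part, and the moment $\E\alpha^2=\frac{k+1}{d+k}$ reproduces the same $\rho_\phi$.

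Finally I would expand, using $\Tr(\phi\psi)=f$, $\Tr(\phi)=\Tr(\psi)=1$, and $\Tr(I)=d$:
\begin{equation}
\Tr(\rho_\phi\rho_\psi)=\frac{k^2}{(d+k)^2}f+\frac{k}{(d+k)^2}+\frac{k}{(d+k)^2}+\frac{d}{(d+k)^2}=\frac{d+2k}{(d+k)^2}+\frac{k^2}{(d+k)^2}f,
\end{equation}
which is exactly $A+Bf$ with the claimed $A$ and $B$.

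The only real content is the identification of $\rho_\phi$; the rest is bookkeeping. I expect the main subtlety to be justifying the symmetry reduction to the two-parameter form $a\phi+bI$ (equivalently, correctly handling the independence of $\alpha,\theta,\ket{u'}$ in Lemma~\ref{lemma:decomposition} so the off-diagonal terms vanish). Neither step is deep, but the factorization $\E|\braket{u}{v}|^2=\Tr(\rho_\phi\rho_\psi)$ is what collapses the calculation, so I would take care to state it precisely, since that is the one place the independence of the two measurements is used.
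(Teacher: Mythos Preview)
Your proof is correct and takes a genuinely different, more streamlined route than the paper's. The paper applies the decomposition of Lemma~\ref{lemma:decomposition} directly to both $\ket{u}$ and $\ket{v}$, expands $\braket{u}{v}$ into four terms, squares to sixteen terms, uses the random phases $\theta,\varphi$ to kill the twelve cross terms, and then evaluates the four survivors $\E|\braket{\phi'}{\psi}|^2$, $\E|\braket{\phi}{\psi'}|^2$, $\E|\braket{\phi'}{\psi'}|^2$ one by one before recombining. You instead exploit the independence of $u$ and $v$ up front to collapse the whole computation to $\Tr(\rho_\phi\rho_\psi)$, and then identify the single-copy average $\rho_\phi$ as a two-parameter object $a\phi+bI$ via its stabilizer symmetry and the known moment $\E|\braket{\phi}{u}|^2=\frac{k+1}{d+k}$. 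Your approach is cleaner for this lemma and makes the affine dependence on $f$ transparent from the outset. The paper's term-by-term expansion is heavier here, but it is the same machinery reused verbatim in the variance computation (Lemma~\ref{lemma:varianceofmulticopyestimation}), where one needs $\E|\braket{u}{v}|^4$; your factorization idea would extend there too (via $\E_u\ketbra{u}^{\otimes 2}$), but the commutant is larger and the bookkeeping advantage narrows.
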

\begin{proof}
Following the previous section, we first use the decomposition technique in Lemma~\ref{lemma:decomposition}. We can write Alice and Bob's measurement outcomes as 
\begin{equation}
    \begin{split}
        &\ket{u}=\alpha e^{i\theta}\ket{\phi}+\sqrt{1-\alpha^2}\ket{\phi'},\\
        &\ket{v}=\beta e^{i\varphi}\ket{\psi}+\sqrt{1-\beta^2}\ket{\psi'}.
    \end{split}
\end{equation}
For clarity we restate the properties of $\alpha,\theta,\ket{\phi'}$ (similar properties hold for $\beta,\varphi,\ket{\psi'}$). They are independent random variables satisfying
\begin{enumerate}
    \item $\alpha$ is a positive real variable distributed on $[0,1]$ with density
    \begin{equation}
    \binom{d+k-1}{k}x^{2k} \cdot 2(d-1)(1-x^2)^{d-2}x.
    \end{equation}
    We have already shown in the previous section that $\E\alpha^2=\frac{k+1}{d+k}$ and $\E \alpha^4=\frac{(k+2)(k+1)}{(d+k+1)(d+k)}$.
    \item $\theta$ is a uniformly random phase in $[0,2\pi]$. 
    \item $\ket{\phi'}$ is a random vector in $\mathbb{C}^{d-1}_{\phi_\perp}$, which is the $d-1$ dimensional subspace of $\mathbb{C}^d$ that is perpendicular to $\ket{\phi}$.
\end{enumerate}

We can write $\braket{u}{v}$ as
\begin{equation}
    \braket{u}{v}=\alpha\beta e^{i(\varphi-\theta)}\braket{\phi}{\psi}+\sqrt{1-\alpha^2}\beta e^{i\varphi}\braket{\phi'}{\psi}+\sqrt{1-\beta^2}\alpha e^{-i\theta}\braket{\phi}{\psi'}+\sqrt{(1-\alpha^2)(1-\beta^2)}\braket{\phi'}{\psi'}.
\end{equation}

There are 16 terms in $\left|\braket{u}{v}\right|^2$, and it's easy to see that all cross terms have a random phase, which becomes 0 after we average over the random phase. Therefore there are only 4 non-zero terms in $\E\left|\braket{u}{v}\right|^2$, given by
\begin{equation}
    \begin{split}
    \E\left|\braket{u}{v}\right|^2&=\E\left[\alpha^2\beta^2 f+(1-\alpha^2)\beta^2 \left|\braket{\phi'}{\psi}\right|^2+(1-\beta^2)\alpha^2\left|\braket{\phi}{\psi'}\right|^2+(1-\alpha^2)(1-\beta^2)\left|\braket{\phi'}{\psi'}\right|^2\right].
    \end{split}
\end{equation}
Next, note that $\E\ketbra{\phi'}$ is the maximally mixed state in $\mathbb{C}^{d-1}_{\phi_\perp}$ that can be written as $\frac{I-\ketbra{\phi}}{d-1}$, which gives
\begin{equation}
    \E\left|\braket{\phi'}{\psi}\right|^2=\frac{1-f}{d-1}.
\end{equation}
By symmetry we also have $\E\left|\braket{\phi}{\psi'}\right|^2=\frac{1-f}{d-1}$. Next, since $\phi'$ and $\psi'$ are independent, we have
\begin{equation}
\begin{split}
    \E\left|\braket{\phi'}{\psi'}\right|^2&=\E_{\phi',\psi'}\Tr(\ketbra{\phi'}\cdot \ketbra{\psi'})\\
    &=\Tr(\frac{I-\ketbra{\phi}}{d-1}\cdot \frac{I-\ketbra{\psi}}{d-1})\\
    &=\frac{d-2+f}{(d-1)^2}.
\end{split}
\end{equation}

Therefore
\begin{equation}
    \begin{split}
        \E\left|\braket{u}{v}\right|^2&=\frac{(k+1)^2}{(d+k)^2}f+\frac{2(d-1)(k+1)}{(d+k)^2}\cdot \frac{1-f}{d-1}+\frac{(d-1)^2}{(d+k)^2}\cdot\frac{d-2+f}{(d-1)^2}\\
        &=\frac{2(k+1)}{(d+k)^2}+\frac{d-2}{(d+k)^2} + f\left(\frac{(k+1)^2}{(d+k)^2}-\frac{2(k+1)}{(d+k)^2}+\frac{1}{(d+k)^2}\right)\\
        &=\frac{d+2k}{(d+k)^2}+\frac{k^2}{(d+k)^2}f.
    \end{split}
\end{equation}
\end{proof}

Let $w:=(\left|\braket{u}{v}\right|^2-A)/B$. We have shown that $\E w=f$, so $w$ is an unbiased estimator for $f$. The main question is therefore to calculate the variance of $w$.

\begin{lemma}\label{lemma:varianceofmulticopyestimation}
Let $w=(\left|\braket{u}{v}\right|^2-A)/B$ be the estimator returned by Algorithm~\ref{alg:innerproductmulti}, where $A=\frac{d+2k}{(d+k)^2}$, $B=\frac{k^2}{(d+k)^2}$. The variance of $w$ is upper bounded by
\begin{equation}
    \Var(w)\leq O\left( \frac{f}{k}+\frac{df}{k^2}+\frac{1}{k^2}+\frac{d}{k^3}+\frac{d^2}{k^4}\right).
\end{equation}
\end{lemma}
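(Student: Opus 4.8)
The plan is to reduce everything to one fourth-moment computation. Since $w = (|\braket{u}{v}|^2 - A)/B$ is affine in $|\braket{u}{v}|^2$, we have $\Var(w) = B^{-2}\Var(|\braket{u}{v}|^2) = B^{-2}\big(\E|\braket{u}{v}|^4 - (A+Bf)^2\big)$, where I use the previous lemma for $\E|\braket{u}{v}|^2 = A + Bf$. So the whole task is to evaluate $\E|\braket{u}{v}|^4$, subtract $(A+Bf)^2$, and multiply by $1/B^2 = (d+k)^4/k^4$.

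To get $\E|\braket{u}{v}|^4$ I would reuse the decomposition from the previous proof, $\ket{u} = \alpha e^{i\theta}\ket{\phi} + \sqrt{1-\alpha^2}\ket{\phi'}$ and $\ket{v} = \beta e^{i\varphi}\ket{\psi} + \sqrt{1-\beta^2}\ket{\psi'}$, so that $\braket{u}{v} = T_1 + T_2 + T_3 + T_4$ is the same four-term sum. Then $|\braket{u}{v}|^4 = \braket{u}{v}^2\braket{v}{u}^2 = (\sum_i T_i)^2(\sum_j \bar T_j)^2$ expands into $4^4$ monomials, and the first move is to average over the two independent uniform phases $\theta,\varphi$. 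Recording each term's phase as $e^{i(a\theta + b\varphi)}$ (so $T_1 \to e^{i(\varphi-\theta)}$, $T_2 \to e^{i\varphi}$, $T_3 \to e^{-i\theta}$, $T_4 \to 1$), only monomials whose total $\theta$- and $\varphi$-exponents both vanish survive, i.e.\ those obeying the balance conditions $n_1+n_3 = m_1+m_3$ and $n_1+n_2 = m_1+m_2$ (with $n_i, m_i$ the counts of $T_i, \bar T_i$). This eliminates most monomials, and each survivor factors by independence into an $\alpha,\beta$-moment times a moment over the complement vectors $\phi',\psi'$.

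The $\alpha,\beta$-moments all reduce to $\E\alpha^2 = \frac{k+1}{d+k}$ and $\E\alpha^4 = \frac{(k+2)(k+1)}{(d+k+1)(d+k)}$ (and likewise for $\beta$), since $\alpha$ enters only through $\alpha^2$, $1-\alpha^2$, and their products. The vector moments are the new ingredient; I would compute them from the Haar identity $\E|\braket{a}{b}|^{2m} = \binom{D+m-1}{m}^{-1}$ for a random unit vector $\ket{a}$ in $\mathbb{C}^D$ (here $D = d-1$) and its versions for products of distinct inner products. Writing $\ket{\psi} = \sqrt{f}\,e^{i\gamma}\ket{\phi} + \sqrt{1-f}\,\ket{\psi_\perp}$ relative to $\ket{\phi}$ collapses each factor $\braket{\phi'}{\psi}$ to $\sqrt{1-f}\,\braket{\phi'}{\psi_\perp}$, so beyond the second moments $\E|\braket{\phi'}{\psi}|^2 = \frac{1-f}{d-1}$ and $\E|\braket{\phi'}{\psi'}|^2 = \frac{d-2+f}{(d-1)^2}$ already used, I would only need their fourth-order analogues and the few mixed correlations that appear. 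As a cross-check one could instead use independence to write $\E|\braket{u}{v}|^4 = \Tr(M_\phi M_\psi)$ with $M_\phi = \E_u (\ketbra{u})^{\otimes 2}$ equal to $\binom{d+k-1}{k}\binom{d+k+1}{k+2}^{-1}$ times the partial trace of $\psym{d}{k+2}$ against $\phi^{\otimes k}$ on $k$ of its $k+2$ factors, and expand $\psym{d}{k+2}$ over $S_{k+2}$.

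Assembling these gives $\E|\braket{u}{v}|^4$ as an explicit rational function of $d, k, f$; subtracting $(A+Bf)^2$ and multiplying by $(d+k)^4/k^4$ yields the five listed terms after bounding each summand by its leading order in $d, k, f$ and using $0 \le f \le 1$. I expect the main obstacle to be twofold. First is the bookkeeping of the phase-balanced monomials and their mixed vector moments, where the correlations among $\braket{\phi'}{\psi}$, $\braket{\phi}{\psi'}$, and $\braket{\phi'}{\psi'}$ must be tracked because $\ket{\phi}$ and $\ket{\psi}$ are not orthogonal. Second, and more delicate, is that the leading $\Theta(1/(d+k)^2)$ parts of $\E|\braket{u}{v}|^4$ and of $(A+Bf)^2$ cancel, so $\Var(|\braket{u}{v}|^2)$ is a difference of nearly equal quantities; every moment must be carried to enough subleading precision that the true surviving orders emerge correctly after the $1/B^2 = (d+k)^4/k^4$ amplification.
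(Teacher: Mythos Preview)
Your proposal is correct and follows essentially the same approach as the paper: the paper likewise reduces the problem to computing $\E|\braket{u}{v}|^4$, expands $\braket{u}{v}$ into the same four terms, averages over the phases $\theta,\varphi$ via a dedicated lemma recording exactly your balance conditions, evaluates the surviving pieces using the moments of $\alpha,\beta$ and the second and fourth moments of $\ket{\phi'},\ket{\psi'}$ (computed via the projector onto $\vee^2\mathbb{C}^{d-1}_{\phi_\perp}$), and then handles the near-cancellation you flag by relaxing the prefactor $(d+k)^2/(d+k+1)^2$ to $1$ before subtracting. Your alternative cross-check via $\Tr(M_\phi M_\psi)$ and the $S_{k+2}$ expansion of $\psym{d}{k+2}$ is not used in the paper but is a valid independent route.
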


The proof involves heavy calculations and is given in Appendix~\ref{app:varianceofmulticopyestimation}. As a corollary of the above lemmas we have the following result on the performance of Algorithm~\ref{alg:innerproductmulti}.

\begin{theorem}\label{thm:multicopyalg}
For any $\varepsilon\in(0,1)$, Algorithm~\ref{alg:innerproductmulti} returns an estimate of $f$ within $\varepsilon$ additive error with probability at least $2/3$, provided that
\begin{equation}\label{eq:multicopysamplecomplexity}
    k\geq C\cdot \max\left\{\frac{f}{\varepsilon^2},\frac{\sqrt{df}}{\varepsilon},\frac{1}{\varepsilon},\frac{d^{1/3}}{\varepsilon^{2/3}},\frac{\sqrt{d}}{\sqrt{\varepsilon}}\right\}.
\end{equation}
for some constant $C>0$. 
Since $f\leq 1$, it suffices to have
\begin{equation}
    k\geq C\cdot \max\left\{\frac{1}{\varepsilon^2},\frac{\sqrt{d}}{\varepsilon}\right\},
\end{equation}
which corresponds to the sample complexity when no prior knowledge on $f$ is assumed.
\end{theorem}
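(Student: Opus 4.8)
The plan is to combine the two preceding lemmas via a standard second-moment argument. Since $w=(|\braket{u}{v}|^2-A)/B$ is an unbiased estimator with $\E w = f$, Chebyshev's inequality gives
\begin{equation}
\Pr\left[\,|w-f|\geq\varepsilon\,\right]\leq\frac{\Var(w)}{\varepsilon^2}.
\end{equation}
Thus it suffices to choose $k$ large enough that $\Var(w)\leq\varepsilon^2/3$, which by Lemma~\ref{lemma:varianceofmulticopyestimation} is guaranteed once each of the five terms in the variance upper bound is at most a small constant multiple of $\varepsilon^2$.

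First I would translate each of the five requirements into a lower bound on $k$. Demanding $f/k\lesssim\varepsilon^2$ forces $k\gtrsim f/\varepsilon^2$; $df/k^2\lesssim\varepsilon^2$ forces $k\gtrsim\sqrt{df}/\varepsilon$; $1/k^2\lesssim\varepsilon^2$ forces $k\gtrsim 1/\varepsilon$; $d/k^3\lesssim\varepsilon^2$ forces $k\gtrsim d^{1/3}/\varepsilon^{2/3}$; and $d^2/k^4\lesssim\varepsilon^2$ forces $k\gtrsim\sqrt{d}/\sqrt{\varepsilon}$. Taking the maximum of these five quantities yields exactly the condition in Eq.~\eqref{eq:multicopysamplecomplexity}, and absorbing all the implicit constants into a single constant $C>0$ completes the first part.

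For the simplification I would use $f\leq 1$ together with $\varepsilon\in(0,1)$ and $d\geq 1$ to verify that each of the five terms is dominated by $\max\{1/\varepsilon^2,\sqrt{d}/\varepsilon\}$. The terms $f/\varepsilon^2$ and $1/\varepsilon$ are at most $1/\varepsilon^2$, and $\sqrt{df}/\varepsilon\leq\sqrt{d}/\varepsilon$. For the remaining two, the inequality $d^{1/3}/\varepsilon^{2/3}\leq\sqrt{d}/\varepsilon$ reduces after rearranging to $\varepsilon\leq\sqrt{d}$, and $\sqrt{d}/\sqrt{\varepsilon}\leq\sqrt{d}/\varepsilon$ reduces to $\varepsilon\leq 1$, both of which hold under our assumptions. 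Hence the maximum of all five terms is $O(\max\{1/\varepsilon^2,\sqrt{d}/\varepsilon\})$, giving the stated worst-case bound.

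The only genuinely difficult step is the variance estimate itself (Lemma~\ref{lemma:varianceofmulticopyestimation}), whose heavy moment computation is deferred to the appendix; given that bound, the derivation here is a routine Chebyshev calculation followed by term-by-term bookkeeping, and I anticipate no real obstacle beyond keeping track of the five competing scalings.
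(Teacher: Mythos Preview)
Your proposal is correct and matches the paper's approach exactly: the paper presents Theorem~\ref{thm:multicopyalg} simply as ``a corollary of the above lemmas'' without spelling out a proof, and what you have written is precisely the standard Chebyshev-plus-term-by-term argument that this sentence implicitly invokes. There is nothing to add or change.
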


\section{Lower bounds against multi-copy measurements and interactive protocols}

Next we develop lower bounds for distributed quantum inner product estimation with multi-copy measurements. We start by focusing on the decisional inner product estimation problem (DIPE, Definition~\ref{def:DIPE}). In Section~\ref{sec:interactivelowerbound} we prove a $\Omega(\sqrt{d})$ lower bound for DIPE against arbitrary interactive protocols with multi-copy measurements. The use of optimal cloning in the lower bound (as shown below) eludes its intuitive understanding, so we attempt to find a more elementary argument in Section~\ref{sec:onewaylowerbound}. The argument currently works in the special case where Alice performs the standard POVM and sends the result to Bob, who then performs a two-outcome POVM to solve DIPE. We show that this protocol has a nice structure which allows us to exactly identify Bob's optimal measurement strategy after receiving Alice's message. Next, in Section~\ref{sec:estimationlowerbound} we construct a lower bound instance for estimating the inner product within $\varepsilon$ accuracy, and prove a $\Omega(\sqrt{d}/\varepsilon)$ lower bound based on a reduction to DIPE. Finally, in Section~\ref{sec:swaptest} we discuss the SWAP test and prove its optimality via a $\Omega(1/\varepsilon^2)$ lower bound for inner product estimation in the non-distributed setting. Combining the two lower bounds implies the lower bound in our main result.

Recall that in the DIPE problem, Alice and Bob are required to distinguish between (1) they both have $k$ copies of the same random state $\ket{\phi}$, and (2) Alice has $\ket{\phi}^{\otimes k}$ and Bob has $\ket{\psi}^{\otimes k}$, where $\ket{\phi}$ and $\ket{\psi}$ are independent random states. Alice and Bob's protocol can be viewed as implementing a two-outcome POVM $\{M,I-M\}$ that acts on $(\mathbb{C}^d)^{\otimes 2k}$, where outcome $M$ corresponds to ``case 2". In order to solve the DIPE problem, $M$ needs to satisfy
\begin{enumerate}
\item $\Pr_{\phi\sim\mathbb{C}^d}\left[M \text{ accepts }\phi^{\otimes k}\otimes \phi^{\otimes k}\right]=\E_{\phi\sim\mathbb{C}^d}\Tr\left(M \phi^{\otimes k}\otimes \phi^{\otimes k}\right)\leq\frac{1}{3}$,
     \item $\Pr_{\phi,\psi\sim\mathbb{C}^d}\left[M \text{ accepts }\phi^{\otimes k}\otimes \psi^{\otimes k}\right]=\E_{\phi,\psi\sim\mathbb{C}^d}\Tr\left(M \phi^{\otimes k}\otimes \psi^{\otimes k}\right)\geq\frac{2}{3}$.
\end{enumerate}
It's easy to see that the above requirements can be satisfied if and only if $M$ satisfies 
\begin{equation}\label{eq:dipedistinguish}
    \E_{\phi,\psi\sim\mathbb{C}^d}\Tr\left(M \left(\phi^{\otimes k}\otimes \psi^{\otimes k}-\phi^{\otimes k}\otimes \phi^{\otimes k}\right)\right)\geq\frac{1}{3}.
\end{equation}

When Alice and Bob are allowed general quantum measurements on $(\mathbb{C}^d)^{\otimes 2k}$, the above task can be solved with $k=O(1)$ using SWAP test (see Section~\ref{sec:swaptest} for more details). However, here they are only allowed local quantum operations and interactive classical communication. In the following we establish lower bounds for $k$ using these measurements.

\subsection{Lower bound against arbitrary interactive protocols}
\label{sec:interactivelowerbound}

We start by considering the general one-way protocol: suppose Alice performs an arbitrary POVM $\{M_i\}$ and sends the result $i$ to Bob, and after seeing $i$ Bob performs a two-outcome POVM $\{N_i,I-N_i\}$ where $N_i$ corresponds to ``case 2". Here for simplicity we consider discrete POVMs for Alice; the same argument also holds for continuous POVMs. The measurement operator implemented by Alice and Bob that corresponds to ``case 2" is 
\begin{equation}
    M_{AB}=\sum_i M_i\otimes N_i.
\end{equation}
From Eq.~\eqref{eq:dipedistinguish} we have
\begin{equation}
    \begin{split}
        \frac{1}{3}&\leq \E_{\phi,\psi\sim\mathbb{C}^d}\Tr\left(M_{AB} \left(\phi^{\otimes k}\otimes \psi^{\otimes k}-\phi^{\otimes k}\otimes \phi^{\otimes k}\right)\right)\\
        &=\Tr\left(M_{AB} \left(\frac{\psym{d}{k}}{\binom{d+k-1}{k}}\otimes \mmix-\E_{\phi\sim\mathbb{C}^d}\phi^{\otimes k}\otimes \phi^{\otimes k}\right)\right)\\
        &=\sum_i \frac{\Tr(M_i\psym{d}{k})}{\binom{d+k-1}{k}}\Tr(N_i(\mmix-\rho_i)),
    \end{split}
\end{equation}
where $\rho_i$ is Bob's effective post-measurement state in ``case 1" after seeing $i$,
\begin{equation}
    \rho_i:=\frac{\binom{d+k-1}{k}}{\Tr(M_i\psym{d}{k})}\E_{\phi\sim\mathbb{C}^d}\Tr(M_i \ketbra{\phi}^{\otimes k})\ketbra{\phi}^{\otimes k}.
\end{equation}
In the following we show that $\rho_i$ is always close to $\mmix$ in max-relative entropy for an arbitrary $M_i$ when $k=o(\sqrt{d})$, which implies that Bob cannot distinguish which case they are in.

To achieve this we build a connection between DIPE and optimal cloning, a well studied task in quantum information. Note that Bob's post-measurement state in ``case 1" can be written as
\begin{equation}
    \rho_i=\frac{\binom{d+k-1}{k}}{\Tr(\psym{d}{k}M_i\psym{d}{k})}\E_{\phi\sim\mathbb{C}^d}\Tr(\psym{d}{k}M_i\psym{d}{k} \ketbra{\phi}^{\otimes k})\ketbra{\phi}^{\otimes k}.
\end{equation}
Define the following operator which can be viewed as a density matrix on $\ssym{d}{k}$,
\begin{equation}
    \tau:=\frac{\psym{d}{k}M_i\psym{d}{k}}{\Tr(\psym{d}{k}M_i\psym{d}{k})}
\end{equation}
and the ``measure-and-prepare" channel
\begin{equation}
    \mathrm{MP}(\rho):=\binom{d+k-1}{k}\E_{\phi\sim\mathbb{C}^d}\Tr(\rho\cdot \ketbra{\phi}^{\otimes k})\ketbra{\phi}^{\otimes k},
\end{equation}
then Bob's state can be viewed as the output of $\mathrm{MP}$ acting on the measurement operator, $\rho_{\mathrm{Bob}}=\mathrm{MP}(\tau)$. Analogous to the fact that the standard POVM is the optimal POVM for tomography, it is shown that $\mathrm{MP}$ is the optimal measure-and-prepare channel that maps states of the form $\ket{\psi}^{\otimes k}$ to an approximation of $\ket{\psi}^{\otimes k}$~\cite{harrow2013church}. Our goal here is to show that when $k$ is small, the output state of $\mathrm{MP}$ is always close to the maximally mixed state $\mmix$, regardless of the input.

The cloning channel is defined in a somewhat similar spirit:
\begin{equation}
    \mathrm{Clone}_{s\to k}(\rho):=\frac{\binom{d+s-1}{s}}{\binom{d+k-1}{k}}\psym{d}{k}\left(\rho\otimes I^{\otimes k-s}\right)\psym{d}{k}.
\end{equation}
Here the input state $\rho$ is supported on $(\mathbb{C}^d)^{\otimes s}$ and output is supported on $(\mathbb{C}^d)^{\otimes k}$. This channel is the optimal channel that maps states of the form $\ket{\psi}^{\otimes s}$ to an approximation of $\ket{\psi}^{\otimes k}$. When $s=0$ this channel prepares the maximally mixed state $\mmix$. Chiribella showed a remarkable connection between the above two channels.

\begin{lemma}[Chiribella~\cite{Chiribella2011on}, also see~\cite{harrow2013church}]
The optimal measure-and-prepare channel is a convex combination of optimal cloning channels: for any density matrix $\rho$ on $\ssym{d}{k}$,
\begin{equation}
    \mathrm{MP}(\rho)=\sum_{s=0}^k \frac{\binom{k}{s}\binom{d+k-1}{k-s}}{\binom{d+2k-1}{k}}\mathrm{Clone}_{s\to k}(\Tr_{k-s}\rho).
\end{equation}
\end{lemma}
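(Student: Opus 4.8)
The plan is to reinterpret both sides as $U(d)$-covariant linear maps on $\mathrm{End}(\ssym{d}{k})$ and then reduce the claimed operator identity, via Schur's lemma, to a scalar polynomial identity in one variable. First I would record that both maps are covariant under the diagonal conjugation action $X\mapsto U^{\otimes k}X(U^\dagger)^{\otimes k}$ of $U(d)$. For $\mathrm{MP}$ this is immediate from the invariance of the Haar measure under $\ket\phi\mapsto U\ket\phi$, giving $\mathrm{MP}(U^{\otimes k}\rho(U^\dagger)^{\otimes k})=U^{\otimes k}\mathrm{MP}(\rho)(U^\dagger)^{\otimes k}$; for each map $\rho\mapsto\mathrm{Clone}_{s\to k}(\Tr_{k-s}\rho)$ it follows since $\psym{d}{k}$ commutes with $U^{\otimes k}$ and the partial trace is covariant. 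I would also note that $\mathrm{MP}(\rho)$ depends only on $\psym{d}{k}\rho\psym{d}{k}$ (because each $\ketbra\phi^{\otimes k}$ is symmetric), so both sides really are maps from $\mathrm{End}(\ssym{d}{k})$ to itself.

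Next I would invoke representation theory to fix the dimension of the space of covariant maps. Since $\ssym{d}{k}$ is the irreducible $U(d)$-module $V_{(k,0,\dots,0)}$, the conjugation action on $\mathrm{End}(\ssym{d}{k})\cong\ssym{d}{k}\otimes\overline{\ssym{d}{k}}$ decomposes (for $d\ge 2$) into the $k+1$ pairwise inequivalent irreps with highest weights $(m,0,\dots,0,-m)$, $m=0,\dots,k$, each of multiplicity one. By Schur's lemma the commutant, i.e. the space of covariant maps $\mathrm{End}(\ssym{d}{k})\to\mathrm{End}(\ssym{d}{k})$, is therefore exactly $(k+1)$-dimensional. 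The $k+1$ covariant maps $\rho\mapsto\mathrm{Clone}_{s\to k}(\Tr_{k-s}\rho)$ are linearly independent (their coherent-state matrix elements, computed below, are proportional to the independent functions $t^s$), so they form a basis. Because $\mathrm{MP}$ is covariant, it is a unique combination $\mathrm{MP}=\sum_{s=0}^{k}c_s\,\mathrm{Clone}_{s\to k}\circ\Tr_{k-s}$, and the whole lemma reduces to identifying the constants $c_s$.

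To extract the $c_s$ I would pair both sides against coherent states, evaluating $\Tr[\,\cdot\,\ketbra{b}^{\otimes k}]$ on inputs $\rho=\ketbra{a}^{\otimes k}$, and set $t:=\left|\braket{a}{b}\right|^2$. On the clone side, since $\ketbra{b}^{\otimes k}$ is already symmetric and $\Tr_{k-s}\ketbra{a}^{\otimes k}=\ketbra{a}^{\otimes s}$, one gets $\Tr[\mathrm{Clone}_{s\to k}(\ketbra{a}^{\otimes s})\ketbra{b}^{\otimes k}]=\frac{\binom{d+s-1}{s}}{\binom{d+k-1}{k}}\,t^{s}$. On the $\mathrm{MP}$ side, $\Tr[\mathrm{MP}(\ketbra{a}^{\otimes k})\ketbra{b}^{\otimes k}]=\binom{d+k-1}{k}\E_{\phi}\left|\braket{\phi}{a}\right|^{2k}\left|\braket{\phi}{b}\right|^{2k}$, which depends only on $t$ by invariance of the Haar measure; using $\E_{\phi}\ketbra{\phi}^{\otimes 2k}=\psym{d}{2k}/\binom{d+2k-1}{2k}$ (Lemma~\ref{lemma:symprojector} with $2k$ copies) this equals $\frac{\binom{d+k-1}{k}}{\binom{d+2k-1}{2k}}\Tr[\psym{d}{2k}(\ketbra{a}^{\otimes k}\otimes\ketbra{b}^{\otimes k})]$. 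Since $\{t^s\}_{s=0}^{k}$ are linearly independent, matching the two sides as polynomials in $t$ determines every $c_s$, and equality of the maps follows from the covariant basis expansion.

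The remaining and main technical step is to evaluate this left-hand polynomial explicitly. Expanding $\psym{d}{2k}=\frac{1}{(2k)!}\sum_{\pi\in S_{2k}}P_d(\pi)$ and pairing with $\ket{a}^{\otimes k}\otimes\ket{b}^{\otimes k}$, each permutation contributes $\left|\braket{a}{b}\right|^{2j}=t^{j}$, where $j$ is the number of first-block tensor factors that $\pi$ maps into the second block (the cross-block matchings pair one $\braket{a}{b}$ with one $\braket{b}{a}$); grouping permutations by $j$ turns the trace into a degree-$k$ polynomial in $t$ with explicit binomial coefficients. The heart of the proof is this combinatorial count, after which matching the coefficient of $t^s$ against $c_s\cdot\frac{\binom{d+s-1}{s}}{\binom{d+k-1}{k}}$ and simplifying the binomials yields $c_s=\frac{\binom{k}{s}\binom{d+k-1}{k-s}}{\binom{d+2k-1}{k}}$, exactly as claimed. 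I expect the bookkeeping in this permutation enumeration—correctly counting how $\pi$ interleaves the two blocks and collecting the normalizations—to be where essentially all of the work lies, the representation-theoretic reduction above being the conceptually clean part.
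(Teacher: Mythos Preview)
The paper does not prove this lemma at all: it is quoted from Chiribella~\cite{Chiribella2011on} (and Harrow~\cite{harrow2013church}) and then immediately used as a black box in the proof of Lemma~\ref{lemma:dmaxcloseness}. So there is nothing to compare your argument against in the paper itself.

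Your proposed proof is correct. The covariance of both sides is clear, and for $d\ge 2$ the decomposition of $\ssym{d}{k}\otimes\overline{\ssym{d}{k}}$ into the $k+1$ mutually inequivalent $U(d)$-irreps with highest weights $(m,0,\dots,0,-m)$, $m=0,\dots,k$, is a standard fact, so Schur's lemma indeed pins the commutant to dimension $k+1$. Your coherent-state computation is also right: with $t=\left|\braket{a}{b}\right|^2$ one has $\Tr[\mathrm{Clone}_{s\to k}(\ketbra{a}^{\otimes s})\ketbra{b}^{\otimes k}]=\tfrac{\binom{d+s-1}{s}}{\binom{d+k-1}{k}}t^{s}$, while counting permutations $\pi\in S_{2k}$ by the number $j$ of first-block positions they send into the second block gives $\binom{k}{j}^{2}(k!)^{2}$ permutations contributing $t^{j}$, hence
\[
\Tr\!\big[\psym{d}{2k}(\ketbra{a}^{\otimes k}\otimes\ketbra{b}^{\otimes k})\big]=\frac{1}{\binom{2k}{k}}\sum_{j=0}^{k}\binom{k}{j}^{2}t^{j}.
\]
Matching coefficients and using $\binom{d+2k-1}{2k}\binom{2k}{k}=\binom{d+k-1}{k}\binom{d+2k-1}{k}$ together with $\binom{d+k-1}{k}\binom{k}{s}=\binom{d+s-1}{s}\binom{d+k-1}{k-s}$ yields exactly $c_s=\binom{k}{s}\binom{d+k-1}{k-s}/\binom{d+2k-1}{k}$.

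One remark: the Schur-lemma step, while elegant, is not strictly needed. Because the $\ket{a}^{\otimes k}$ span $\ssym{d}{k}$, polarization shows that a linear map on $\mathrm{End}(\ssym{d}{k})$ is determined by the values $\bra{b}^{\otimes k}\Phi(\ketbra{a}^{\otimes k})\ket{b}^{\otimes k}$; so once your coherent-state identity is verified for all $a,b$, equality of the two maps follows directly, without appealing to the irreducible decomposition. Either route is fine.
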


Intuitively, when $k$ is small the performance of $\mathrm{MP}$ should be bad, and therefore the output should contain little information about the input and be close to $\mmix$. Chiribella's theorem can be used to formally establish this closeness in max-relative entropy.

\begin{lemma}\label{lemma:dmaxcloseness}
Let $M$ be an arbitrary positive semi-definite Hermitian operator on $(\mathbb{C}^d)^{\otimes k}$. Then Bob's post-measurement state corresponding to $M$ satisfies
\begin{equation}
    \rho_{\mathrm{Bob}}=\frac{\binom{d+k-1}{k}}{\Tr(M\psym{d}{k})}\E_{\phi\sim\mathbb{C}^d}\Tr(M \ketbra{\phi}^{\otimes k})\ketbra{\phi}^{\otimes k}\geq e^{-k^2/d}\mmix.
\end{equation}
\end{lemma}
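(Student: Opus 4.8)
The plan is to route everything through the single $s=0$ term of Chiribella's decomposition of $\mathrm{MP}$. First I would check that the right-hand side of the claimed inequality is literally $\mathrm{MP}(\tau)$ for the normalized input $\tau := \psym{d}{k}M\psym{d}{k}/\Tr(\psym{d}{k}M\psym{d}{k})$. Since $\ketbra{\phi}^{\otimes k}$ lives in $\ssym{d}{k}$, we have $\Tr(M\ketbra{\phi}^{\otimes k})=\Tr(\psym{d}{k}M\psym{d}{k}\ketbra{\phi}^{\otimes k})$ and $\Tr(M\psym{d}{k})=\Tr(\psym{d}{k}M\psym{d}{k})$, so $\tau$ is a genuine density matrix on $\ssym{d}{k}$ and $\rho_{\mathrm{Bob}}=\mathrm{MP}(\tau)$.

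Next I would invoke Chiribella's lemma to write
\[
\mathrm{MP}(\tau) = \sum_{s=0}^k \frac{\binom{k}{s}\binom{d+k-1}{k-s}}{\binom{d+2k-1}{k}}\,\mathrm{Clone}_{s\to k}(\Tr_{k-s}\tau).
\]
Each cloning channel $\mathrm{Clone}_{s\to k}(X)=\frac{\binom{d+s-1}{s}}{\binom{d+k-1}{k}}\psym{d}{k}(X\otimes I^{\otimes k-s})\psym{d}{k}$ is completely positive, since conjugation by the Hermitian projector $\psym{d}{k}$ preserves positive semidefiniteness and $\Tr_{k-s}\tau\geq 0$. Hence every summand is a positive semidefinite operator, and discarding the terms with $s\geq 1$ can only decrease the operator in the Loewner order. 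Keeping only $s=0$, and using $\Tr_k\tau=1$ together with $\mathrm{Clone}_{0\to k}(1)=\frac{1}{\binom{d+k-1}{k}}\psym{d}{k}=\mmix$, this yields
\[
\rho_{\mathrm{Bob}}=\mathrm{MP}(\tau)\ \geq\ \frac{\binom{d+k-1}{k}}{\binom{d+2k-1}{k}}\,\mmix.
\]

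It then remains only to bound the scalar coefficient. Writing it as a telescoping product and using $\log(1+x)\leq x$,
\[
\log\frac{\binom{d+k-1}{k}}{\binom{d+2k-1}{k}}=\sum_{j=0}^{k-1}\log\frac{d+j}{d+k+j}=-\sum_{j=0}^{k-1}\log\!\left(1+\frac{k}{d+j}\right)\geq -\sum_{j=0}^{k-1}\frac{k}{d+j}\geq -\frac{k^2}{d},
\]
so the coefficient is at least $e^{-k^2/d}$, giving $\rho_{\mathrm{Bob}}\geq e^{-k^2/d}\mmix$ as required. I do not anticipate a genuine obstacle: the whole argument is carried by the $s=0$ term of Chiribella's identity, and the only point that needs care is the legitimacy of dropping the other terms, which is immediate from complete positivity of the cloning channels; the binomial estimate at the end is entirely elementary.
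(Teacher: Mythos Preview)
Your proposal is correct and follows essentially the same approach as the paper: identify $\rho_{\mathrm{Bob}}=\mathrm{MP}(\tau)$, apply Chiribella's decomposition, drop all $s\geq 1$ terms by positivity, and bound the resulting binomial ratio by $e^{-k^2/d}$. You supply a bit more detail than the paper (the explicit justification for $\rho_{\mathrm{Bob}}=\mathrm{MP}(\tau)$ via the projector identity, the complete-positivity argument for dropping terms, and the telescoping $\log(1+x)\leq x$ computation), but the argument is the same.
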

\begin{proof}
Let $\tau=\frac{\psym{d}{k}M\psym{d}{k}}{\Tr(\psym{d}{k}M\psym{d}{k})}$ be a density matrix on $\ssym{d}{k}$. Then
\begin{equation}
    \begin{split}
        \rho_{\mathrm{Bob}}&=\mathrm{MP}(\tau)\\
        &=\sum_{s=0}^k \frac{\binom{k}{s}\binom{d+k-1}{k-s}}{\binom{d+2k-1}{k}}\mathrm{Clone}_{s\to k}(\Tr_{k-s}\tau)\\
        &\geq \frac{\binom{d+k-1}{k}}{\binom{d+2k-1}{k}}\mmix=\frac{(d+k-1)\cdots d}{(d+2k-1)\cdots (d+k)}\mmix\geq e^{-k^2/d}\mmix.
    \end{split}
\end{equation}
Here in the first inequality we simply discard all the (positive) terms that correspond to $s\neq 0$.
\end{proof}

It's easy to see that Lemma~\ref{lemma:dmaxcloseness} implies $\Omega(\sqrt{d})$ lower bound against arbitrary one-way protocols. Below we show that it also provides the key component for the lower bound against arbitrary interactive protocols. In fact we prove a stronger result, which works against arbitrary separable measurements.

\begin{definition}[Separable measurements]
A two-outcome POVM $\{M,I-M\}$ is separable, denoted as $\{M,I-M\}\in\mathrm{SEP}$, if and only if it satisfies
\begin{equation}
    M=\sum_t A_t\otimes B_t,\,\,\,\,I-M=\sum_t A_t'\otimes B_t',\,\,\,\,A_t,B_t,A_t',B_t'\geq 0.
\end{equation}
Here $A_t,A_t'$ are Hermitian operators supported on Alice's system and $B_t,B_t'$ are Hermitian operators supported on Bob's system.
\end{definition}

It is well-known that separable operations strictly contains LOCC operations -- that can be implemented by an interactive protocol~\cite{Bennett1999quantum,Chitambar2009Nonlocal}, and it is a common practice to use SEP as a larger approximation of LOCC due to its simple form. In the following we prove a $\Omega(\sqrt{d})$ lower bound against SEP.

\begin{theorem}
$k=\Omega(\sqrt{d})$ copies is necessary for Alice and Bob to solve DIPE, even when they are allowed arbitrary interactive protocols (or arbitrary LOCC operations), or more generally, arbitrary separable operations.
\end{theorem}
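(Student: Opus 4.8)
The plan is to reduce the problem against arbitrary separable measurements to the max-relative entropy bound of Lemma~\ref{lemma:dmaxcloseness}. Recall that any separable ``case 2'' operator can be written as $M_{AB}=\sum_t A_t\otimes B_t$ with $A_t,B_t\geq 0$. First I would plug this decomposition into the DIPE distinguishing condition~\eqref{eq:dipedistinguish} and expand the two expectations. Using Lemma~\ref{lemma:symprojector} to evaluate $\E_{\psi}\psi^{\otimes k}=\mmix\cdot\binom{d+k-1}{k}^{-1}\psym{d}{k}\cdot\binom{d+k-1}{k}$ on Bob's side for the product-state term, and recognizing $\E_{\phi}\phi^{\otimes k}\otimes\phi^{\otimes k}$ for the same-state term, the condition becomes a sum over $t$ of terms comparing Bob's maximally mixed state $\mmix$ against his updated state. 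Concretely, I expect to arrive at
\begin{equation}
    \frac{1}{3}\leq\sum_t \frac{\Tr(A_t\psym{d}{k})}{\binom{d+k-1}{k}}\Tr\!\left(B_t\left(\mmix-\rho_t\right)\right),
\end{equation}
where $\rho_t$ is exactly the ``measure-and-prepare'' output associated with the Alice-side operator $A_t$, matching the form appearing in Lemma~\ref{lemma:dmaxcloseness} with $M=A_t$.

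The key step is then to apply Lemma~\ref{lemma:dmaxcloseness} to each $\rho_t$, which gives $\rho_t\geq e^{-k^2/d}\mmix$ for every $t$ regardless of $A_t$. This means $\mmix-\rho_t\leq (1-e^{-k^2/d})\mmix$, so each term $\Tr(B_t(\mmix-\rho_t))$ is bounded above by $(1-e^{-k^2/d})\Tr(B_t\mmix)$ since $B_t\geq 0$. I would then substitute this bound back into the sum. To control the resulting expression I need the normalization coming from the fact that $\{M_{AB},I-M_{AB}\}$ is a genuine POVM: writing $I-M_{AB}=\sum_t A_t'\otimes B_t'$ and using completeness, the weights $\frac{\Tr(A_t\psym{d}{k})}{\binom{d+k-1}{k}}\Tr(B_t\mmix)$ should sum to at most a constant (they correspond to the total acceptance probability in ``case 2,'' which is at most $1$). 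Combining these observations collapses the right-hand side to $(1-e^{-k^2/d})$ times a bounded quantity, yielding
\begin{equation}
    \frac{1}{3}\leq O\!\left(1-e^{-k^2/d}\right).
\end{equation}

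Finally I would extract the lower bound from this inequality. Since $1-e^{-k^2/d}\leq k^2/d$, the inequality forces $k^2/d=\Omega(1)$, i.e. $k=\Omega(\sqrt{d})$, completing the proof. Because LOCC operations (and hence arbitrary interactive protocols) are contained in SEP, the bound against separable measurements automatically implies the same bound in the interactive setting, which is the claimed strengthening.

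I expect the main obstacle to be the bookkeeping in the first step: making sure the POVM completeness relation is used correctly to bound the total weight, and verifying that restricting the operators to the symmetric subspace (replacing $A_t$ by $\psym{d}{k}A_t\psym{d}{k}$, as justified earlier in the excerpt) does not change any of the traces against symmetric input states. The conceptual heart is genuinely the observation --- emphasized in the text --- that \emph{max-relative} entropy closeness, rather than mere trace-distance closeness, is what survives the sum over the separable decomposition: each term gets multiplied by a nonnegative weight and the uniform multiplicative bound $\rho_t\geq e^{-k^2/d}\mmix$ passes cleanly through the positive operators $B_t$, whereas a trace-distance bound on individual $\rho_t$ would not compose this way.
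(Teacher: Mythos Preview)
Your proposal is correct and follows essentially the same approach as the paper: both expand the separable decomposition, apply Lemma~\ref{lemma:dmaxcloseness} termwise to obtain a uniform multiplicative comparison between $\rho_t$ and $\mmix$, and then use that $\{M_{AB},I-M_{AB}\}$ is a POVM to bound the remaining sum by $1$. The only cosmetic difference is that the paper writes the bound as $\Tr(B_t\mmix)\leq e^{k^2/d}\Tr(B_t\rho_t)$ and ends up with the case-1 acceptance probability times $(e^{k^2/d}-1)$, whereas you write $\mmix-\rho_t\leq(1-e^{-k^2/d})\mmix$ and end up with the case-2 acceptance probability times $(1-e^{-k^2/d})$; these are equivalent and give the same $k=\Omega(\sqrt{d})$.
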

\begin{proof}
Let $\{M,I-M\}\in\mathrm{SEP}$, where $M=\sum_t A_t\otimes B_t$ corresponds to ``case 2". For Alice and Bob to successfully solve DIPE, $M$ satisfies
\begin{equation}
    \begin{split}
        \frac{1}{3}&\leq \E_{\phi,\psi\sim\mathbb{C}^d}\Tr\left(M \left(\phi^{\otimes k}\otimes \psi^{\otimes k}-\phi^{\otimes k}\otimes \phi^{\otimes k}\right)\right)\\
        &=\sum_t\E_{\phi,\psi\sim\mathbb{C}^d}\Tr(A_t \phi^{\otimes k})\Tr(B_t \psi^{\otimes k})-\E_{\phi\sim\mathbb{C}^d}\Tr(M\left(\phi^{\otimes k}\otimes \phi^{\otimes k}\right))\\
        &=\sum_t\frac{\Tr(A_t \psym{d}{k})}{\binom{d+k-1}{k}}\Tr(B_t \mmix)-\E_{\phi\sim\mathbb{C}^d}\Tr(M\left(\phi^{\otimes k}\otimes \phi^{\otimes k}\right)).
    \end{split}
\end{equation}
Using Lemma~\ref{lemma:dmaxcloseness} we have
\begin{equation}
    \mmix\leq e^{k^2/d} \frac{\binom{d+k-1}{k}}{\Tr(A_t\psym{d}{k})}\E_{\phi\sim\mathbb{C}^d}\Tr(A_t \ketbra{\phi}^{\otimes k})\ketbra{\phi}^{\otimes k},
\end{equation}
which implies that
\begin{equation}
\begin{split}
    \Tr(B_t\mmix)&\leq e^{k^2/d} \frac{\binom{d+k-1}{k}}{\Tr(A_t\psym{d}{k})}\E_{\phi\sim\mathbb{C}^d}\Tr(A_t \ketbra{\phi}^{\otimes k})\Tr(B_t\ketbra{\phi}^{\otimes k})\\
    &=e^{k^2/d} \frac{\binom{d+k-1}{k}}{\Tr(A_t\psym{d}{k})}\E_{\phi\sim\mathbb{C}^d}\Tr((A_t\otimes B_t)\phi^{\otimes k}\otimes \phi^{\otimes k}).
\end{split}
\end{equation}
Therefore,
\begin{equation}
    \begin{split}
        \frac{1}{3}&\leq \sum_t\frac{\Tr(A_t \psym{d}{k})}{\binom{d+k-1}{k}}\Tr(B_t \mmix)-\E_{\phi\sim\mathbb{C}^d}\Tr(M\left(\phi^{\otimes k}\otimes \phi^{\otimes k}\right))\\
        &\leq \sum_t e^{k^2/d} \E_{\phi\sim\mathbb{C}^d}\Tr((A_t\otimes B_t)\phi^{\otimes k}\otimes \phi^{\otimes k})-\E_{\phi\sim\mathbb{C}^d}\Tr(M\left(\phi^{\otimes k}\otimes \phi^{\otimes k}\right))\\
        &=\left(e^{k^2/d}-1\right)\E_{\phi\sim\mathbb{C}^d}\Tr(M\left(\phi^{\otimes k}\otimes \phi^{\otimes k}\right))\\
        &\leq e^{k^2/d}-1,
    \end{split}
\end{equation}
which implies that $k=\Omega(\sqrt{d})$.
\end{proof}

\subsection{Understanding a one-way protocol for DIPE}
\label{sec:onewaylowerbound}

Next we analyze the following one-way protocol which is a natural generalization of the algorithms presented in Section~\ref{sec:multicopy} (restated for convenience):
\begin{enumerate}
    \item Alice measures all copies of her state with the standard POVM, gets result $u$, and sends to Bob.
    \item Conditioned on $u$, Bob performs a global two-outcome measurement $\{M_u,\psym{d}{k}-M_u\}$ on all copies of his state, and decides ``case 2" if he sees $M_u$, and ``case 1" otherwise.
\end{enumerate}
The reason for presenting this analysis is the following utility compared with the previous section. First, here we are able to exactly identify Bob's optimal measurement strategy for the above protocol, which provides intuition on how to think about the problem in an interactive setting. Second, the analysis implies a $\Omega(\sqrt{d})$ lower bound for this protocol without using Chiribella's theorem.

Let $M_{AB}$ be the measurement operator implemented by Alice and Bob that corresponds to ``case 2". Then
\begin{equation}
    M_{AB}=\E_{u\sim\mathbb{C}^d}\binom{d+k-1}{k}\ketbra{u}^{\otimes k}\otimes M_u.
\end{equation}
Following the discussion above, $M_{AB}$ satisfies
\begin{equation}\label{eq:onewaydistinguish}
    \begin{split}
    \frac{1}{3}&\leq \E_{\phi,\psi\sim\mathbb{C}^d}\Tr\left(M_{AB} \left(\phi^{\otimes k}\otimes \psi^{\otimes k}-\phi^{\otimes k}\otimes \phi^{\otimes k}\right)\right)\\
    &=\E_{\phi,u\sim\mathbb{C}^d}\Tr\left(M_u \mmix\right)-\binom{d+k-1}{k}\Tr(u^{\otimes k}\phi^{\otimes k})\cdot \Tr(M_u \phi^{\otimes k})\\
    &=\E_{u\sim\mathbb{C}^d}\Tr(M_u(\mmix-\rho_u)).
    \end{split}
\end{equation}
Here, $\mmix=\frac{1}{\binom{d+k-1}{k}}\psym{d}{k}$ is the maximally mixed state in $\ssym{d}{k}$, and 
\begin{equation}
\label{eq:Bobstate}
    \rho_u:=\binom{d+k-1}{k}\E_{\phi\sim\mathbb{C}^d}\ketbra{\phi}^{\otimes k}\left|\braket{u}{\phi}\right|^{2k}
\end{equation}
is Bob's post-measurement state in ``case 1" after seeing $u$. Identifying Bob's optimal measurement is equivalent to computing the trace distance between $\rho_u$ and $\mmix$, as
\begin{equation}\label{eq:tracedistance2}
    \max_{0\leq M_u\leq \psym{d}{k}}\Tr(M_u(\mmix-\rho_u))=\frac{1}{2}\|\rho_u-\mmix\|_1.
\end{equation}
Therefore we need to analyze the spectrum of $\rho_u$ and compare with the uniform distribution $\mmix$.

We show that the state $\rho_u$ has a nice block-diagonal structure. For any fixed state $\ket{u}\in\mathbb{C}^d$, define the following subspaces
\begin{equation}
    W_u^t=\mathrm{span}\left\{\psym{d}{k}\ket{u}^{\otimes t}\otimes \ket{v}^{\otimes k-t}:\ket{v}\in\mathbb{C}^d, \braket{v}{u}=0\right\},
\end{equation}
and let $\Pi_u^t$ be the orthogonal projector onto $W_u^t$. Clearly, each $W_u^t$ ($t=0,\dots,k$) is a subspace of $\ssym{d}{k}$, as we are applying the projector $\psym{d}{k}$ on all states in the definition, and the subspaces $W_u^t$ are mutually orthogonal. In fact, we show that the symmetric subspace $\ssym{d}{k}$ can be decomposed as a direct sum of $W_u^t$ as
\begin{equation}
    \ssym{d}{k}=W_u^0\oplus \cdots\oplus W_u^k.
\end{equation}
Let's denote $\ket{u}$ as $\ket{0}$. To prove this, we use a characterization of $\ssym{d}{k}$ (Definition~\ref{def:symmetricsubspace}) as
\begin{equation}
    \begin{split}
    \ssym{d}{k}&=\vspan\left\{\sum_{i:T(i)=\ell}\ket{i_1,\dots,i_k}:\ell=(\ell_0,\dots,\ell_{d-1}),\ell_j\geq 0,\sum_{j=0}^{d-1}\ell_j=k\right\}\\
    &=\bigoplus_{t=0}^k\vspan\left\{\sum_{i:T(i)=\ell}\ket{i_1,\dots,i_k}:\ell=(\ell_0,\dots,\ell_{d-1}),\ell_j\geq 0,\ell_0=t,\sum_{j=0}^{d-1}\ell_j=k\right\}\\
    &=\bigoplus_{t=0}^k W_u^t.
    \end{split}
\end{equation}
Using the above characterization it is easy to show that $\mathrm{dim}W_u^t=\binom{d+k-t-2}{k-t}$.
We show that Bob's state $\rho_u$ is block-diagonal in the subspaces $W_u^t$, and furthermore is uniform in each of the subspaces. Therefore $\rho_u=\sum_t \beta_t \Pi_u^t$ with coefficients $\beta_t>0$. 

\begin{lemma}\label{lemma:blockdiagonal}
For any $\ket{u}\in\mathbb{C}^d$, Bob's post-measurement state $\rho_u$ (Eq.~\eqref{eq:Bobstate}) has the following block-diagonal form
\begin{equation}
    \rho_u=\sum_{t=0}^k \beta_t \Pi_u^t,\,\,\,\,\beta_t=\frac{(k+t)(k+t-1)\cdots (t+1)}{(d+2k-1)(d+2k-2)\cdots (d+k)}.
\end{equation}
\end{lemma}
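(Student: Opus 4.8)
The plan is to compute the action of $\rho_u$ on states in $W_u^t$ and verify two things: first that $\rho_u$ preserves each subspace $W_u^t$ (block-diagonality), and second that it acts as a scalar $\beta_t$ within each block (uniformity). Since $\rho_u=\binom{d+k-1}{k}\E_{\phi}\ketbra{\phi}^{\otimes k}\left|\braket{u}{\phi}\right|^{2k}$ and we have set $\ket{u}=\ket{0}$, the weight $\left|\braket{0}{\phi}\right|^{2k}=\alpha^{2k}$ in the decomposition $\ket{\phi}=\alpha e^{i\theta}\ket{0}+\sqrt{1-\alpha^2}\ket{v}$ of Lemma~\ref{lemma:decomposition} makes this a natural place to apply that decomposition. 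First I would substitute this decomposition into $\ketbra{\phi}^{\otimes k}$, expand the tensor power, and integrate out the independent random phase $\theta$ and the random vector $\ket{v}$ in $\mathbb{C}^{d-1}_{0_\perp}$.

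The key structural observation is that $\ket{\phi}^{\otimes k}=(\alpha e^{i\theta}\ket{0}+\sqrt{1-\alpha^2}\ket{v})^{\otimes k}$ expands, after symmetrization, into a sum over $t$ of terms proportional to $\psym{d}{k}\ket{0}^{\otimes t}\otimes\ket{v}^{\otimes(k-t)}$, each carrying a phase $e^{i t\theta}$ and a magnitude $\alpha^t(1-\alpha^2)^{(k-t)/2}$. When forming $\ketbra{\phi}^{\otimes k}$ and averaging over $\theta$, only the diagonal terms (matching powers of $e^{i\theta}$ from ket and bra) survive, which immediately forces block-diagonality: $\rho_u$ has no matrix elements between $W_u^t$ and $W_u^{t'}$ for $t\neq t'$. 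This is the cleanest way to see the direct-sum structure operationally rather than just dimensionally.

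Within a fixed block, averaging over the random vector $\ket{v}$ in $\mathbb{C}^{d-1}_{0_\perp}$ produces $\E_v\ketbra{v}^{\otimes(k-t)}$, which by Lemma~\ref{lemma:symprojector} is proportional to the symmetric projector on $\ssym{d-1}{k-t}$, i.e. a multiple of the identity on $W_u^t$. This establishes uniformity, so $\rho_u=\sum_t\beta_t\Pi_u^t$. To pin down $\beta_t$ I would collect the constant prefactors: the binomial $\binom{d+k-1}{k}$, the multinomial coefficient $\binom{k}{t}$ from the expansion, the moment $\E[\alpha^{2k}\cdot\alpha^{2t}(1-\alpha^2)^{k-t}]=\E[\alpha^{2(k+t)}(1-\alpha^2)^{k-t}]$ computed against the density of $\alpha^2$ from Lemma~\ref{lemma:decomposition} (a Beta integral giving a ratio of Gamma/binomial factors), and the normalization of $\E_v\ketbra{v}^{\otimes(k-t)}$. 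Multiplying these and simplifying should telescope into the stated $\beta_t=\frac{(k+t)(k+t-1)\cdots(t+1)}{(d+2k-1)(d+2k-2)\cdots(d+k)}$.

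The main obstacle I anticipate is bookkeeping: correctly tracking the combinatorial prefactors so that the normalization constants from the standard-POVM weight, the binomial expansion, the Beta-integral moments of $\alpha$, and the dimension $\binom{d+k-t-2}{k-t}$ of $W_u^t$ all combine cleanly. A useful consistency check at the end is to verify $\Tr(\rho_u)=\sum_t\beta_t\,\mathrm{dim}\,W_u^t=1$, which should hold as a Vandermonde-type identity and serves to catch any misplaced factor. The conceptual content—phase-averaging giving block-diagonality and rotational invariance in the orthogonal complement giving uniformity—is straightforward; the risk is purely in the arithmetic.
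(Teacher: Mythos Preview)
Your proposal is correct and follows essentially the same approach as the paper's proof: decompose $\ket{\phi}$ via Lemma~\ref{lemma:decomposition}, average over the phase $\theta$ to obtain block-diagonality (the paper packages this step as Lemma~\ref{lemma:lowerboundrandomphase}, yielding the factor $\binom{k}{t}^2$), average over $\ket{v}$ to get a multiple of the projector onto $W_u^t$, and finally evaluate the Beta integral in $\alpha$ to extract $\beta_t$. The only thing to watch in the bookkeeping is that one factor of $\binom{k}{t}$ is absorbed into the identity $\Pi_u^t=\binom{k}{t}\,\psym{d}{k}(\ketbra{0}^{\otimes t}\otimes\Pi_{\mathrm{sym},0}^{d-1,k-t})\psym{d}{k}$, which the paper verifies explicitly on the type basis; once that is in hand the remaining product telescopes exactly as you anticipate.
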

The proof is presented in Appendix~\ref{app:blockdiagonal}. Meanwhile, the maximally mixed state can be written as
\begin{equation}
    \mmix=\frac{1}{\binom{d+k-1}{k}}\psym{d}{k}=\frac{1}{\binom{d+k-1}{k}}\sum_{t=0}^k \Pi_u^t.
\end{equation}
As the above decompositions for $\rho_u$ and $\mmix$ have the same block-diagonal structure, it's easy to calculate the trace distance between $\rho_u$ and $\mmix$ as well as the optimal measurement to distinguish them. 

\begin{lemma}
\label{lemma:tracedistance}
Suppose $k=O(\sqrt{d})$ and $k=\omega(1)$, then the optimal POVM to distinguish $\rho_u$ and $\mmix$ is given by $\{\Pi_u^0, \psym{d}{k}-\Pi_u^0\}$, and we have
\begin{equation}\label{eq:optimalmeasurement}
    \frac{1}{2}\left\|\rho_u-\mmix\right\|_1=\Tr(\Pi_u^0 (\mmix-\rho_u))=\frac{d-1}{d+k-1}-\frac{(d+k-2)\cdots(d-1)}{(d+2k-1)\cdots(d+k)}.
\end{equation}
\end{lemma}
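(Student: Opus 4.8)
The plan is to exploit the common block structure of $\rho_u$ and $\mmix$. By Lemma~\ref{lemma:blockdiagonal}, $\rho_u=\sum_{t=0}^{k}\beta_t\Pi_u^t$ acts as the scalar $\beta_t$ on the block $W_u^t$, while from the displayed form of $\mmix$ above it acts as the single scalar $\gamma:=\binom{d+k-1}{k}^{-1}$ on every block. Hence $\mmix-\rho_u$ is simultaneously block-diagonal with respect to $\ssym{d}{k}=\bigoplus_{t=0}^{k}W_u^t$, with eigenvalue $\gamma-\beta_t$ of multiplicity $\dim W_u^t=\binom{d+k-t-2}{k-t}$ on $W_u^t$. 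This reduces the entire problem to determining the signs of the scalars $\gamma-\beta_t$.

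Next I would invoke the standard fact that the maximum in Eq.~\eqref{eq:tracedistance2} is attained by the projector onto the positive part of $\mmix-\rho_u$, namely $M_u=\sum_{t:\beta_t<\gamma}\Pi_u^t$. So it suffices to show that under the hypotheses $k=O(\sqrt d)$ and $k=\omega(1)$ one has $\beta_0<\gamma$ while $\beta_t\ge\gamma$ for every $t\ge 1$, which identifies the optimal measurement as $\{\Pi_u^0,\psym{d}{k}-\Pi_u^0\}$. Since the numerator of $\beta_t$ is the product of the $k$ consecutive integers $t+1,\dots,t+k$, the ratio $\beta_{t+1}/\beta_t=(t+k+1)/(t+1)$ exceeds $1$, so $\beta_t$ is strictly increasing in $t$; it is therefore enough to verify the two inequalities $\beta_0<\gamma<\beta_1$.

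The inequality $\beta_0<\gamma$ is immediate, since $\beta_0=\binom{d+2k-1}{k}^{-1}<\binom{d+k-1}{k}^{-1}=\gamma$. The remaining inequality $\beta_1>\gamma$ is where both hypotheses enter, and it is the crux of the argument. Writing $\beta_1=(k+1)\beta_0$, the inequality $\beta_1>\gamma$ is equivalent to $k+1>\binom{d+2k-1}{k}/\binom{d+k-1}{k}=\prod_{j=0}^{k-1}\bigl(1+\tfrac{k}{d+j}\bigr)$. The right-hand side is at most $\exp\!\bigl(\sum_{j=0}^{k-1}\tfrac{k}{d+j}\bigr)\le e^{k^2/d}$, which is $O(1)$ because $k=O(\sqrt d)$, whereas the left-hand side $k+1\to\infty$ because $k=\omega(1)$; hence $\beta_1>\gamma$ for all sufficiently large $k$. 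The main obstacle is precisely pinning down that the sign change of $\gamma-\beta_t$ occurs exactly between $t=0$ and $t=1$, and this is exactly what forces both scaling assumptions simultaneously.

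Finally, with the optimal measurement identified as $\Pi_u^0$, the trace distance equals $\Tr(\Pi_u^0(\mmix-\rho_u))=\dim(W_u^0)(\gamma-\beta_0)$. Using $\dim W_u^0=\binom{d+k-2}{k}$ gives $\Tr(\Pi_u^0\mmix)=\binom{d+k-2}{k}/\binom{d+k-1}{k}=\tfrac{d-1}{d+k-1}$, and a parallel cancellation using $\beta_0=k!/\bigl((d+2k-1)\cdots(d+k)\bigr)$ gives $\Tr(\Pi_u^0\rho_u)=\tfrac{(d+k-2)\cdots(d-1)}{(d+2k-1)\cdots(d+k)}$. Subtracting the two yields the claimed expression, completing the argument.
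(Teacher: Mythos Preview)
Your proposal is correct and follows essentially the same approach as the paper: both arguments exploit the shared block-diagonal form of $\rho_u$ and $\mmix$, reduce to comparing the scalars $\beta_t$ against $\gamma=\binom{d+k-1}{k}^{-1}$, use the same estimate $\prod_{j=0}^{k-1}\frac{d+k+j}{d+j}\le e^{k^2/d}=O(1)$ to handle $t\ge 1$, and finish with the same direct evaluation of $\Tr(\Pi_u^0\mmix)$ and $\Tr(\Pi_u^0\rho_u)$. Your use of the monotonicity $\beta_{t+1}/\beta_t>1$ to reduce to the single case $t=1$ is a small cosmetic variation---the paper instead bounds all $t\ge 1$ at once via $\binom{k+t}{k}\ge k$---but the substance is identical.
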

\begin{proof}
First note that for any $m\geq 0$, $\frac{d+m}{d+k+m}\geq\frac{d}{d+k}$. This gives
\begin{equation}
    \frac{(d+k-1)\cdots d}{(d+2k-1)\cdots (d+k)}\geq \left(\frac{d}{d+k}\right)^k\geq e^{-k^2/d}.
\end{equation}

To construct the optimal distinguishing measurement, we need to compare $\beta_t$ with $\binom{d+k-1}{k}^{-1}$. For each $t=0,\dots,k$, we have
\begin{equation}\label{eq:comparespectrum}
    \beta_t\binom{d+k-1}{k}=\binom{k+t}{k}\frac{(d+k-1)\cdots d}{(d+2k-1)\cdots (d+k)}.
\end{equation}
When $t=0$, this shows that $\beta_0<\binom{d+k-1}{k}^{-1}$. When $t\geq 1$, we have
\begin{equation}
    \beta_t\binom{d+k-1}{k}\geq k\cdot \frac{(d+k-1)\cdots d}{(d+2k-1)\cdots (d+k)}\geq k\cdot e^{-k^2/d}=\omega(1).
\end{equation}
Therefore, when $d$ is large enough, $W_u^0$ is the only subspace on which $\mmix$ has larger support than $\rho_u$. This proves the first equality in Eq.~\eqref{eq:optimalmeasurement}. For the second equality, we have
\begin{equation}
    \Tr(\Pi_u^0 \mmix)=\frac{\mathrm{dim}W_u^0}{\binom{d+k-1}{k}}=\frac{\binom{d+k-2}{k}}{\binom{d+k-1}{k}}=\frac{d-1}{d+k-1},
\end{equation}
and
\begin{equation}
    \Tr(\Pi_u^0 \rho_u)=\beta_0\cdot \mathrm{dim}W_u^0=\frac{(d+k-2)\cdots(d-1)}{(d+2k-1)\cdots(d+k)}.
\end{equation}
\end{proof}

Lemma~\ref{lemma:tracedistance} implies the optimal strategy for Bob's measurement as well as a $\Omega(\sqrt{d})$ lower bound for this protocol.

\begin{theorem}
Consider a one-way protocol where Alice performs the standard measurement and sends the result to Bob. Then Bob's optimal measurement strategy conditioned on receiving $u$ is the two-outcome POVM $\{\Pi_u^0, \psym{d}{k}-\Pi_u^0\}$. Furthermore, this protocol requires $k=\Omega(\sqrt{d})$ copies for Alice and Bob to solve DIPE.
\end{theorem}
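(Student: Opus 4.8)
The plan is to read off both assertions from the block structure already established in Lemma~\ref{lemma:blockdiagonal} together with the trace-distance computation in Lemma~\ref{lemma:tracedistance}; no genuinely new machinery should be needed.

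For the first assertion I would argue as follows. By Eq.~\eqref{eq:onewaydistinguish} the protocol succeeds precisely when $\E_{u\sim\mathbb{C}^d}\Tr(M_u(\mmix-\rho_u))\geq \tfrac13$, and since Bob selects $M_u$ separately for each received outcome $u$, his best response is to maximize $\Tr(M_u(\mmix-\rho_u))$ pointwise over all $0\leq M_u\leq\psym{d}{k}$. By Eq.~\eqref{eq:tracedistance2} this pointwise maximum equals $\tfrac12\|\rho_u-\mmix\|_1$, and Lemma~\ref{lemma:tracedistance} identifies the maximizer as $\Pi_u^0$, the projector onto the unique block $W_u^0$ on which $\mmix$ has larger weight than $\rho_u$. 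Hence $\{\Pi_u^0,\psym{d}{k}-\Pi_u^0\}$ is Bob's optimal measurement, which is the first claim.

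For the lower bound I would first invoke rotational invariance: the state $\rho_{u}$ is unitarily equivalent to $\rho_{u'}$ via the unitary sending $\ket{u}\mapsto\ket{u'}$, so $\tfrac12\|\rho_u-\mmix\|_1$ does not depend on $u$ and the success condition collapses to $\tfrac12\|\rho_u-\mmix\|_1\geq\tfrac13$. It then remains to upper bound the explicit expression in Eq.~\eqref{eq:optimalmeasurement}. Writing both terms as overlaps with $W_u^0$, I have $1-\Tr(\Pi_u^0\mmix)=\frac{k}{d+k-1}$, while $\Tr(\Pi_u^0\rho_u)=\prod_{j=0}^{k-1}\frac{d-1+j}{d+k+j}$. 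Expressing each factor as $1-\frac{k+1}{d+k+j}$ and applying the Weierstrass inequality $\prod_j(1-x_j)\geq 1-\sum_j x_j$ gives $1-\Tr(\Pi_u^0\rho_u)\leq\sum_{j=0}^{k-1}\frac{k+1}{d+k+j}\leq\frac{k(k+1)}{d+k}$. Since $1-\Tr(\Pi_u^0\mmix)\geq 0$, this yields
\[
\tfrac12\|\rho_u-\mmix\|_1=\bigl(1-\Tr(\Pi_u^0\rho_u)\bigr)-\bigl(1-\Tr(\Pi_u^0\mmix)\bigr)\leq\frac{k(k+1)}{d+k}=O(k^2/d).
\]

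Combining this with the requirement $\tfrac12\|\rho_u-\mmix\|_1\geq\tfrac13$ forces $k^2/d=\Omega(1)$, i.e.\ $k=\Omega(\sqrt d)$, completing the lower bound. I do not expect a serious obstacle here: the only quantitative input is the product estimate, which is routine once the closed form of Eq.~\eqref{eq:optimalmeasurement} is in hand, and all of the real structural content has already been front-loaded into the decomposition $\ssym{d}{k}=\bigoplus_t W_u^t$ and Lemma~\ref{lemma:blockdiagonal}. The one step worth stating carefully is the rotational-invariance reduction that removes the expectation over $u$, since Lemma~\ref{lemma:tracedistance} is phrased for a fixed outcome.
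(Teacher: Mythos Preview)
Your proposal is correct and follows essentially the same route as the paper: both reduce to the pointwise maximization in Eq.~\eqref{eq:onewaydistinguish}, invoke Lemma~\ref{lemma:tracedistance} to identify $\Pi_u^0$ as the maximizer, and then bound the explicit expression in Eq.~\eqref{eq:optimalmeasurement} by $O(k^2/d)$.

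Two small presentational differences are worth noting. First, the paper prefaces the argument by assuming without loss of generality that $k=O(\sqrt d)$ and $k=\omega(1)$, which are the standing hypotheses of Lemma~\ref{lemma:tracedistance}; you invoke that lemma without recording this reduction, so you should add a sentence to that effect. Second, where the paper simply asserts that the closed form in Eq.~\eqref{eq:optimalmeasurement} is $O(k^2/d)$, you supply an explicit Weierstrass-product bound, which is a clean way to make that step self-contained. Your rotational-invariance remark is also a nice clarification: the paper leaves implicit that the expression inside the expectation over $u$ is constant in $u$.
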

\begin{proof}
Without loss of generality we assume that $k=O(\sqrt{d})$ and $k=\omega(1)$, as a lower bound that works against $k=\omega(1)$ will also work against $k=O(1)$.

Following the discussion around Eq.~\eqref{eq:onewaydistinguish}, Alice and Bob aims to maximize $\E_{u\sim\mathbb{C}^d}\Tr(M_u(\mmix-\rho_u))$. For any $\ket{u}\in\mathbb{C}^d$ we have identified the measurement that maximizes $\Tr(M_u(\mmix-\rho_u))$. Therefore the overall optimal strategy for Bob is to measure $\{\Pi_u^0, \psym{d}{k}-\Pi_u^0\}$ after seeing Alice's outcome $u$. In order for Alice and Bob to sucessfully distinguish with probability $2/3$, we have
\begin{equation}
    \begin{split}
        \frac{1}{3}&\leq \E_{u\sim\mathbb{C}^d}\Tr(M_u(\mmix-\rho_u))\\
        &\leq \E_{u\sim\mathbb{C}^d}\Tr(\Pi_u^0(\mmix-\rho_u))\\
        &=\frac{d-1}{d+k-1}-\frac{(d+k-2)\cdots(d-1)}{(d+2k-1)\cdots(d+k)}\\
        &=O(k^2/d).
    \end{split}
\end{equation}
Therefore $k=\Omega(\sqrt{d})$ is necessary for Alice and Bob to solve DIPE.
\end{proof}

\subsection{Lower bound for estimation}
\label{sec:estimationlowerbound}
Next we extend our lower bound for DIPE to a lower bound for $\ipe$. The key step is to consider the following decision problem as a lower bound instance.

\begin{problem}
\label{prob:lowerbounddistinguish}
Suppose Alice and Bob are each given $k$ copies of a pure state in $\mathbb{C}^{d+1}$. Let $\varepsilon\in(0,1)$. They are promised that one of the following two cases hold:
\begin{enumerate}
    \item Alice has $\left(\sqrt{1-\varepsilon}e^{i\theta}\ket{0}+\sqrt{\varepsilon}\ket{\phi}\right)^{\otimes k}$ and Bob has $\left(\sqrt{1-\varepsilon}e^{i\theta'}\ket{0}+\sqrt{\varepsilon}\ket{\phi}\right)^{\otimes k}$, where $\ket{\phi}$ is a uniformly random state in  $\vspan\{\ket{1},\ket{2},\dots,\ket{d}\}$, and $\theta,\theta'$ are independent random phases in $[0,2\pi]$.
    \item Alice has $\left(\sqrt{1-\varepsilon}e^{i\theta}\ket{0}+\sqrt{\varepsilon}\ket{\phi}\right)^{\otimes k}$ and Bob has $\left(\sqrt{1-\varepsilon}e^{i\theta'}\ket{0}+\sqrt{\varepsilon}\ket{\psi}\right)^{\otimes k}$, where $\ket{\phi},\ket{\psi}$ are independent random states in  $\vspan\{\ket{1},\ket{2},\dots,\ket{d}\}$, and $\theta,\theta'$ are independent random phases in $[0,2\pi]$.
\end{enumerate}
Their goal is to decide which case they are in with success probability at least $2/3$, using an interactive protocol that involves local quantum operations and classical communication.
\end{problem}

Note that in the above definition, the random states $\ket{\phi},\ket{\psi}$ are supported on $\vspan\{\ket{1},\ket{2},\dots,\ket{d}\}$ and are always orthogonal to $\ket{0}$. We have increased the dimension from $d$ to $d+1$ for convenience, which does not change the form of our lower bound. Our main result in this section is the following lower bound for this problem.

\begin{theorem}
\label{thm:epsdecisionlowerbound}
$k=\Omega(\sqrt{d}/\varepsilon)$ copies are necessary for Alice and Bob to solve Problem~\ref{prob:lowerbounddistinguish}, even when they are allowed arbitrary interactive protocols (or arbitrary LOCC operations), or more generally, arbitrary separable operations.
\end{theorem}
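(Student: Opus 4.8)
The plan is to mirror the separable-measurement lower bound proof for DIPE (the $\mathrm{SEP}$ theorem of Section~\ref{sec:interactivelowerbound}), replacing the ensemble $\{\ket{\phi}^{\otimes k}\}$ by the diluted ensemble and showing that averaging the uncontrollable phase makes the effective number of useful copies of $\ket\phi$ shrink by a factor $\varepsilon$. Write $\ket{A_\phi}=\sqrt{1-\varepsilon}e^{i\theta}\ket0+\sqrt{\varepsilon}\ket\phi$, let $\sigma_\phi:=\E_\theta\left(\ketbra{A_\phi}\right)^{\otimes k}$ be the single-party state for fixed $\phi$ after phase averaging, and set $R:=\E_\phi\sigma_\phi$. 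For a separable acceptance operator $M=\sum_t A_t\otimes B_t$ with $\{M,I-M\}\in\mathrm{SEP}$, the joint states in case~2 and case~1 are $R\otimes R$ and $\E_\phi\,\sigma_\phi\otimes\sigma_\phi$, so the success condition becomes
\begin{equation}
\frac13\le\sum_t\left(\Tr(A_t R)\Tr(B_t R)-\E_\phi\Tr(A_t\sigma_\phi)\Tr(B_t\sigma_\phi)\right).
\end{equation}
Exactly as in the DIPE argument, everything reduces to a diluted analog of Lemma~\ref{lemma:dmaxcloseness}: if I can show that for every positive semidefinite $A$,
\begin{equation}
R\le e^{O(\varepsilon^2k^2/d)}\cdot\frac{\E_\phi\Tr(A\sigma_\phi)\,\sigma_\phi}{\E_\phi\Tr(A\sigma_\phi)},
\end{equation}
then substituting $A=A_t$, $B=B_t$ into the inequality above collapses it to $\frac13\le e^{O(\varepsilon^2k^2/d)}-1$, forcing $\varepsilon^2k^2/d=\Omega(1)$, i.e. $k=\Omega(\sqrt d/\varepsilon)$.

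The structural heart is the spectral form of $\sigma_\phi$. Taking $\ket u=\ket0$ in the subspace decomposition $\ssym{d+1}{k}=\bigoplus_{t=0}^k W_0^t$ from Section~\ref{sec:onewaylowerbound}, the $\theta$-average kills every cross term between different powers of $\ket0$, so $\sigma_\phi$ is block-diagonal with respect to this decomposition:
\begin{equation}
\sigma_\phi=\sum_{j=0}^k p_j\,\ketbra{\Phi_j(\phi)},\qquad p_j=\binom kj(1-\varepsilon)^{k-j}\varepsilon^{j},
\end{equation}
where $\ket{\Phi_j(\phi)}$ is the normalized symmetrization of $\ket0^{\otimes(k-j)}\otimes\ket\phi^{\otimes j}$, supported on $W_0^{k-j}$. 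The weights $p_j$ are exactly the $\mathrm{Bin}(k,\varepsilon)$ distribution, concentrated at $j\approx\varepsilon k$ with fluctuation $\sqrt{\varepsilon k}$; this is the precise sense in which the diluted problem carries only $\approx\varepsilon k$ copies of $\ket\phi$. Moreover $W_0^{k-j}$ is isometric to $\ssym{d}{j}$, the symmetric subspace of $\vspan\{\ket1,\dots,\ket d\}$, and under this isometry $\E_\phi\ketbra{\Phi_j(\phi)}$ becomes proportional to the maximally mixed state $\mmix^{(j)}$ on the block.

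Within block $j$ the situation is precisely that of Lemma~\ref{lemma:dmaxcloseness}, but with $j$ copies in dimension $d$: restricting the measure-and-prepare channel $\mathrm{MP}$ to $W_0^{k-j}\cong\ssym{d}{j}$, the Chiribella decomposition applies, and discarding all but the $s=0$ cloning term yields a per-block bound with factor $e^{-j^2/d}$ relating $\E_\phi\Tr(A\,\ketbra{\Phi_j(\phi)})\,\ketbra{\Phi_j(\phi)}$ to $\mmix^{(j)}$. Assembling the blocks with the binomial weights $p_j$ and using that $p_j$ is negligible once $j\gg\varepsilon k+\sqrt{\varepsilon k}$, the dominant block $j\approx\varepsilon k$ contributes the worst factor $e^{-(\varepsilon k)^2/d}$, so the per-block suppressions combine into the overall $e^{-O(\varepsilon^2k^2/d)}$ of the displayed bound, completing the argument.

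The main obstacle is making this block-wise max-relative-entropy argument global and tight. A general positive semidefinite $A$ need not respect the $W_0^t$ decomposition, so I must argue that only its block-diagonal part can help Bob, and then control how the individual factors $e^{-j^2/d}$ aggregate under the cross-block normalization $\E_\phi\Tr(A\sigma_\phi)=\sum_j p_j\,\E_\phi\Tr(A\,\ketbra{\Phi_j(\phi)})$ without the large-$j$ tail spoiling the exponent. Establishing this diluted analog of Lemma~\ref{lemma:dmaxcloseness} — that $\mathrm{MP}$ restricted to each block still erases its input up to the $e^{-j^2/d}$ factor, and that the binomial mixture concentrates the effective copy count at $\varepsilon k$ — is where the real work lies; once it is in hand, the reduction to the DIPE argument and the conclusion $k=\Omega(\sqrt d/\varepsilon)$ follow immediately.
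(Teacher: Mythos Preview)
Your approach is genuinely different from the paper's, and it contains a real obstacle that you have not fully appreciated. The diluted analog of Lemma~\ref{lemma:dmaxcloseness} that you state,
\[
R\le e^{O(\varepsilon^2k^2/d)}\cdot\frac{\E_\phi\Tr(A\sigma_\phi)\,\sigma_\phi}{\E_\phi\Tr(A\sigma_\phi)},
\]
is \emph{false} as an operator inequality. Take $A=\ketbra{\chi}$ for $\ket{\chi}\in W_0^0\cong\ssym{d}{k}$ supported entirely on the block $j=k$. Then $\Tr(A\sigma_\phi)=p_k|\braket{\chi}{\phi^{\otimes k}}|^2$, and restricting both sides to the block $j'=k$ the inequality becomes exactly $\mmix^{(k)}\le e^{\lambda}\,\mathrm{MP}(\ketbra{\chi})$, which by Lemma~\ref{lemma:blockdiagonal} (e.g.\ for $\ket{\chi}=\ket{1}^{\otimes k}$) forces $\lambda\ge\Omega(k^2/d)$, not $O(\varepsilon^2k^2/d)$. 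The binomial weight $p_k$ is tiny, but it cancels between the two sides on that block, so it cannot rescue the operator inequality. What you actually need is not a max-relative-entropy bound at all, but a statement that the \emph{total advantage} $\sum_t\sum_{j,j'}p_jp_{j'}(\bar a_t^{(j)}\bar b_t^{(j')}-\E_\phi a_t^{(j)}(\phi)b_t^{(j')}(\phi))$ is small; this requires bounding a cross-copy analog of Lemma~\ref{lemma:dmaxcloseness} for each pair $(j,j')$ and then summing against $p_jp_{j'}$, which is considerably messier than you suggest.

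The paper avoids this entirely by a clean black-box reduction. After the same phase-averaging you do (Lemma~\ref{lemma:lowerboundrandomphase}), it observes that the resulting Problem~\ref{prob:lowerbounddistinguish2} input $\ket{\phi_t}$ can be \emph{prepared} from $t$ copies of $\ket{\phi}$ via an explicit unitary (Lemma~\ref{lemma:prepareuniformsuperposition}), because $\ket{\phi}\perp\ket{0}$ makes the map $\ket{\phi}^{\otimes t}\ket{0}^{\otimes k-t}\mapsto\ket{\phi_t}$ inner-product-preserving. Hence any protocol for Problem~\ref{prob:lowerbounddistinguish} yields a protocol for DIPE using only $m=O(k\varepsilon)$ copies (sample $t\sim\mathrm{Bin}(k,\varepsilon)$, apply the unitary, run the protocol; a Chernoff bound handles the tail $t>m$). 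The already-proved $\Omega(\sqrt{d})$ lower bound for DIPE then gives $k\varepsilon=\Omega(\sqrt{d})$ directly. This reduction uses the binomial concentration exactly once, in the sample-count Chernoff bound, rather than having to thread it through a block-by-block spectral argument.
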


Note that here (as well as in Theorem~\ref{thm:epsestimationlowerbound}) the lower bound holds for arbitrary $\varepsilon\in(0,1)$. The proof follows from a reduction to DIPE and is presented later. We show that this implies a tight lower bound for $\ipe$.

\begin{theorem}
\label{thm:epsestimationlowerbound}
Suppose Alice has input $\ket{\phi}^{\otimes k}$ and Bob has input $\ket{\psi}^{\otimes k}$, for arbitrary unknown pure states $\ket{\phi},\ket{\psi}\in\mathbb{C}^{d}$. Then $k=\Omega(\sqrt{d}/\varepsilon)$ copies is necessary for them to estimate $\Tr(\phi\psi)=\left|\braket{\phi}{\psi}\right|^2$ up to additive error $\varepsilon$ with success probability $2/3$, even when they are allowed arbitrary interactive protocols (or arbitrary LOCC operations), or more generally, arbitrary separable operations.
\end{theorem}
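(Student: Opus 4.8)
The plan is to transfer the decision lower bound of Theorem~\ref{thm:epsdecisionlowerbound} to estimation by showing that an accurate estimator for $\ipe$ on pure states solves Problem~\ref{prob:lowerbounddistinguish}. To leave room for the estimator's accuracy I would instantiate Problem~\ref{prob:lowerbounddistinguish} with parameter $\eta:=C\varepsilon$ for a constant $C>1$ (assuming $\eta<1$; otherwise $\varepsilon=\Omega(1)$ and the claimed $\Omega(\sqrt d/\varepsilon)=\Omega(\sqrt d)$ already follows from the DIPE bound). Writing Alice's and Bob's states as $\ket a=\sqrt{1-\eta}e^{i\theta}\ket0+\sqrt\eta\ket\phi$ and $\ket b=\sqrt{1-\eta}e^{i\theta'}\ket0+\sqrt\eta\ket\chi$, with $\chi=\phi$ in case 1 and $\chi=\psi$ in case 2, the orthogonality $\braket{0}{\phi}=\braket{0}{\chi}=0$ gives the quantity the estimator must produce:
\begin{equation}
\left|\braket{a}{b}\right|^2=(1-\eta)^2+\eta^2\left|\braket{\phi}{\chi}\right|^2+2\eta(1-\eta)\,\mathrm{Re}\left(e^{-i(\theta'-\theta)}\braket{\phi}{\chi}\right).
\end{equation}

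The structural fact I would extract from this is as follows. In case 2, $\ket\phi,\ket\psi$ are independent Haar-random vectors in the $d$-dimensional signal space, so $\left|\braket{\phi}{\psi}\right|=O(1/\sqrt d)$ with high probability (Markov on $\E\left|\braket{\phi}{\psi}\right|^2=1/d$), and the inner product concentrates within $o(\eta)$ of $c:=(1-\eta)^2$. In case 1, $\braket{\phi}{\phi}=1$, so the inner product equals $c+\eta^2+2\eta(1-\eta)\cos(\theta'-\theta)$; since $\theta'-\theta$ is uniform, $\left|\,\left|\braket{a}{b}\right|^2-c\,\right|=\Theta(\eta)$ for a constant fraction of the phases (those with $\left|\cos(\theta'-\theta)\right|$ bounded below). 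The crucial point is that the two cases have essentially the same \emph{mean} inner product (they differ by only $\approx\eta^2$), so no one-sided threshold on the estimate can work; the genuine signature is that case 1 is \emph{spread} by $\Theta(\eta)$ about $c$ whereas case 2 is \emph{concentrated} at $c$.

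This dictates a two-sided test. Given an estimator of accuracy $\varepsilon$ and success $2/3$, I would first amplify it to success $1-\delta$ for a small constant $\delta$ by taking the median of $O(1)$ independent runs, at a constant-factor cost in copies. Running the amplified estimator on the Problem~\ref{prob:lowerbounddistinguish} instance yields $\hat f$; output ``case 2'' if $\left|\hat f-c\right|\le\eta/2$ and ``case 1'' otherwise. Because $\eta/2=C\varepsilon/2>\varepsilon$, the estimator comfortably resolves this threshold: in case 2 one has $\left|\hat f-c\right|\le\varepsilon<\eta/2$ with high probability (correct), while in case 1, for the constant fraction of phases with $\left|\,\left|\braket{a}{b}\right|^2-c\,\right|\ge\eta/2+\varepsilon$ one gets $\left|\hat f-c\right|>\eta/2$ with high probability (correct). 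Choosing $C$ and $\delta$ as suitable constants pushes the overall success probability above $2/3$.

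Since the reduction uses $O(k)$ copies, Theorem~\ref{thm:epsdecisionlowerbound} applied with parameter $\eta=C\varepsilon$ forces $k=\Omega(\sqrt d/\eta)=\Omega(\sqrt d/\varepsilon)$, which is the claim (the extra dimension $d+1$ changes the bound only by a constant). I expect the main obstacle to be exactly the random relative phase between the $\ket0$ components: it collapses the gap between the two cases' mean inner products from $\Theta(\eta)$ to $\Theta(\eta^2)$, so one cannot simply compare the estimate to a single threshold. Recognizing that the phase instead converts the ``same vs.\ independent'' distinction into a spread-vs.-concentration distinction at scale $\eta$, and detecting that with a two-sided deviation test fed by a slightly-better-than-$\eta$-accurate (hence amplified) estimator, is the heart of the argument.
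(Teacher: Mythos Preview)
Your proposal is correct and follows essentially the same approach as the paper: reduce Problem~\ref{prob:lowerbounddistinguish} to $\ipe$ via an amplified estimator, observe that the random relative phase makes the case~1 inner product spread by $\Theta(\eta)$ about $(1-\eta)^2$ while the case~2 inner product concentrates there, and decide with a two-sided window test. The only cosmetic difference is parameterization: the paper keeps the Problem~\ref{prob:lowerbounddistinguish} parameter equal to $\varepsilon$ but assumes an estimator of accuracy $\varepsilon/100$ (obtained by the same median-amplification you invoke) and takes a narrow window of half-width $\varepsilon/50$, whereas you scale the problem parameter to $\eta=C\varepsilon$ and use a wider window of half-width $\eta/2$; both choices yield the same $\Omega(\sqrt d/\varepsilon)$ conclusion.
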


\begin{remark}
In Section~\ref{sec:swaptest} we present a $\Omega(1/\varepsilon^2)$ lower bound for $\ipe$, which is proven in the non-distributed setting and also applies to our distributed setting. Combining the two lower bounds gives the lower bound in our main result. See Section~\ref{sec:swaptest} for more details.
\end{remark}

\begin{proof}
Note that for any estimation algorithm, its success probability can be amplified to $1-\delta$ by taking the median of $O(\log\frac{1}{\delta})$ independent executions. Therefore choosing the success probability to be an arbitrary constant will only change the sample complexity by a constant factor. Suppose there is an algorithm $\mc A$ with interactive communication between Alice and Bob that, on input $k$ copies of arbitrary unknown pure states, estimates the inner product up to additive error $\varepsilon/100$, with success probability at least 0.99. Here $\varepsilon$ is the parameter in the definition of Problem~\ref{prob:lowerbounddistinguish}, and we assume that $\varepsilon\leq 0.01$. To prove the theorem statement it suffices to prove that using this protocol they can solve Problem~\ref{prob:lowerbounddistinguish} with success probability at least $2/3$.

Consider the inner product between Alice and Bob's states of the two cases in Problem~\ref{prob:lowerbounddistinguish}. In case 1 the inner product of Alice and Bob's input states is given by
\begin{equation}
    f_1:=\left|(1-\varepsilon)e^{i(\theta'-\theta)}+\varepsilon\right|^2=1-2\varepsilon +2\varepsilon^2+2\varepsilon(1-\varepsilon)\cos(\theta'-\theta).
\end{equation}
In case 2 the inner product is given by
\begin{equation}
    f_2:=\left|(1-\varepsilon)e^{i(\theta'-\theta)}+\varepsilon\braket{\phi}{\psi}\right|^2=(1-\varepsilon)^2+\varepsilon^2\left|\braket{\phi}{\psi}\right|^2+2\varepsilon(1-\varepsilon)\Re(e^{i(\theta'-\theta)}\braket{\psi}{\phi}).
\end{equation}
A sketch of the remaining argument is as follows: in case 1 the inner product $f_1$ is distributed roughly as $f_1=1-2\varepsilon+2\varepsilon\cos(\gamma)$ for a uniformly random phase $\gamma$. In case 2 the inner product $f_2$ is concentrated around $(1-\varepsilon)^2$, as $\left|\braket{\psi}{\phi}\right|= O(1/\sqrt{d})$ with high probability. Therefore, after Alice and Bob estimate the inner product between their states using algorithm $\mc A$, they can decide case 2 if the result is close to $(1-\varepsilon)^2$, and decide case 1 otherwise.\\

\noindent\textbf{Algorithm $\mc B$ for Problem~\ref{prob:lowerbounddistinguish}:}
\begin{enumerate}
    \item Alice and Bob estimate the inner product between their states using algorithm $\mc A$, obtain result $f$.
    \item They decide case 2 if $f\in[(1-\varepsilon)^2-\frac{\varepsilon}{50},(1-\varepsilon)^2+\frac{\varepsilon}{50}]$, and decide case 1 otherwise.
\end{enumerate}
The performance of algorithm $\mc A$ guarantees that if Alice and Bob are in case $i$ ($i=1,2$), then $|f-f_i|\leq\frac{\varepsilon}{100}$ holds with probability at least 0.99. In the following we show that algorithm $\mc B$ can decide Problem~\ref{prob:lowerbounddistinguish} with success probability at least $2/3$.

If Alice and Bob are in case 1, then
\begin{equation}
\begin{split}
    &\Pr[f\in[(1-\varepsilon)^2-\frac{\varepsilon}{50},(1-\varepsilon)^2+\frac{\varepsilon}{50}]]\\
    &\leq \Pr[f\in[(1-\varepsilon)^2-\frac{\varepsilon}{50},(1-\varepsilon)^2+\frac{\varepsilon}{50}]\wedge|f-f_1|\leq\frac{\varepsilon}{100}]+0.01\\
    &\leq \Pr[f_1\in[(1-\varepsilon)^2-\frac{3\varepsilon}{100},(1-\varepsilon)^2+\frac{3\varepsilon}{100}]]+0.01\\
    &\leq \Pr_{\gamma\sim[0,2\pi]}[\cos(\gamma)\in[-0.0203,0.0152]]+0.01\\
    &\leq 0.03.
\end{split}
\end{equation}
Therefore the algorithm returns case 1 with probability at least 0.97.

If Alice and Bob are in case 2, first note that 
\begin{equation}
    \begin{split}
        \left|\varepsilon^2\left|\braket{\phi}{\psi}\right|^2+2\varepsilon(1-\varepsilon)\Re(e^{i(\theta'-\theta)}\braket{\psi}{\phi})\right|\leq \varepsilon^2\left|\braket{\phi}{\psi}\right|+2\varepsilon(1-\varepsilon)\left|\braket{\psi}{\phi}\right|&\leq 2\varepsilon \left|\braket{\psi}{\phi}\right|.
    \end{split}
\end{equation}
Therefore,
\begin{equation}
\begin{split}
    &\Pr[f\in[(1-\varepsilon)^2-\frac{\varepsilon}{50},(1-\varepsilon)^2+\frac{\varepsilon}{50}]]\\
    &\geq \Pr[f_2\in[(1-\varepsilon)^2-\frac{\varepsilon}{100},(1-\varepsilon)^2+\frac{\varepsilon}{100}]\wedge |f-f_2|\leq\frac{\varepsilon}{100}]\\
    &\geq \Pr[f_2\in[(1-\varepsilon)^2-\frac{\varepsilon}{100},(1-\varepsilon)^2+\frac{\varepsilon}{100}]]+\Pr[|f-f_2|\leq\frac{\varepsilon}{100}]-1\\
    &\geq \Pr[f_2\in[(1-\varepsilon)^2-\frac{\varepsilon}{100},(1-\varepsilon)^2+\frac{\varepsilon}{100}]]-0.01\\
    &\geq \Pr_{\phi,\psi\sim\mathbb{C}^d}[\left|\braket{\psi}{\phi}\right|\leq\frac{1}{200}]-0.01.
\end{split}
\end{equation}
When $d$ is larger than some fixed constant, $\left|\braket{\psi}{\phi}\right|\leq\frac{1}{200}$ holds with high probability. Therefore the algorithm returns case 2 with high probability.
\end{proof}

\noindent\textbf{Proof of Theorem~\ref{thm:epsdecisionlowerbound}.} The main idea of the proof is to leverage the independent random phases $\theta,\theta'$ to build a reduction to DIPE.

Suppose Alice and Bob implement a two-outcome POVM $\{M,I-M\}$ via an interactive protocol where $M$ corresponds to ``case 2". In order to solve Problem~\ref{prob:lowerbounddistinguish}, $M$ satisfies
\begin{equation}
    \frac{1}{3}\leq \E_{\substack{\phi,\psi\sim\mathbb{C}^d\\\theta,\theta'\sim[0,2\pi]}}\Tr(M(\Theta_{AB}-\Psi_{AB})),
\end{equation}
where
\begin{equation}
\begin{split}
    \Theta_{AB}=&\left((1-\varepsilon)\ketbra{0}+\sqrt{\varepsilon(1-\varepsilon)}e^{i\theta}\ketbra{0}{\phi}+\sqrt{\varepsilon(1-\varepsilon)}e^{-i\theta}\ketbra{\phi}{0}+\varepsilon\ketbra{\phi}\right)^{\otimes k}\\
    &\otimes \left((1-\varepsilon)\ketbra{0}+\sqrt{\varepsilon(1-\varepsilon)}e^{i\theta'}\ketbra{0}{\psi}+\sqrt{\varepsilon(1-\varepsilon)}e^{-i\theta'}\ketbra{\psi}{0}+\varepsilon\ketbra{\psi}\right)^{\otimes k}
\end{split}
\end{equation}
and
\begin{equation}
\begin{split}
    \Psi_{AB}=&\left((1-\varepsilon)\ketbra{0}+\sqrt{\varepsilon(1-\varepsilon)}e^{i\theta}\ketbra{0}{\phi}+\sqrt{\varepsilon(1-\varepsilon)}e^{-i\theta}\ketbra{\phi}{0}+\varepsilon\ketbra{\phi}\right)^{\otimes k}\\
    &\otimes \left((1-\varepsilon)\ketbra{0}+\sqrt{\varepsilon(1-\varepsilon)}e^{i\theta'}\ketbra{0}{\phi}+\sqrt{\varepsilon(1-\varepsilon)}e^{-i\theta'}\ketbra{\phi}{0}+\varepsilon\ketbra{\phi}\right)^{\otimes k}.
\end{split}
\end{equation}
By linearity, we have
\begin{equation}
    \E_{\substack{\phi,\psi\sim\mathbb{C}^d\\\theta,\theta'\sim[0,2\pi]}}\Tr(M(\Theta_{AB}-\Psi_{AB}))=\E_{\phi,\psi\sim\mathbb{C}^d}\Tr(M\left(\E_{\theta,\theta'\sim[0,2\pi]}\Theta_{AB}-\E_{\theta,\theta'\sim[0,2\pi]}\Psi_{AB}\right)).
\end{equation}
The random phases $\theta,\theta'$ are always independent, so the integral can be calculated separately for Alice and Bob. First consider Alice's system. The integral over the random phase $\theta$ gives
\begin{equation}\label{eq:lowerboundrandomphase}
\begin{split}
    &\E_{\theta\sim[0,2\pi]}\left((1-\varepsilon)\ketbra{0}+\sqrt{\varepsilon(1-\varepsilon)}e^{i\theta}\ketbra{0}{\phi}+\sqrt{\varepsilon(1-\varepsilon)}e^{-i\theta}\ketbra{\phi}{0}+\varepsilon\ketbra{\phi}\right)^{\otimes k}\\
    &=\sum_{t=0}^k\varepsilon^t(1-\varepsilon)^{k-t}\binom{k}{t}^2\psym{d}{k}\left(\ketbra{0}^{\otimes k-t}\otimes \ketbra{\phi}^{\otimes t}\right)\psym{d}{k}\\
    &=\sum_{t=0}^k\varepsilon^t(1-\varepsilon)^{k-t}\binom{k}{t}\ketbra{\phi_t},
\end{split}
\end{equation}
where the state $\ket{\phi_t}$ is defined as
\begin{equation}
    \ket{\phi_t}:=\frac{1}{\sqrt{\binom{k}{t}}}\sum_{S\subseteq[k],|S|=t}\ket{0}_{\bar{S}}\ket{\phi}_S,
\end{equation}
a uniform superposition of $\phi$ stored in $t$ of the $k$ registers. We define the state $\ket{\psi_t}$ similarly. Here the second line of Eq.~\eqref{eq:lowerboundrandomphase} is proven in the following lemma.

\begin{lemma}\label{lemma:lowerboundrandomphase}
For any $\varepsilon\in(0,1)$ and pure state $\ket{\phi}$, we have
\begin{equation}
    \begin{split}
        &\E_{\theta\sim[0,2\pi]}\left((1-\varepsilon)\ketbra{0}+\sqrt{\varepsilon(1-\varepsilon)}e^{i\theta}\ketbra{0}{\phi}+\sqrt{\varepsilon(1-\varepsilon)}e^{-i\theta}\ketbra{\phi}{0}+\varepsilon\ketbra{\phi}\right)^{\otimes k}\\
    &=\sum_{t=0}^k\varepsilon^t(1-\varepsilon)^{k-t}\binom{k}{t}^2\psym{d}{k}\left(\ketbra{0}^{\otimes k-t}\otimes \ketbra{\phi}^{\otimes t}\right)\psym{d}{k}.
    \end{split}
\end{equation}
\end{lemma}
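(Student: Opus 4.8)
The plan is to recognize the single-copy operator inside the tensor power as a rank-one projector and reduce the whole computation to averaging a pure state over a global phase. First I would observe that, since $\ket{\phi}$ is orthogonal to $\ket{0}$, the single-tensor-factor operator is exactly $\ketbra{\chi_\theta}$ for the pure state
\begin{equation}
    \ket{\chi_\theta} = \sqrt{1-\varepsilon}\,e^{i\theta}\ket{0} + \sqrt{\varepsilon}\,\ket{\phi},
\end{equation}
so the left-hand side equals $\E_\theta \ketbra{\chi_\theta}^{\otimes k}=\E_\theta\ketbra{\chi_\theta}^{\otimes k}$. The phase $\theta$ acts only on the $\ket{0}$ component, i.e. $\ket{\chi_\theta}=U(\theta)\ket{\chi_0}$ with the diagonal unitary $U(\theta)=e^{i\theta}\ketbra{0}+(I-\ketbra{0})$, so $\ket{\chi_\theta}^{\otimes k}=U(\theta)^{\otimes k}\ket{\chi_0}^{\otimes k}$.

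Next I would expand $\ket{\chi_0}^{\otimes k}$ by the binomial theorem and group terms by the number $t$ of tensor factors carrying $\ket{\phi}$:
\begin{equation}
    \ket{\chi_0}^{\otimes k}=\sum_{t=0}^{k}\sqrt{\varepsilon^{t}(1-\varepsilon)^{k-t}}\,\sqrt{\binom{k}{t}}\,\ket{\phi_t},
\end{equation}
where $\ket{\phi_t}=\binom{k}{t}^{-1/2}\sum_{|S|=t}\ket{0}_{\bar S}\ket{\phi}_S$ is precisely the symmetric superposition in the statement. Each $\ket{\phi_t}$ is an eigenvector of $U(\theta)^{\otimes k}$ with eigenvalue $e^{i\theta(k-t)}$, so in $\ketbra{\chi_\theta}^{\otimes k}$ the cross term $\ketbra{\phi_t}{\phi_{t'}}$ carries the phase $e^{i\theta(t'-t)}$. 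Averaging over the uniform phase and using $\E_\theta e^{i\theta(t'-t)}=\delta_{t,t'}$ annihilates every off-diagonal block, leaving
\begin{equation}
    \E_\theta\ketbra{\chi_\theta}^{\otimes k}=\sum_{t=0}^{k}\varepsilon^{t}(1-\varepsilon)^{k-t}\binom{k}{t}\ketbra{\phi_t}.
\end{equation}

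Finally I would recast $\ketbra{\phi_t}$ into the symmetrized form of the statement. Using $\psym{d}{k}=\frac{1}{k!}\sum_{\pi\in S_k}P_d(\pi)$ from Lemma~\ref{lemma:symprojector} and counting the $t!(k-t)!$ permutations that produce each arrangement with $\ket{\phi}$ in a fixed set of $t$ registers, one obtains
\begin{equation}
    \psym{d}{k}\left(\ket{0}^{\otimes k-t}\otimes\ket{\phi}^{\otimes t}\right)=\frac{1}{\sqrt{\binom{k}{t}}}\,\ket{\phi_t},
\end{equation}
hence $\psym{d}{k}\left(\ketbra{0}^{\otimes k-t}\otimes\ketbra{\phi}^{\otimes t}\right)\psym{d}{k}=\binom{k}{t}^{-1}\ketbra{\phi_t}$. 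Substituting $\binom{k}{t}\ketbra{\phi_t}=\binom{k}{t}^{2}\psym{d}{k}\left(\ketbra{0}^{\otimes k-t}\otimes\ketbra{\phi}^{\otimes t}\right)\psym{d}{k}$ into the previous display gives exactly the claimed right-hand side.

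I expect the only delicate point to be bookkeeping: matching the three normalization factors (the $\varepsilon^{t}(1-\varepsilon)^{k-t}$ weight, the $\binom{k}{t}$ from the binomial expansion, and the $\binom{k}{t}^{1/2}$ relating $\ket{\phi_t}$ to the unnormalized sum), and checking the orbit-counting constant $t!(k-t)!/k!=\binom{k}{t}^{-1}$ in the symmetrizer identity. Orthogonality $\braket{0}{\phi}=0$ is used only to guarantee that $U(\theta)$ fixes $\ket{\phi}$; it is not otherwise needed, since both the symmetrizer identity and the phase-orthogonality argument hold as operator identities.
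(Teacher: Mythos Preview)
Your proof is correct and follows essentially the same route as the paper: expand by the binomial theorem, kill off-diagonal $t\neq t'$ terms by phase averaging, and identify the surviving diagonal blocks via the symmetrizer identity $\psym{d}{k}\bigl(\ket{0}^{\otimes k-t}\otimes\ket{\phi}^{\otimes t}\bigr)=\binom{k}{t}^{-1/2}\ket{\phi_t}$. The only organizational difference is that you first recognize the single-copy operator as the rank-one $\ketbra{\chi_\theta}$ and expand at the ket level, whereas the paper expands the operator tensor product directly over the four building blocks; your packaging is a bit cleaner, and note that the detour through $U(\theta)$ (and hence the appeal to $\braket{0}{\phi}=0$) is unnecessary since the phase $e^{i\theta(k-t)}$ on $\ket{\phi_t}$ is already visible in the direct binomial expansion of $\ket{\chi_\theta}^{\otimes k}$.
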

\begin{proof}
First note that after expanding the tensor product, in each term the number of $\sqrt{\varepsilon(1-\varepsilon)}e^{i\theta}\ketbra{0}{\phi}$ must be equal to the number of $\sqrt{\varepsilon(1-\varepsilon)}e^{-i\theta}\ketbra{\phi}{0}$, because otherwise the term will be 0 after performing integral over the random phase $\theta$. This guarantees that in each term the number of 0s ($\phi$s) on the ket side equals the number of 0s ($\phi$s) on the bra side. Next, consider a term of the form
\begin{equation}\label{eq:lowerboundpermutationterm}
    \ket{\sigma(\phi,\phi,\dots,\phi,0,0,\dots,0)}\bra{\tau(\phi,\phi,\dots,\phi,0,0,\dots,0)}
\end{equation}
which has $k-t$ 0s and $t$ $\phi$s, and $\sigma,\tau$ are arbitrary permutations. Each term of this form has a one-to-one correspondence with the terms in the expansion of the tensor product. Moreover, regardless of $\sigma$ and $\tau$ the coefficient of this term is always
\begin{equation}
    \varepsilon^t(1-\varepsilon)^{k-t}.
\end{equation}
Finally, using the characterization of $\psym{d}{k}=\frac{1}{k!}\sum_{\pi\in S_k}P_d(\pi)$ (Lemma~\ref{lemma:symprojector}) it is easy to see that the sum of unique terms of the form in Eq.~\eqref{eq:lowerboundpermutationterm} equals
\begin{equation}
\begin{split}
    \binom{k}{t}^2\Pi_{\mathrm{sym}}^{d,k}\left(\ketbra{0}^{\otimes k-t}\otimes \ketbra{\phi}^{\otimes t}\right)\Pi_{\mathrm{sym}}^{d,k},
\end{split}
\end{equation}
which concludes the proof.
\end{proof}

The state after averaging over random phase can be understood as follows: first sample a random $t$ according to binomial distribution $B(k,\varepsilon)$, then the state is $t$ copies of $\phi$ uniformly distributed (in superposition) in $k$ registers. This gives the same average state as in Problem~\ref{prob:lowerbounddistinguish}, because 
\begin{equation}
    \begin{split}
        &\E_{\theta,\theta'\sim[0,2\pi]}\Theta_{AB}=\E_{t,t'\sim B(k,\varepsilon)}\ketbra{\phi_t}\otimes \ketbra{\psi_{t'}},\\
        &\E_{\theta,\theta'\sim[0,2\pi]}\Psi_{AB}=\E_{t,t'\sim B(k,\varepsilon)}\ketbra{\phi_t}\otimes \ketbra{\phi_{t'}}.
    \end{split}
\end{equation}

Effectively we have reduced Problem~\ref{prob:lowerbounddistinguish} to the following decision problem.

\begin{problem}
\label{prob:lowerbounddistinguish2}
Suppose Alice and Bob are each given one copy of a pure state in $(\mathbb{C}^{d+1})^{\otimes k}$. They are promised that one of the following two cases hold:
\begin{enumerate}
    \item Alice has $\ket{\phi_t}$ and Bob has $\ket{\phi_{t'}}$, where $\ket{\phi}$ is a uniformly random state in  $\vspan\{\ket{1},\ket{2},\dots,\ket{d}\}$, and $t,t'$ are independent random variables with binomial distribution $B(k,\varepsilon)$.
    \item Alice has $\ket{\phi_t}$ and Bob has $\ket{\psi_{t'}}$, where $\ket{\phi},\ket{\psi}$ are independent random states in  $\vspan\{\ket{1},\ket{2},\dots,\ket{d}\}$, and $t,t'$ are independent random variables with binomial distribution $B(k,\varepsilon)$.
\end{enumerate}
Their goal is to decide which case they are in with success probability at least $2/3$, using an interactive protocol that involves local quantum operations and classical communication.
\end{problem}

We have shown that Alice and Bob's average state is the same for Problem~\ref{prob:lowerbounddistinguish} and Problem~\ref{prob:lowerbounddistinguish2}. Therefore any measurement strategy will have the same success probability for Problem~\ref{prob:lowerbounddistinguish} and Problem~\ref{prob:lowerbounddistinguish2}, and any lower bound for Problem~\ref{prob:lowerbounddistinguish2} will also hold for Problem~\ref{prob:lowerbounddistinguish}. In the following we prove a $\Omega(\sqrt{d}/\varepsilon)$ lower bound for Problem~\ref{prob:lowerbounddistinguish2} via a reduction to DIPE.

The idea is to simulate the input to Problem~\ref{prob:lowerbounddistinguish2} using the input to DIPE. Suppose Alice and Bob are trying to solve DIPE with $m=100k\varepsilon+100$ copies, where in case 1 Alice and Bob both have $\ket{\phi}^{\otimes m}$, and in case 2 Alice has $\ket{\phi}^{\otimes m}$ and Bob has $\ket{\psi}^{\otimes m}$, where $\phi,\psi\sim\mathbb{C}^d$. Alice samples a random variable $t\sim B(k,\varepsilon)$, which satisfies $t\leq m$ with high probability. Alice throws away $m-t$ copies of her state, then applies a unitary that maps $\ket{\phi}^{\otimes t}\ket{0}^{\otimes k-t}$ to $\ket{\phi_t}$ as constructed in Lemma \ref{lemma:prepareuniformsuperposition} below. Bob also performs the same procedure. Then their resulting states will be close to the input states of Problem~\ref{prob:lowerbounddistinguish2}, where the two cases in DIPE also correspond to the two cases in Problem~\ref{prob:lowerbounddistinguish2}. Therefore they can decide DIPE with high probability using the algorithm for Problem~\ref{prob:lowerbounddistinguish2}.

In the above argument we have used the following lemma.
\begin{lemma}\label{lemma:prepareuniformsuperposition}
For any $t\in \{0,\dots,k\}$, there exists a unitary operator $U_t$ acting on $(\mathbb{C}^{d+1})^{\otimes k}$ that satisfies
\begin{equation}
    U_t \ket{\phi}^{\otimes t}\ket{0}^{\otimes k-t}=\ket{\phi_t}
\end{equation}
for any state $\ket{\phi}\in\vspan\{\ket{1},\dots,\ket{d}\}$.
\end{lemma}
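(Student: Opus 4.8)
The plan is to exhibit a $U_t$ that acts only on the ``positions'' of the non-vacuum registers while leaving their common content $\ket{\phi}$ untouched; this content-independence is exactly what allows a single unitary to work for every $\ket{\phi}$. First I would decompose each register as $\mathbb{C}^{d+1}=\vspan\{\ket{0}\}\oplus\mathcal{H}_\perp$ with $\mathcal{H}_\perp=\vspan\{\ket{1},\dots,\ket{d}\}$, and introduce the \emph{$t$-particle subspace}
\begin{equation}
    V_t=\vspan\left\{\ket{\chi}_S\ket{0}_{\bar S}:S\subseteq[k],\,|S|=t,\,\ket{\chi}\in\mathcal{H}_\perp^{\otimes t}\right\},
\end{equation}
where $\ket{\chi}_S\ket{0}_{\bar S}$ places the $t$-register state $\ket{\chi}$ in the registers indexed by $S$ (in increasing order) and $\ket{0}$ in the remaining registers. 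Both the input $\ket{\phi}^{\otimes t}\ket{0}^{\otimes k-t}$ (take $S_0=\{1,\dots,t\}$) and the target $\ket{\phi_t}$ lie in $V_t$.

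Next I would record the key structural fact that for fixed $S$ the map $\ket{\chi}\mapsto\ket{\chi}_S\ket{0}_{\bar S}$ is an isometry, and that its images for distinct $S$ are mutually orthogonal, since any register where $S$ and $S'$ differ carries $\ket{0}$ on one side and a vector of $\mathcal{H}_\perp$ on the other. This yields a natural isomorphism
\begin{equation}
    V_t\;\cong\;\mathbb{C}^{\binom{k}{t}}\otimes\mathcal{H}_\perp^{\otimes t},
\end{equation}
splitting $V_t$ into a ``position'' register indexed by the subsets $S$ and an ``internal'' register holding the content of the $t$ occupied slots. Under this isomorphism the input corresponds to $\ket{S_0}\otimes\ket{\phi}^{\otimes t}$, and---crucially, because the content of every occupied register is the \emph{same} vector $\ket{\phi}$---the target factors as
\begin{equation}
    \ket{\phi_t}\;\cong\;\Bigl(\tfrac{1}{\sqrt{\binom{k}{t}}}\sum_{|S|=t}\ket{S}\Bigr)\otimes\ket{\phi}^{\otimes t},
\end{equation}
that is, the same internal state $\ket{\phi}^{\otimes t}$ tensored with a uniform superposition over positions.

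With this in hand the construction is immediate: I would pick any unitary $W_t$ on $\mathbb{C}^{\binom{k}{t}}$ with $W_t\ket{S_0}=\tfrac{1}{\sqrt{\binom{k}{t}}}\sum_{|S|=t}\ket{S}$ (such a $W_t$ exists since both sides are unit vectors and the map extends to a unitary by completing to orthonormal bases), set $U_t|_{V_t}=W_t\otimes I$ acting on the position register and trivially on the internal register, and extend $U_t$ by the identity on $V_t^{\perp}$. Since $W_t\otimes I$ is unitary on $V_t$ and the extension is unitary on the complement, $U_t$ is unitary on $(\mathbb{C}^{d+1})^{\otimes k}$, and by construction $U_t\,\ket{\phi}^{\otimes t}\ket{0}^{\otimes k-t}=\ket{\phi_t}$ for every $\ket{\phi}\in\mathcal{H}_\perp$. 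The only genuine content is the factorization in the previous step: once the internal state is seen to be $\ket{\phi}^{\otimes t}$ independent of $S$, acting on positions alone via $W_t\otimes I$ automatically produces a single $\phi$-independent unitary. The main point to handle carefully is the ordering convention inside each $S$ when defining the isomorphism, but because all occupied registers hold the identical vector $\ket{\phi}$ the ordering is irrelevant and no subtlety remains.
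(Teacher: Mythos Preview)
Your proof is correct and takes a genuinely different route from the paper's. The paper establishes existence abstractly by checking that the assignment $\ket{\phi}^{\otimes t}\ket{0}^{\otimes k-t}\mapsto\ket{\phi_t}$ preserves inner products, namely $\braket{\phi_t'}{\phi_t}=\braket{\phi'}{\phi}^t$, and then appeals to the standard fact that an inner-product-preserving map on a spanning set extends to a unitary; it supplements this with an explicit quantum circuit using an ancilla register $\ket{S}$, controlled swaps, and an uncomputation step that exploits $\braket{0}{\phi}=0$. You instead give a direct ancilla-free construction: by identifying the $t$-particle sector $V_t$ with $\mathbb{C}^{\binom{k}{t}}\otimes\mathcal{H}_\perp^{\otimes t}$, you observe that both source and target factor as (position state)$\,\otimes\,\ket{\phi}^{\otimes t}$, so a unitary $W_t\otimes I$ acting only on the position register does the job for every $\ket{\phi}$ simultaneously. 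Your approach makes the $\phi$-independence of $U_t$ structurally transparent and produces a unitary on $(\mathbb{C}^{d+1})^{\otimes k}$ without any auxiliary space; the paper's inner-product computation is slightly shorter to write down, and its circuit description is closer to an actual implementation, but relies on an ancilla that must be uncomputed.
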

\begin{proof}
It suffices to prove that the map $\ket{\phi}^{\otimes t}\ket{0}^{\otimes k-t}\mapsto\ket{\phi_t}$ preserves inner product. This follows from
\begin{equation}
    \begin{split}
        \braket{\phi_t'}{\phi_t}&=\frac{1}{\binom{k}{t}}\left(\sum_{S\subseteq[k],|S|=t}\bra{0}_{\bar{S}}\bra{\phi'}_S\right)\left(\sum_{T\subseteq[k],|T|=t}\ket{0}_{\bar{T}}\ket{\phi}_T\right)\\
        &=\frac{1}{\binom{k}{t}}\sum_{S\subseteq[k],|S|=t}\bra{0}_{\bar{S}}\bra{\phi'}_S\ket{0}_{\bar{S}}\ket{\phi}_S\\
        &=\braket{\phi'}{\phi}^t\\
        &=\bra{\phi'}^{\otimes t}\bra{0}^{\otimes k-t}\ket{\phi}^{\otimes t}\ket{0}^{\otimes k-t}
    \end{split}
\end{equation}
for any $\ket{\phi},\ket{\phi'}\in\vspan\{\ket{1},\dots,\ket{d}\}$. 

In fact a circuit for implementing $U_t$ is easy to describe. On input $\ket{\phi}^{\otimes t}\ket{0}^{\otimes k-t}$, we can prepare $\frac{1}{\sqrt{\binom{k}{t}}}\sum_{S}\ket{\phi}^{\otimes t}\ket{0}^{\otimes k-t}\ket{S}$, then use $S$ to perform control swap operations on the registers, which gives $\frac{1}{\sqrt{\binom{k}{t}}}\sum_{S}\ket{\phi}_S\ket{0}_{\bar{S}}\ket{S}$. Now, since $\ket{\phi}$ is supported on $\vspan\{\ket{1},\dots,\ket{d}\}$ and always orthogonal to $\ket{0}$, we can reversibly compute $S$ from $\ket{\phi}_S\ket{0}_{\bar{S}}$. This means that we can uncompute the $\ket{S}$ register from $\frac{1}{\sqrt{\binom{k}{t}}}\sum_{S}\ket{\phi}_S\ket{0}_{\bar{S}}\ket{S}$, resulting in $\ket{\phi_t}$.
\end{proof}

In more details, for any $\varepsilon>0$, suppose there is an algorithm $\mc A$ that can solve Problem~\ref{prob:lowerbounddistinguish2} using $k$ copies with success probability at least 0.99. Let $m=100k\varepsilon+100$. Consider the following algorithm for DIPE.\\

\noindent\textbf{Algorithm $\mc B$ for DIPE:}
\begin{enumerate}
    \item Alice has input state $\ket{\phi}^{\otimes m}$. She samples $t\sim B(k,\varepsilon)$. If $t>m$, she throws away all of her input states and prepares the state $\ket{A}=\ket{0}^{\otimes k}$. If $t\leq m$ she throws away $m-t$ copies of her input states and prepares $\ket{A}=\ket{\phi_t}$ using the procedure described in Lemma~\ref{lemma:prepareuniformsuperposition}.
    \item Bob samples $t'\sim B(k,\varepsilon)$ and prepares the state $\ket{B}$ similarly.
    \item They run $\mc A$ on $\ketbra{A}\otimes \ketbra{B}$, and return according to $\mc A$.
\end{enumerate}

First we bound the probability $\Pr[t>m]$. Let $X_1,\dots,X_k$ be i.i.d. 0/1 random variables with mean $\varepsilon$. Chernoff bound gives
\begin{equation}
    \Pr[\frac{X_1+\cdots X_k}{k}\geq \varepsilon+\delta]\leq \exp\left(-\frac{k\delta^2}{2(\varepsilon+\delta)}\right).
\end{equation}
Let $\delta=99\varepsilon+\frac{100}{k}>\varepsilon$, then
\begin{equation}
\begin{split}
    \Pr[t>m]&\leq \Pr[X_1+\cdots X_k\geq 100k\varepsilon+100]\\
    &=\Pr[\frac{X_1+\cdots X_k}{k}\geq\varepsilon+ \delta]\\
    &\leq \exp\left(-\frac{k\delta^2}{2(\varepsilon+\delta)}\right)\\
    &\leq \exp\left(-\frac{k\delta}{4}\right)\\
    &\leq\exp\left(-25\right)<0.01.
\end{split}
\end{equation}
Here, the last line uses $k\delta \geq 100$. To prove that algorithm $\mc B$ decides DIPE with high probability, it suffices to show that in both case 1 and case 2 the average state that Alice and Bob prepare is close to the average state in Problem~\ref{prob:lowerbounddistinguish2}. This follows from the above bound on failure probability. In case 1 the average state that Alice and Bob prepare is
\begin{equation}
    \E_{\phi\sim\mathbb{C}^d}\left(\sum_{t=0}^m \binom{k}{t}\varepsilon^t (1-\varepsilon)^{k-t}\ketbra{\phi_t}+\Pr[t>m]\ketbra{0}^{\otimes k}\right)^{\otimes 2}.
\end{equation}
It's easy to see that the trace distance between this state and the average state for Problem~\ref{prob:lowerbounddistinguish2} can be upper bounded by
\begin{equation}
\begin{split}
    &\left\|\E_{\phi\sim\mathbb{C}^d}\left(\sum_{t=0}^m \binom{k}{t}\varepsilon^t (1-\varepsilon)^{k-t}\ketbra{\phi_t}+\Pr[t>m]\ketbra{0}^{\otimes k}\right)^{\otimes 2}-\E_{\phi\sim\mathbb{C}^d}\E_{t,t'\sim B(k,\varepsilon)}\ketbra{\phi_t}\otimes \ketbra{\phi_{t'}}\right\|_1\\
    &\leq 4\Pr[t>m]<0.04,
\end{split}
\end{equation}
and the same bound also holds for case 2. Therefore, since $\mc A$ decides Problem~\ref{prob:lowerbounddistinguish2} with probability at least 0.99, algorithm $\mc B$ decides DIPE with high probability. Then the $\Omega(\sqrt{d})$ lower bound for DIPE implies that
\begin{equation}
    m=100k\varepsilon+100=\Omega(\sqrt{d}),
\end{equation}
which gives $k=\Omega(\sqrt{d}/\varepsilon)$.

\subsection{Optimality of the SWAP test}
\label{sec:swaptest}
\begin{figure}[t]
    \centering
    \begin{quantikz}
\lstick{$\ket{0}$} & \gate{H} & \ctrl{1}  &  \gate{H} & \meter{} \\
\lstick{$\rho$} &\qw & \gate[wires=2]{\mathrm{SWAP}} & \qw & \qw \\
\lstick{$\sigma$} &\qw & &\qw & \qw
\end{quantikz}
    \caption{The SWAP test for estimating the inner product. The probability of seeing measurement outcome 0 at the output qubit equals $\frac{1+\Tr(\rho\sigma)}{2}$.}
    \label{fig:swaptest}
\end{figure}

Recall that in the non-distributed setting where arbitrary quantum operations on $\rho^{\otimes k}\otimes \sigma^{\otimes k}$ are allowed, we can estimate $f=\Tr(\rho \sigma)$ using $k=O(1/\varepsilon^2)$ copies using SWAP test. The SWAP test refers to the well-known circuit for estimating inner product shown in Fig.~\ref{fig:swaptest}. The probability of seeing 0 at the output qubit equals $\frac{1+\Tr(\rho\sigma)}{2}$. In experiment, given $\rho^{\otimes k}$ and $\sigma^{\otimes k}$, we repeat the SWAP test $k$ times and use the $k$ bits output $\{x_1,\dots,x_k\}$ to estimate $f=\Tr(\rho\sigma)$ as $\tilde{f}=1-2\cdot \frac{x_1+\cdots +x_k}{k}$, which is an unbiased estimator for $f$. The variance is given by $\Var(\tilde{f})=\frac{1-f^2}{k}$. Therefore $k=O(1/\varepsilon^2)$ copies is sufficient to estimate $f$ within $\varepsilon$ additive error with probability at least $2/3$.

As the SWAP test only uses two-copy measurements, it is natural to ask if we can do better with general multi-copy measurements. An improved algorithm for estimating the inner product with multi-copy measurements was given by~\cite{Badescu2019quantum}, which we refer to as generalized SWAP test. A related work by \cite{Fanizza2020beyond} addresses the optimal precision of inner product estimation with a given number of copies.

\begin{lemma}[Generalized SWAP test \cite{Badescu2019quantum}]\label{lemma:generalizedswaptest}
There exists an algorithm which takes input $\rho^{\otimes k}\otimes\sigma^{\otimes k}$ outputs an unbiased estimator for $f=\Tr(\rho\sigma)$ using multi-copy measurements that has variance
\begin{equation}
    \frac{1}{k^2}+\frac{k-1}{k^2}\left(\Tr(\rho^2\sigma)+\Tr(\rho\sigma^2)\right)-\frac{2k-1}{k^2}f^2.
\end{equation}
This algorithm is optimal in the sense of having optimal variance among all unbiased estimators.
\end{lemma}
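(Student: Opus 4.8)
The plan is to exhibit the estimator as a single Hermitian observable built from swap operators, verify unbiasedness and compute the variance by a permutation expansion, and then establish optimality by reducing an arbitrary unbiased POVM estimator to an observable and minimizing over observables.

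\textbf{Construction and unbiasedness.} Label Alice's $k$ copies of $\rho$ by registers $A_1,\dots,A_k$ and Bob's $k$ copies of $\sigma$ by $B_1,\dots,B_k$, and write $\tau=\rho^{\otimes k}\otimes\sigma^{\otimes k}$. For each pair $(i,j)$ let $S_{i,j}$ be the swap exchanging $A_i$ with $B_j$, so that the identity $\Tr(\swapop\cdot\rho\otimes\sigma)=\Tr(\rho\sigma)$ gives $\Tr(S_{i,j}\tau)=f$. I would define the observable
\begin{equation}
O:=\frac{1}{k^2}\sum_{i=1}^k\sum_{j=1}^k S_{i,j},
\end{equation}
and let the algorithm measure $O$ in its eigenbasis and output the eigenvalue obtained. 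For any Hermitian $O$ this yields a random outcome $\lambda$ with $\E\lambda=\Tr(O\tau)$ and $\E\lambda^2=\Tr(O^2\tau)$, so unbiasedness is immediate: $\E\lambda=\frac{1}{k^2}\sum_{i,j}\Tr(S_{i,j}\tau)=f$.

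\textbf{Variance.} Next I would expand $O^2=\frac{1}{k^4}\sum_{i,j,i',j'}S_{i,j}S_{i',j'}$ and evaluate $\Tr(S_{i,j}S_{i',j'}\tau)$ according to how the index pairs overlap: (i) when $(i,j)=(i',j')$ the product is the identity, contributing $1$ ($k^2$ terms); (ii) when $i=i'$, $j\ne j'$ the product is a $3$-cycle on $A_i,B_j,B_{j'}$ whose trace against $\tau$ is $\Tr(\rho\sigma^2)$ ($k^2(k-1)$ terms); (iii) symmetrically $j=j'$, $i\ne i'$ gives $\Tr(\rho^2\sigma)$ ($k^2(k-1)$ terms); (iv) when all four indices are distinct the two swaps commute and act on disjoint registers, contributing $f^2$ ($k^2(k-1)^2$ terms). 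The counts sum to $k^4$, and collecting terms gives
\begin{equation}
\Tr(O^2\tau)=\frac{1}{k^2}+\frac{k-1}{k^2}\bigl(\Tr(\rho^2\sigma)+\Tr(\rho\sigma^2)\bigr)+\frac{(k-1)^2}{k^2}f^2.
\end{equation}
Subtracting $f^2=(\E\lambda)^2$ and using $\frac{(k-1)^2}{k^2}-1=-\frac{2k-1}{k^2}$ produces exactly the claimed variance.

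\textbf{Optimality.} An arbitrary unbiased estimator is a POVM $\{E_x\}$ with values $\hat f_x$ whose associated observable $\hat O:=\sum_x\hat f_x E_x$ obeys $\Tr(\hat O\,\rho^{\otimes k}\otimes\sigma^{\otimes k})=\Tr(\rho\sigma)$ for all $\rho,\sigma$, and whose variance is $\sum_x\hat f_x^2\Tr(E_x\tau)-f^2$. The first reduction is the operator Jensen inequality $\sum_x\hat f_x^2E_x\succeq\hat O^2$, obtained by applying Kadison's inequality to the isometry $V=\sum_x\ket{x}\otimes\sqrt{E_x}$ and the observable $\sum_x\hat f_x\ketbra{x}\otimes I$; this shows the variance is at least $\Tr(\hat O^2\tau)-f^2$, so it suffices to minimize $\Tr(\hat O^2\tau)$ over unbiased \emph{observables}, with the projective measurement of $\hat O$ attaining the bound. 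Since $\tau$ is supported on $\ssym{d}{k}\otimes\ssym{d}{k}$, compressing $\hat O$ by $\Pi:=\psym{d}{k}\otimes\psym{d}{k}$ preserves unbiasedness and cannot increase $\Tr(\hat O^2\tau)$ (because $\Pi\hat O^2\Pi\succeq(\Pi\hat O\Pi)^2$), so one may assume $\hat O$ lives in the symmetric block. Writing $\hat O=O+N$ with $N$ a null observable ($\Tr(N\,\rho^{\otimes k}\otimes\sigma^{\otimes k})=0$ for all $\rho,\sigma$), the remaining task is to show that the cross term $\Tr((ON+NO)\tau)$ vanishes for every product $\tau$, whence $\Tr(\hat O^2\tau)=\Tr(O^2\tau)+\Tr(N^2\tau)\ge\Tr(O^2\tau)$. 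This cancellation is the main obstacle: it requires the decomposition of $\ssym{d}{k}\otimes\ssym{d}{k}$ under the relevant symmetries (the $S_k\times S_k$ action, under which $O$ is invariant, together with the diagonal unitary action that fixes the unbiasedness constraint), identifying $O$ as the unique unbiased observable orthogonal to all null observables in the $\tau$-weighted inner product. This representation-theoretic computation is carried out in \cite{Badescu2019quantum}, which I would follow for the details.
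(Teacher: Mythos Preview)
The paper does not prove this lemma; it is quoted directly from \cite{Badescu2019quantum} with no argument given. So there is no in-paper proof to compare against, and what you have written is already strictly more than the paper provides.

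On the merits: your construction of the observable $O=\frac{1}{k^2}\sum_{i,j}S_{i,j}$ and the four-case expansion of $\Tr(O^2\tau)$ are correct, and the arithmetic yielding the stated variance checks out. This is a clean and self-contained proof of the existence half of the lemma. For the optimality half, your reductions (POVM $\to$ observable via operator Jensen, then compression to the symmetric block) are the right moves, and you correctly identify the remaining obstruction as the vanishing of the cross term $\Tr((ON+NO)\tau)$ for every null $N$. You then defer this to \cite{Badescu2019quantum}, which is exactly what the paper itself does for the entire statement. One small remark: the optimality claim in the lemma is that the variance is minimal \emph{for the given input} $\rho,\sigma$ (i.e.\ pointwise in $\tau$), so the cross-term cancellation really must hold for each product $\tau$ separately, not just on average; your write-up already frames it this way, but it is worth being explicit that this is what forces the representation-theoretic argument rather than a simpler averaging trick.
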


In the worst case the optimal variance is $\Omega(1/k)$. As $\Tr(\rho^2\sigma)$ and $\Tr(\rho\sigma^2)$ can be upper bounded by $f$, this implies that 
\begin{equation}
    k=O\left(\frac{1}{\varepsilon}+\frac{f}{\varepsilon^2}\right)
\end{equation}
copies is sufficient to estimate $f$ within $\varepsilon$ additive error with probability at least $2/3$.

Both the standard and the generalized SWAP test requires $\Omega(1/\varepsilon^2)$ copies in general. In the following we show that this is optimal by proving a $\Omega(1/\varepsilon^2)$ sample complexity lower bound for estimating inner product.

\begin{lemma}\label{lemma:swaptestlowerbound}
Suppose there is an algorithm that on input $\rho^{\otimes k}\otimes\sigma^{\otimes k}$, where $\rho,\sigma$ act on $\mathbb{C}^{d}$, outputs an estimate of $\Tr(\rho\sigma)$ within $\varepsilon$ additive error with success probability at least $2/3$. Then $k=\Omega\left(1/\varepsilon^2\right)$.
\end{lemma}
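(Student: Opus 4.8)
The plan is to reduce this estimation problem to a purely classical two-point hypothesis testing problem and then invoke the standard $\Omega(1/\varepsilon^2)$ sample lower bound for distinguishing two biased coins. Since we are lower bounding the non-distributed setting where arbitrary joint measurements on $\rho^{\otimes k}\otimes\sigma^{\otimes k}$ are allowed, the crucial point is to choose the hard instances so that quantum measurements provide no advantage over classical sampling. First I would fix Bob's state to be the pure state $\sigma=\ketbra{0}$ and take Alice's state to be a classical (diagonal) qubit state $\rho_p=p\ketbra{0}+(1-p)\ketbra{1}$ embedded in $\mathbb{C}^d$ (the remaining $d-2$ coordinates being zero). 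With this choice $\Tr(\rho_p\sigma)=\expval{\rho_p}{0}=p$, so estimating the inner product to additive error $\varepsilon$ is exactly estimating the bias $p$ to error $\varepsilon$. I would consider the two hypotheses $p_0=\tfrac12-2\varepsilon$ and $p_1=\tfrac12+2\varepsilon$ (assuming $\varepsilon$ small, which is without loss of generality since the bound is trivial for constant $\varepsilon$); their inner products differ by $4\varepsilon$, so any estimator achieving additive error $\varepsilon$ with probability $2/3$ yields, by thresholding its output at $1/2$, a test distinguishing the two hypotheses with success probability at least $2/3$.

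Second, and this is the key step, I would argue that the most general multi-copy measurement gains nothing over classical sampling because all the states in play are diagonal in the computational basis. The input $\rho_p^{\otimes k}\otimes\sigma^{\otimes k}$ is diagonal, so for any POVM $\{M_x\}$ the outcome probability $\Tr(M_x\,\rho_p^{\otimes k}\otimes\sigma^{\otimes k})$ depends only on the diagonal entries of $M_x$ and equals a fixed stochastic post-processing of the outcome of measuring every register in the computational basis. The $\sigma$ registers always return $0$ and carry no information, while the $\rho_p$ registers return $k$ i.i.d.\ $\mathrm{Bernoulli}(p)$ bits. Hence, by the data-processing inequality, any algorithm's advantage at distinguishing $p_0$ from $p_1$ is at most that of the optimal test based on $k$ i.i.d.\ samples from $\mathrm{Bernoulli}(p)$.

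Finally I would close with the classical bound. The relative entropy between the two product distributions is $k\cdot D_{\mathrm{KL}}(\mathrm{Ber}(p_0)\|\mathrm{Ber}(p_1))=O(k\varepsilon^2)$ for $p_0,p_1$ near $1/2$, so by Pinsker's inequality the total variation distance between them is $O(\sqrt{k}\,\varepsilon)$. Distinguishing with success probability $2/3$ forces this total variation distance to be $\Omega(1)$, which gives $k=\Omega(1/\varepsilon^2)$ and completes the proof. The only step requiring genuine care is the second one: one must verify that restricting to diagonal (classical) instances really does collapse every entangled, adaptive multi-copy measurement down to classical post-processing of i.i.d.\ samples, since this is exactly what makes the lower bound robust against the full power of the non-distributed model rather than merely against single-copy strategies.
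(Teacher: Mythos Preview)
Your proof is correct but takes a different route from the paper's. The paper fixes $\sigma=\ketbra{0}$ as you do, but chooses Alice's two hard instances to be the \emph{pure} states $\ket{\psi_0}=\sqrt{\tfrac12-\varepsilon}\ket{0}+\sqrt{\tfrac12+\varepsilon}\ket{1}$ and $\ket{\psi_1}=\sqrt{\tfrac12+\varepsilon}\ket{0}+\sqrt{\tfrac12-\varepsilon}\ket{1}$, and then bounds the distinguishability directly via fidelity: $F(\psi_0^{\otimes k},\psi_1^{\otimes k})=(1-4\varepsilon^2)^k\geq 1-4k\varepsilon^2$, so success probability $2/3$ forces $k=\Omega(1/\varepsilon^2)$. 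You instead pick \emph{diagonal mixed} states $\rho_p$, observe that on diagonal inputs every multi-copy POVM collapses to classical post-processing of i.i.d.\ Bernoulli samples, and finish with the standard KL/Pinsker bound. Both arguments are short and valid. The paper's choice has the extra feature that the lower bound holds even when both inputs are pure, which the paper explicitly uses elsewhere; your diagonal instances sacrifice that but make the ``quantum gives no advantage'' step completely transparent, avoiding any appeal to the fidelity--trace distance relation.
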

\begin{proof}
Define the states $\ket{\psi_0},\ket{\psi_1}$ as
\begin{equation}
\begin{split}
    \ket{\psi_0}&=\sqrt{\frac{1}{2}-\varepsilon}\ket{0}+\sqrt{\frac{1}{2}+\varepsilon}\ket{1},\\
    \ket{\psi_1}&=\sqrt{\frac{1}{2}+\varepsilon}\ket{0}+\sqrt{\frac{1}{2}-\varepsilon}\ket{1}.
\end{split}
\end{equation}
Consider the task of distinguishing between the two cases (1) $\psi_0^{\otimes k}\otimes \ketbra{0}^{\otimes k}$, and (2) $\psi_1^{\otimes k}\otimes \ketbra{0}^{\otimes k}$. In case 1 the inner product is given by $\left|\braket{0}{\psi_0}\right|^2=\frac{1}{2}-\varepsilon$, and in case 2 the inner product is given by $\left|\braket{0}{\psi_1}\right|^2=\frac{1}{2}+\varepsilon$. Therefore using the algorithm for inner product estimation we can distinguish between the two cases with high probability. This implies that the fidelity between the two cases is at most $1-\Omega(1)$.

On the other hand, the fidelity between the two cases is given by
\begin{equation}
    1-\Omega(1)\geq F(\psi_0^{\otimes k}\otimes \ketbra{0}^{\otimes k},\psi_1^{\otimes k}\otimes \ketbra{0}^{\otimes k})=F(\psi_0,\psi_1)^k=(1-4\varepsilon^2)^k\geq 1-4k\varepsilon^2,
\end{equation}
which implies that $k=\Omega(1/\varepsilon^2)$.
\end{proof}

Clearly, this lower bound also applies to our distributed setting as it is a special case of non-distributed setting. Combining this with our lower bound in Theorem~\ref{thm:epsestimationlowerbound}, we have the following stronger lower bound for $\ipe$.

\begin{corollary}\label{cor:lowerbound}
Suppose Alice has input $\rho^{\otimes k}$ and Bob has input $\sigma^{\otimes k}$, for arbitrary unknown mixed states $\rho,\sigma$ acting on $\mathbb{C}^{d}$. Then 
\begin{equation}
    k=\Omega\left(\max\left\{\frac{1}{\varepsilon^2},\frac{\sqrt{d}}{\varepsilon}\right\}\right)
\end{equation}
copies is necessary for them to estimate $\Tr(\rho\sigma)$ within additive error $\varepsilon$ with success probability $2/3$, even when they are allowed arbitrary multi-copy measurements and interactive protocols.
\end{corollary}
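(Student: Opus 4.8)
The plan is to obtain the claimed bound as the combination of two lower bounds that are already established, verifying only that each transfers to the model at hand (distributed, arbitrary mixed states, arbitrary multi-copy measurements and interactive protocols). Since the two terms $1/\varepsilon^2$ and $\sqrt{d}/\varepsilon$ arise in genuinely different regimes, no single hard instance needs to witness both: it suffices that each bound holds separately, after which taking the maximum is immediate.

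For the $\Omega(\sqrt{d}/\varepsilon)$ term I would invoke Theorem~\ref{thm:epsestimationlowerbound}, which already proves this bound for pure input states $\ket{\phi}^{\otimes k},\ket{\psi}^{\otimes k}$ against arbitrary separable operations. Two containments make it apply here. First, pure states are a special case of mixed states, so any algorithm estimating $\Tr(\rho\sigma)$ for all mixed $\rho,\sigma$ in particular estimates $\Tr(\phi\psi)$ for pure inputs and must therefore use at least $\Omega(\sqrt{d}/\varepsilon)$ copies. Second, any interactive LOCC protocol implements a separable two-outcome POVM, so the separable lower bound is a fortiori a lower bound against interactive protocols.

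For the $\Omega(1/\varepsilon^2)$ term I would invoke Lemma~\ref{lemma:swaptestlowerbound}, which proves its bound in the non-distributed setting where arbitrary joint operations on $\rho^{\otimes k}\otimes\sigma^{\otimes k}$ are permitted. The distributed model considered here is a strict restriction of that one, since Alice and Bob may only act locally and communicate classically; hence any distributed protocol is in particular a joint protocol in the non-distributed setting, and the $\Omega(1/\varepsilon^2)$ bound proven against the more powerful model applies unchanged. Combining the two bounds yields $k=\Omega(\max\{1/\varepsilon^2,\sqrt{d}/\varepsilon\})$.

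I expect no substantive obstacle, as both ingredients are in hand; the only point requiring care is the direction of the model inclusions. A lower bound transfers to a \emph{weaker} computational model (fewer operations available) and to a \emph{more general} class of inputs (pure $\subseteq$ mixed), and I would state each inclusion explicitly to avoid the easy error of arguing it in the wrong direction.
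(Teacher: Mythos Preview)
Your proposal is correct and matches the paper's own argument essentially word for word: the paper derives the corollary by combining Theorem~\ref{thm:epsestimationlowerbound} (pure-state $\Omega(\sqrt{d}/\varepsilon)$ bound against separable operations) with Lemma~\ref{lemma:swaptestlowerbound} (non-distributed $\Omega(1/\varepsilon^2)$ bound), noting that the distributed setting is a special case of the non-distributed one. Your explicit check of the model inclusions is, if anything, more carefully stated than the paper's one-line justification.
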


\section{Inner product estimation with single-copy measurements}
\label{sec:singlecopy}
Next we show how to solve quantum inner product estimation using single-copy measurements. The algorithm only uses non-adaptive measurements and simultaneous message passing for communication (Fig.~\ref{fig:algtemplate}b). The high-level idea is to reduce quantum inner product estimation to classical inner product estimation via performing a random projection on the quantum states. The general idea of reducing quantum property estimation to classical property estimation via random projection has been used in other tasks such as shadow tomography~\cite{Huang2020predicting} and state certification~\cite{Bubeck2020entanglement}. Here we show this technique is also effective for learning multiple quantum systems.

\subsection{Classical collision estimator}
We start by reviewing the classical collision estimator for estimating the inner product of two discrete probability distributions. The idea of collision estimator is to count the pairwise collisions between two sets of samples from two discrete distributions, which has been widely used in property testing of probability distributions~\cite{Diakonikolas2019collision,Canonne2020survey}.

Consider two unknown distributions $p,q$ supported on $\{0,\dots,d-1\}$. We are given i.i.d. samples $x_1,\dots,x_m\sim p$ and $y_1,\dots,y_m\sim q$, and the goal is to estimate the inner product $g:=\sum_{b=0}^{d-1} p_b q_b$.

\begin{definition}[Collision estimator]
Given samples $x_1,\dots,x_m\sim p$ and $y_1,\dots,y_m\sim q$ from two discrete distributions $p,q$, the collision estimator is defined as
\begin{equation}\label{eq:classicalestimator}
    \tilde{g}=\frac{1}{m^2}\sum_{j,k=1}^m 1[x_j=y_k]
\end{equation}
where $1[X]=1$ when $X$ is true and 0 otherwise.
\end{definition}

It is easy to see that the collision estimator is an unbiased estimator for the inner product. We can also calculate its variance as follows.

\begin{lemma}\label{lemma:collisionvariance}
The variance of the collision estimator is upper bounded by
\begin{equation}
        \Var(\tilde{g})\leq \frac{g}{m^2}+\frac{1}{m}\left(\sum_{b=0}^{d-1} p_b q_b^2+\sum_{b=0}^{d-1} p_b^2 q_b\right).
\end{equation}
\end{lemma}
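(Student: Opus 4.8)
The plan is to expand $\Var(\tilde g)$ into a sum of pairwise covariances of the collision indicators and then classify the terms by how many sample indices they share. Writing $Z_{jk} := 1[x_j = y_k]$, so that $\tilde g = \frac{1}{m^2}\sum_{j,k} Z_{jk}$, I would first record that each $Z_{jk}$ is Bernoulli with mean $\E[Z_{jk}] = \sum_b p_b q_b = g$ (which also gives unbiasedness), and that $Z_{jk}^2 = Z_{jk}$ because it is $0/1$-valued. The variance then expands as
\begin{equation}
    \Var(\tilde g) = \frac{1}{m^4}\sum_{j,k,j',k'}\mathrm{Cov}(Z_{jk},Z_{j'k'}),
\end{equation}
and the entire proof reduces to evaluating these covariances.

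Next I would split the quadruples $(j,k,j',k')$ into four classes according to which indices coincide. The crucial structural observation is that when $j\neq j'$ \emph{and} $k\neq k'$, the pairs $(x_j,y_k)$ and $(x_{j'},y_{k'})$ involve disjoint samples, so $Z_{jk}$ and $Z_{j'k'}$ are independent and contribute zero covariance; this eliminates the dominant block of $m^2(m-1)^2$ terms. The diagonal terms ($j=j'$, $k=k'$, of which there are $m^2$) each contribute $\Var(Z_{jk}) = g - g^2$. The terms sharing only the $x$-index ($j=j'$, $k\neq k'$) satisfy $\E[Z_{jk}Z_{jk'}] = \Pr[x_j = y_k = y_{k'}] = \sum_b p_b q_b^2$ by independence of $y_k$ and $y_{k'}$, and symmetrically the terms sharing only the $y$-index give $\sum_b p_b^2 q_b$; each of these two classes contains $m^2(m-1)$ quadruples.

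Assembling the three surviving classes yields
\begin{equation}
    \Var(\tilde g) = \frac{g-g^2}{m^2} + \frac{m-1}{m^2}\left(\sum_b p_b q_b^2 + \sum_b p_b^2 q_b - 2g^2\right),
\end{equation}
after which the claimed bound follows by two elementary relaxations: dropping the negative $-g^2$ and $-2g^2$ contributions (which only decrease the right-hand side), and using $(m-1)/m^2 \le 1/m$. I do not expect a genuine obstacle here — the only thing demanding care is the bookkeeping of how many quadruples land in each class (a quick sanity check is that the four class sizes sum to $m^4$) and the verification of the independence claim in the all-distinct case. The one conceptual point worth emphasizing, namely that the cross terms with all four indices distinct vanish, is exactly what makes the variance small; everything else is routine counting followed by a comparison of coefficients.
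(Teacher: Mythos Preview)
Your proposal is correct and follows essentially the same approach as the paper: both expand the second moment of $\tilde g$ into a sum over quadruples $(j,k,j',k')$, classify by the four coincidence patterns, use independence to handle the all-distinct case, and then drop the negative $g^2$ terms to reach the stated bound. The only cosmetic difference is that you phrase the expansion via covariances rather than $\E[\tilde g^2]-g^2$, which is the same computation.
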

\begin{proof}
The variance is given by
\begin{equation}
    \begin{split}
        \Var(\tilde{g})&=\E \tilde{g}^2-g^2\\
        &=\frac{1}{m^4}\sum_{j,k,l,s=1}^m\E 1[x_j=y_k]\cdot 1[x_l=y_s]-g^2\\
        &=\frac{1}{m^4}\left(m^2 g+m^2(m-1)^2 g^2+m^2(m-1)\sum_{b=0}^{d-1} p_b q_b^2+m^2(m-1)\sum_{b=0}^{d-1} p_b^2 q_b\right)-g^2\\
        &\leq \frac{g}{m^2}+\frac{1}{m}\left(\sum_{b=0}^{d-1} p_b q_b^2+\sum_{b=0}^{d-1} p_b^2 q_b\right).
    \end{split}
\end{equation}
In the third line, the four terms in the bracket come from 4 different cases in the sum: $j=l\wedge k=s$, $j\neq l\wedge k\neq s$, $j=l\wedge k\neq s$, $j\neq l\wedge k= s$, respectively.
\end{proof}

The variance can be further upper bounded by $\Var(\tilde{g})\leq O(g/m)$, which implies the well-known fact that $O(g/\varepsilon^2)$ samples suffices to estimate $g$ within $\varepsilon$ additive error with success probability at least $2/3$. However in our case it is crucial to keep the original expression in Lemma~\ref{lemma:collisionvariance}, as we will see later that $p$ and $q$ will be close to uniform, which significantly reduces the variance compared to the worst case bound $O(g/m)$.

\begin{figure}[t]
  \begin{algorithm}[H]
    \caption{Quantum inner product estimation with single-copy measurement}\label{alg:innerproduct}
    \raggedright\textbf{Input:} number of basis settings $N$, number of measurements for each basis $m$, $k=Nm$ copies of unknown mixed states $\rho,\sigma$ acting on $\mathbb{C}^{d}$\\
    \textbf{Output:} an estimate of $f:=\Tr(\rho\sigma)$
    \begin{algorithmic}[1]
    \For {$i=1\dots N$}
        \State sample a random unitary matrix $U_i\sim\mathbb{U}(d)$
        \State measure $m$ copies of $\rho$ in the basis $\{U_i^\dag\ketbra{b}U_i\}_{b=0}^{d-1}$ and obtain $X=\{x_1,\dots,x_m\}$
        \State measure $m$ copies of $\sigma$ in the basis $\{U_i^\dag\ketbra{b}U_i\}_{b=0}^{d-1}$ and obtain $Y=\{y_1,\dots,y_m\}$
        \State compute the collision estimator~\eqref{eq:classicalestimator} using $X$ and $Y$, denote by $\tilde{g}_i$
        \State let $w_i=(d+1)\tilde{g}_i -1$
    \EndFor
    \State\textbf{Return} $w:=\frac{1}{N}\sum_{i=1}^N w_i$
    \end{algorithmic}
    \end{algorithm}
\end{figure}

\subsection{Analysis of the estimation algorithm}

Next we provide an analysis of the estimation algorithm (Algorithm~\ref{alg:innerproduct}) which uses the template in Fig.~\ref{fig:algtemplate}b. A variant of Algorithm~\ref{alg:innerproduct} was proposed by \cite{Elben2020cross} which also provided a numerical analysis of the variance. The only difference is that in Algorithm~\ref{alg:innerproduct} we use the collision estimator to estimate the classical inner product, while \cite{Elben2020cross} uses the inner product between the empirical distributions. We remark that the two estimators should have similar performance, and here we use the collision estimator because it is easier to analyze.

The main idea of Algorithm~\ref{alg:innerproduct} is to rotate $\rho,\sigma$ by a random unitary matrix $U$, then estimate the inner product of the distributions $p(U),q(U)$, where
\begin{equation}
    p_b(U)=\expval{U\rho U^\dag}{b},\,\,\,\,q_b(U)=\expval{U\sigma U^\dag}{b},
\end{equation}
which are the discrete probability distributions that correspond to the measurement outcome of $U\rho U^\dag$ and $U\sigma U^\dag$ in the computational basis. Their inner product is defined as
\begin{equation}
    g(U)=\sum_{b=0}^{d-1}\expval{U\rho U^\dag}{b}\expval{U\sigma U^\dag}{b}.
\end{equation}

Let $\tilde{g}(U,S)$ denote the collision estimator (line 5 of Algorithm~\ref{alg:innerproduct}) in a single iteration, with unitary $U$ and samples $S=\{X,Y\}$. Its mean is given by
\begin{equation}
\begin{split}
    \E_{U,S}\tilde{g}(U,S)&=\E_U g(U)\\
    &=\sum_{b=0}^{d-1}\E_U\expval{U\rho U^\dag}{b}\expval{U\sigma U^\dag}{b}\\
    &=d\E_{\psi\sim\mathbb{C}^d}\expval{\rho}{\psi}\expval{\sigma}{\psi}\\
    &=\frac{1}{d+1}\Tr(\left(I\otimes I+\swapop\right)\rho\otimes \sigma)\\
    &=\frac{1+f}{d+1}.
\end{split}
\end{equation}
Here the fourth line follows from the characterization of $\psym{d}{2}$ (Lemma~\ref{lemma:symprojector}, also see Lemma~\ref{lemma:randomstatemoments}). This implies the following.
\begin{lemma}\label{lemma:singlecopyunbiased}
Let $w$ be the estimator returned by Algorithm~\ref{alg:innerproduct}. Then $w$ is an unbiased estimator for $f=\Tr(\rho\sigma)$.
\end{lemma}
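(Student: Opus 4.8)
The plan is to reduce the statement to the single-iteration computation already carried out above, namely that $\E_{U,S}\tilde{g}(U,S) = \frac{1+f}{d+1}$, and then propagate this identity through the affine rescaling and the averaging that together define $w$.

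First I would observe that the $N$ iterations of the loop are mutually independent and identically distributed: each iteration draws a fresh Haar-random $U_i \sim \mathbb{U}(d)$ and fresh single-copy measurement outcomes $S_i = \{X,Y\}$, and $\tilde{g}_i = \tilde{g}(U_i, S_i)$ is computed only from these. Hence each $\tilde{g}_i$ has the same law as $\tilde{g}(U,S)$, and in particular
\begin{equation}
    \E \tilde{g}_i = \E_{U,S}\tilde{g}(U,S) = \frac{1+f}{d+1}, \qquad i = 1,\dots,N.
\end{equation}
Here I am relying on two ingredients that are already available: that the collision estimator is unbiased for the classical inner product $g(U)$ once $U$ is fixed, so that $\E_S \tilde{g}(U,S) = g(U)$, and the Haar-average identity $\E_U g(U) = \frac{1+f}{d+1}$ obtained from the second-moment formula for random states via Lemma~\ref{lemma:symprojector}.

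Second, I would apply the definition $w_i = (d+1)\tilde{g}_i - 1$ together with linearity of expectation to invert the rescaling,
\begin{equation}
    \E w_i = (d+1)\,\E \tilde{g}_i - 1 = (d+1)\cdot\frac{1+f}{d+1} - 1 = f,
\end{equation}
and finally, since $w = \frac{1}{N}\sum_{i=1}^N w_i$, linearity gives $\E w = \frac{1}{N}\sum_{i=1}^N \E w_i = f$, which is the claim.

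The substantive content of the argument is entirely contained in the single-iteration expectation, whose only genuinely nontrivial step is the Haar integral $\E_U g(U) = \frac{1+f}{d+1}$; the affine normalization $w_i = (d+1)\tilde{g}_i - 1$ is precisely engineered to turn this into $f$. If I were proving the lemma from scratch rather than quoting the preceding display, the main obstacle would be exactly this Haar computation — verifying $d\,\E_{\psi\sim\mathbb{C}^d}\expval{\rho}{\psi}\expval{\sigma}{\psi} = \frac{1}{d+1}\Tr\!\left((I\otimes I + \swapop)\,\rho\otimes\sigma\right)$ using the projector identity for $\psym{d}{2}$ — but since it is established above, the remaining steps reduce to independence of the iterations, an affine shift, and linearity of expectation.
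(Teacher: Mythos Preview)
Your proposal is correct and follows exactly the same approach as the paper: the paper derives $\E_{U,S}\tilde{g}(U,S)=\frac{1+f}{d+1}$ from the second-moment formula for $\psym{d}{2}$ and then simply states that this implies the lemma, leaving the affine rescaling and averaging implicit. You have spelled out those routine steps explicitly, but there is no substantive difference.
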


To calculate the variance of $w$ it suffices to calculate the variance of $\tilde{g}(U,S)$. This is because each iteration $i=1,\dots,N$ is independent and
\begin{equation}
    \Var(w)=\frac{1}{N^2}\sum_{i=1}^N \Var(w_i)=\frac{1}{N}(d+1)^2\Var(\tilde{g}(U,S)).
\end{equation}

The total variance $\Var(\tilde{g}(U,S))$ involves two parts: the variance introduced by the random unitary $U$ and the variance introduced by the intrinsic randomness of quantum measurement. The law of total variance gives
\begin{equation}\label{eq:totalvariance}
    \Var(\tilde{g}(U,S))=\Var_U(\E[\tilde{g}(U,S)|U])+\E_U[\Var(\tilde{g}(U,S)|U)],
\end{equation}
which precisely corresponds to the two parts described above. These terms can be upper bounded as follows.

\begin{lemma}\label{lemma:singlecopyvariance}
The variance of $\tilde{g}(U,S)$ with respect to $U$ and $S$ can be upper bounded by
\begin{equation}
    \Var_U(\E[\tilde{g}(U,S)|U])=O\left(\frac{1}{d^3}\right),\,\,\,\,\E_U[\Var(\tilde{g}(U,S)|U)]=O\left(\frac{1}{m^2 d}+\frac{1}{m d^2}\right).
\end{equation}
Therefore the total variance is
\begin{equation}
    \Var(\tilde{g}(U,S))=O\left(\frac{1}{d^3}+\frac{1}{m^2 d}+\frac{1}{m d^2}\right).
\end{equation}
\end{lemma}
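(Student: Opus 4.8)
The plan is to bound the two terms of the total variance \eqref{eq:totalvariance} separately, reducing each to a moment computation over Haar-random states through the symmetric-subspace identity $\E_{\psi}\ketbra{\psi}^{\otimes t}=\binom{d+t-1}{t}^{-1}\psym{d}{t}$ from Lemma~\ref{lemma:symprojector}. Throughout I will use that, by unitary invariance, every computational-basis outcome $b$ contributes equally, so a sum over $b$ of a moment of $p_b(U),q_b(U)$ collapses to $d$ (or $d(d-1)$) times a moment of the Haar-random vectors $\ket{\psi_b}:=U^\dag\ket{b}$.

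For the second term $\E_U[\Var(\tilde g(U,S)\mid U)]$ I would start from Lemma~\ref{lemma:collisionvariance} applied conditionally on $U$, giving $\Var(\tilde g(U,S)\mid U)\leq \frac{g(U)}{m^2}+\frac{1}{m}\big(\sum_b p_b(U)q_b(U)^2+\sum_b p_b(U)^2 q_b(U)\big)$, and then take $\E_U$. The first piece contributes $\frac{1}{m^2}\E_U g(U)=\frac{1+f}{m^2(d+1)}=O(1/(m^2d))$. For the next piece, $\E_U\sum_b p_b(U)q_b(U)^2=d\,\E_{\psi}[\expval{\rho}{\psi}\expval{\sigma}{\psi}^2]=d\binom{d+2}{3}^{-1}\Tr\!\big[(\rho\otimes\sigma\otimes\sigma)\psym{d}{3}\big]$, a third moment. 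Since $\binom{d+2}{3}^{-1}=O(1/d^3)$ and the symmetric-subspace trace is $\frac{1}{6}\sum_{\pi\in S_3}\Tr[(\rho\otimes\sigma\otimes\sigma)P_d(\pi)]$, a sum of products of traces of $\rho,\sigma$ each bounded by $1$, this is $O(1/d^2)$; the $p_b^2 q_b$ term is identical by symmetry. Multiplying by $1/m$ yields $O(1/(md^2))$, so $\E_U[\Var(\tilde g(U,S)\mid U)]=O(1/(m^2d)+1/(md^2))$.

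For the first term I would use conditional unbiasedness, $\Var_U(\E[\tilde g\mid U])=\Var_U(g(U))=\E_U g(U)^2-(\E_U g(U))^2$, and expand $g(U)^2=\sum_{b,b'}p_b q_b p_{b'} q_{b'}$, splitting into $b=b'$ and $b\neq b'$. The diagonal part equals $d\,\E_{\psi}[\expval{\rho}{\psi}^2\expval{\sigma}{\psi}^2]=d\binom{d+3}{4}^{-1}\Tr\!\big[(\rho\otimes\rho\otimes\sigma\otimes\sigma)\psym{d}{4}\big]=O(1/d^3)$ by the same reasoning (now a fourth moment, $\binom{d+3}{4}^{-1}=O(1/d^4)$). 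The off-diagonal part equals $d(d-1)c$ with $c:=\E_U[p_0 q_0 p_1 q_1]$, an expectation over the first two columns of $U$, i.e. over a pair of orthonormal Haar-random vectors $\ket{\psi_0},\ket{\psi_1}$. I would evaluate $c$ by first drawing $\ket{\psi_0}$ Haar-random and then $\ket{\psi_1}$ Haar-random in the $(d-1)$-dimensional complement $\Pi_\perp=I-\ketbra{\psi_0}$, using the second-moment identity in that subspace, $\E_{\psi_1\perp\psi_0}\ketbra{\psi_1}^{\otimes 2}=\frac{2}{d(d-1)}\cdot\frac12\big(\Pi_\perp^{\otimes 2}+S_\perp\big)$. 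Expanding $\Pi_\perp$ and writing $r:=\expval{\rho}{\psi_0}$, $s:=\expval{\sigma}{\psi_0}$, the inner expectation becomes $\frac{1}{d(d-1)}\big[(1-r)(1-s)+f-\expval{\rho\sigma}{\psi_0}-\expval{\sigma\rho}{\psi_0}+rs\big]$. Multiplying by $rs$ and taking $\E_{\psi_0}$, the leading terms collect to $c=\frac{1}{d(d-1)}\big[a(1+f)-O(1/d^3)\big]$, where $a:=\E_U p_0 q_0=\frac{1+f}{d(d+1)}$ and the $O(1/d^3)$ absorbs the higher (third- and fourth-moment) corrections.

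The final step is the cancellation. Since $\E_U g(U)=da$, we have $(\E_U g(U))^2=d^2a^2$, and
\begin{equation}
d(d-1)c-d^2a^2=a(1+f)-d^2a^2-O(1/d^3)=(1+f)^2\Big(\frac{1}{d(d+1)}-\frac{1}{(d+1)^2}\Big)-O(1/d^3)=\frac{(1+f)^2}{d(d+1)^2}-O(1/d^3),
\end{equation}
which is $O(1/d^3)$. Combined with the diagonal contribution this gives $\Var_U(g(U))=O(1/d^3)$, and adding the two terms proves the total-variance bound. The hardest part of the whole argument is controlling $c$ precisely enough: both $d(d-1)c$ and $d^2a^2$ are of order $1/d^2$ individually, so only their exact leading coefficients cancel, leaving the claimed $1/d^3$ remainder. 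Any crude bound on the orthogonal-complement moment that carried $O(1/d^2)$ slack would destroy the rate, so the exact second-moment computation in the subspace $\Pi_\perp$ — rather than a norm estimate — is essential.
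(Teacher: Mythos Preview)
Your proposal is correct and follows essentially the same route as the paper: both arguments bound $\E_U[\Var(\tilde g\mid U)]$ via Lemma~\ref{lemma:collisionvariance} and a third-moment computation, and both handle $\Var_U(g(U))$ by splitting $\sum_{b,b'}$ into diagonal and off-diagonal parts, evaluating the off-diagonal term through the second-moment identity for $\ket{\psi'}$ in the orthogonal complement (the paper's Lemma~\ref{lemma:symperpprojector}), and exhibiting the same cancellation $\frac{(1+f)^2}{d(d+1)}-\frac{(1+f)^2}{(d+1)^2}=O(1/d^3)$. Your closing remark that this cancellation is the crux, and that any $O(1/d^2)$ slack in bounding $c$ would ruin the rate, precisely matches the paper's handling (they drop only the negative correction terms to keep the leading coefficient exact).
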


The proof is given in Appendix~\ref{app:singlecopyvariance}, where we also show that the above bounds are tight. In particular, when $\rho,\sigma$ are pure states the exact form of the variance is given by
\begin{equation}\label{eq:singlecopyvarianceconstantfactor}
\begin{split}
    \Var(\tilde{g}(U,S))&=\frac{d^2 (1+f)^2-d (6-f) f+d+2 (1-f)^2}{d (d+1)^2 (d+2) (d+3)}\\
    &+
    \frac{1+f}{(d+1)m^2}+\frac{m-1}{m^2}\frac{4+8f}{(d+1)(d+2)}-\frac{2m-1}{m^2}\frac{d^2 (1+f)^2+5 d (1+f)^2+2 (1-f)^2}{d (d+1) (d+2) (d+3)}\\
    &=\Omega(1)\cdot \left(\frac{1}{d^3}+\frac{1}{m^2 d}+\frac{1}{m d^2}\right),
\end{split}
\end{equation}
where the first line corresponds to $\Var_U(\E[\tilde{g}(U,S)|U])$, the second line corresponds to $\E_U[\Var(\tilde{g}(U,S)|U)]$, $f=\Tr(\rho\sigma)$ and the constant in $\Omega(1)$ does not depend on $f$.

Combining Lemma~\ref{lemma:singlecopyunbiased} and Lemma~\ref{lemma:singlecopyvariance} we have the following characterization of the performance of Algorithm~\ref{alg:innerproduct}.

\begin{theorem}\label{thm:singlecopyalg}
Let $w$ be the estimator returned by Algorithm~\ref{alg:innerproduct}, which uses $N$ random unitary bases and $m$ copies for each basis. Then we have $\E w=f$ and
\begin{equation}\label{eq:singlecopyvariance}
    \Var(w)=\frac{1}{N}O\left(\frac{1}{d}+\frac{d}{m^2}+\frac{1}{m}\right).
\end{equation}
The sample complexity is given by $k=N m$. When $\varepsilon=\Omega(1)$, it suffices to choose $N=1$ and $m=O(\sqrt{d}/\varepsilon)$ to achieve $\varepsilon$ additive error with probability at least $2/3$. In general, for any $\varepsilon>0$, Algorithm~\ref{alg:innerproduct} returns an estimate of $f$ within $\varepsilon$ additive error with probability at least $2/3$, provided that
\begin{equation}
    k\geq C\cdot \max\left\{\frac{1}{\varepsilon^2},\frac{\sqrt{d}}{\varepsilon}\right\}
\end{equation}
for some constant $C>0$ that does not depend on $f$.
\end{theorem}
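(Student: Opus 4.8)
The plan is to establish the three claims in turn---unbiasedness, the variance bound \eqref{eq:singlecopyvariance}, and the sample complexity---where only the last requires real work, since the first two reduce to results already proved. Unbiasedness is immediate from Lemma~\ref{lemma:singlecopyunbiased}, which gives $\E w = f$. For the variance, I would first note that the $N$ iterations of the loop in Algorithm~\ref{alg:innerproduct} are mutually independent, as each draws a fresh unitary $U_i$ together with fresh measurement samples. Writing $w = \frac{1}{N}\sum_{i=1}^N w_i$ with $w_i = (d+1)\tilde g_i - 1$, independence yields $\Var(w) = \frac{1}{N^2}\sum_i \Var(w_i) = \frac{1}{N}(d+1)^2\Var(\tilde g(U,S))$, which is exactly the identity recorded just before Lemma~\ref{lemma:singlecopyvariance}. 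Substituting the bound $\Var(\tilde g(U,S)) = O(1/d^3 + 1/(m^2 d) + 1/(m d^2))$ from Lemma~\ref{lemma:singlecopyvariance} and using $(d+1)^2 = O(d^2)$ gives $\Var(w) = \frac{1}{N}\,O(1/d + d/m^2 + 1/m)$, establishing \eqref{eq:singlecopyvariance}.

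For the error guarantee I would invoke Chebyshev's inequality, $\Pr[\,|w-f|\ge\varepsilon\,]\le \Var(w)/\varepsilon^2$, so that success probability at least $2/3$ is guaranteed as soon as $\Var(w)\le\varepsilon^2/3$. Writing the variance bound as $\Var(w)\le \frac{1}{N}\big(\frac{c_1}{d}+\frac{c_2 d}{m^2}+\frac{c_3}{m}\big)$ for absolute constants $c_1,c_2,c_3$ (independent of $f$, per the uniform $O(\cdot)$ in Lemma~\ref{lemma:singlecopyvariance}), the problem reduces to minimizing $k=Nm$ subject to this constraint. I would split into two regimes according to whether the $U$-induced floor term $c_1/d$, which no amount of within-basis sampling can reduce, already lies below $\varepsilon^2$.

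In the regime $\varepsilon = \Omega(1/\sqrt d)$ one has $c_1/d\le \varepsilon^2/9$ for a suitable constant, so I would set $N=1$ and choose $m$ large enough that the remaining two terms each drop below $\varepsilon^2/9$: killing $c_2 d/m^2$ needs $m=\Omega(\sqrt d/\varepsilon)$ and killing $c_3/m$ needs $m=\Omega(1/\varepsilon^2)$, giving $k=m=O(\max\{\sqrt d/\varepsilon,\,1/\varepsilon^2\})$; for constant $\varepsilon$ this specializes to $N=1$, $m=O(\sqrt d/\varepsilon)$, as claimed. In the complementary regime $\varepsilon=O(1/\sqrt d)$, where $\max\{1/\varepsilon^2,\sqrt d/\varepsilon\}=1/\varepsilon^2$, I would instead take $m=\Theta(d)$, which forces all three terms down to $\Theta(1/d)$, so that $N=\Theta(1/(d\varepsilon^2))$ suffices---a legitimate integer at least $1$ precisely because $\varepsilon^2\le 1/d$ here---and then $k=Nm=\Theta(1/\varepsilon^2)$. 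Combining the two regimes yields $k=O(\max\{1/\varepsilon^2,\sqrt d/\varepsilon\})$ in all cases.

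The main obstacle is this final optimization rather than any single estimate. The essential point is that $c_1/d$ acts as a hard floor on the single-basis variance, so once $\varepsilon$ falls below $\sim 1/\sqrt d$ one is forced to average over $N>1$ independent bases; the nontrivial observation is that the accompanying choice $m=\Theta(d)$ keeps the total budget at $O(1/\varepsilon^2)$ instead of blowing up, and that $\Theta(d)$ is exactly the value balancing $c_1 m/d$ against $c_2 d/m$. The remaining care---handling the integrality of $N$ and $m$ by rounding up (which only decreases the variance) and matching the constant hidden in $\Omega(1/\sqrt d)$ to $c_1$ at the regime boundary---is routine and does not affect the asymptotic conclusion.
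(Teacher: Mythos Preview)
Your proposal is correct and follows essentially the same approach as the paper: unbiasedness and the variance bound are pulled directly from Lemmas~\ref{lemma:singlecopyunbiased} and~\ref{lemma:singlecopyvariance}, and the sample-complexity argument is an optimization over $N$ and $m$ via Chebyshev. The paper packages the optimization as the single choice $N=\max\{1,1/(d\varepsilon^2)\}$, $m=\sqrt d/(\varepsilon\sqrt N)$ rather than your explicit two-regime split, but these are the same parameters (in your second regime $m=\sqrt d/(\varepsilon\sqrt N)=d$), so the difference is purely presentational.
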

\begin{proof}
For any $\varepsilon\in(0,1)$, from the three terms in Eq.~\eqref{eq:singlecopyvariance} it is necessary and sufficient to have
\begin{equation}
    N\geq\max\left\{1,\frac{1}{d \varepsilon^2}\right\},\,\,\,\,Nm^2\geq \frac{d}{\varepsilon^2},\,\,\,\,Nm\geq\frac{1}{\varepsilon^2}.
\end{equation}
Here we ignored constants. The third condition implies that $k\geq \frac{1}{\varepsilon^2}$. The first two conditions imply that
\begin{equation}
    k\geq N\cdot \frac{1}{\sqrt{N}}\cdot \frac{\sqrt{d}}{\varepsilon}\geq\frac{\sqrt{d}}{\varepsilon}\cdot \max\left\{1,\frac{1}{\sqrt{d} \varepsilon}\right\}=\max\left\{\frac{1}{\varepsilon^2},\frac{\sqrt{d}}{\varepsilon}\right\}.
\end{equation}
This lower bound is also achievable as we can choose $N=\max\left\{1,\frac{1}{d \varepsilon^2}\right\}$ and $m=\frac{\sqrt{d}}{\varepsilon\sqrt{N}}$.
\end{proof}

Combining our lower bound in Corollary~\ref{cor:lowerbound} with Theorem~\ref{thm:singlecopyalg} we have the following.
\begin{theorem}[Main result, restatement]
For any $\varepsilon\in(0,1)$, the sample complexity of $\ipe$ is
\begin{equation}
    k=\Theta\left(\max\left\{\frac{1}{\varepsilon^2},\frac{\sqrt{d}}{\varepsilon}\right\}\right)
\end{equation}
across all measurement and communication settings.
\end{theorem}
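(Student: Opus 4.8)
The plan is to obtain matching upper and lower bounds from results already established and then observe that they pin down the complexity uniformly across the entire hierarchy of communication and measurement models. First I would invoke Theorem~\ref{thm:singlecopyalg}, which exhibits an explicit algorithm (Algorithm~\ref{alg:innerproduct}, following the template in Fig.~\ref{fig:algtemplate}b) achieving additive error $\varepsilon$ with $k=O(\max\{1/\varepsilon^2,\sqrt{d}/\varepsilon\})$ copies. The crucial point is that this algorithm lives in the \emph{weakest} setting: it uses only non-adaptive single-copy measurements and simultaneous message passing. Since any stronger model can simulate this protocol — adaptive measurements may ignore adaptivity, multi-copy measurements may measure one copy at a time, and the richer communication models of Fig.~\ref{fig:communication} can reproduce SMP by having Alice and Bob forward their messages — the same upper bound $O(\max\{1/\varepsilon^2,\sqrt{d}/\varepsilon\})$ holds verbatim in every setting.

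Next I would invoke Corollary~\ref{cor:lowerbound}, which establishes $k=\Omega(\max\{1/\varepsilon^2,\sqrt{d}/\varepsilon\})$ in the \emph{strongest} setting, namely arbitrary multi-copy measurements together with arbitrary interactive (indeed separable) protocols. Here the $\Omega(\sqrt{d}/\varepsilon)$ half comes from the reduction to DIPE in Theorem~\ref{thm:epsestimationlowerbound}, while the $\Omega(1/\varepsilon^2)$ half comes from the SWAP-test lower bound of Lemma~\ref{lemma:swaptestlowerbound}; Corollary~\ref{cor:lowerbound} already packages their maximum. Because any weaker model is a special case of the strongest one, a protocol in a weaker model is in particular a protocol in the strongest model, so this lower bound descends to every setting.

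Finally I would combine the two bounds by a monotonicity (sandwiching) argument. For each fixed communication/measurement model $\mathcal{M}$ in the hierarchy, the upper bound derived in the weakest model and the lower bound derived in the strongest model both apply to $\mathcal{M}$, squeezing its sample complexity between $\Omega(\max\{1/\varepsilon^2,\sqrt{d}/\varepsilon\})$ and $O(\max\{1/\varepsilon^2,\sqrt{d}/\varepsilon\})$. Hence the complexity equals $\Theta(\max\{1/\varepsilon^2,\sqrt{d}/\varepsilon\})$ for every $\mathcal{M}$, which is the claim.

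There is no genuine obstacle in this last step — all the difficulty has been front-loaded into the component theorems. The only point requiring care is to make the embedding of models explicit: one must verify that the ordering of settings by power is real, i.e.\ that each weaker model genuinely embeds into each stronger one, so that as one moves up the hierarchy the achievable upper bound can only increase and the valid lower bound can only decrease, forcing the two to meet. I would state this embedding explicitly rather than leave it implicit, since it is precisely what collapses the whole hierarchy to a single $\Theta$.
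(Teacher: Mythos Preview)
Your proposal is correct and matches the paper's approach exactly: the paper simply states that the theorem follows by combining Corollary~\ref{cor:lowerbound} with Theorem~\ref{thm:singlecopyalg}, and your sandwiching argument spells out precisely why that combination suffices across the full hierarchy of models. One small wording slip at the end --- as you move to \emph{stronger} models the achievable sample complexity can only \emph{decrease}, not increase --- but your earlier paragraphs state the embedding correctly and the logic is sound.
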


Therefore our results completely settle the sample complexity of distributed quantum inner product estimation in the general setting without any prior knowledge on $f$.

Interestingly, if we are promised an upper bound on $f$, then for pure states our algorithm with multi-copy measurements (Algorithm~\ref{alg:innerproductmulti}) has better sample complexity. For this, note that in Eq.~\eqref{eq:multicopysamplecomplexity} the first two terms in the RHS have respective factors of $f$ and $\sqrt{f}$. If $f$ is promised to be small, then the number of copies with Algorithm~\ref{alg:innerproductmulti} clearly improve in comparison to the worst case ($f=1$). In contrast, our algorithm with single-copy measurements (Algorithm~\ref{alg:innerproduct}) does not have this feature due to the constant prefactor (as shown in Eq.~\eqref{eq:singlecopyvarianceconstantfactor}) that does not depend on $f$.

\section*{Acknowledgements}
We thank Srinivasan Arunachalam, Soonwon Choi, Aram Harrow, Chinmay Nirkhe, John Wright, Guanyu Zhu and especially Umesh Vazirani for helpful comments and discussions. We would also like to thank the Simons Institute for the Theory of Computing where part of this work was done. This work was supported by NSF QLCI program through grant number OMA-2016245, DOE NQISRC QSA grant \#FP00010905, Vannevar Bush faculty fellowship N00014-17-1-3025, MURI Grant FA9550-18-1-0161. Y.L. was also supported by NSF award DMR-1747426.

\printbibliography

\appendix

\section{Proof of Lemma~\ref{lemma:varianceofmulticopyestimation}}
\label{app:varianceofmulticopyestimation}
Let $w=(\left|\braket{u}{v}\right|^2-A)/B$ where $A=\frac{d+2k}{(d+k)^2}$, $B=\frac{k^2}{(d+k)^2}$. The variance is given by
\begin{equation}
\begin{split}
    \Var(w)&=\E w^2-f^2\\
    &=\frac{1}{B^2}\E\left|\braket{u}{v}\right|^4-\frac{A^2}{B^2}-\frac{2A f}{B}-f^2\\
    &=\frac{(d+k)^4}{k^4}\E\left|\braket{u}{v}\right|^4-\frac{(d+2k)^2}{k^4}-\frac{2(d+2k)f}{k^2}-f^2.
\end{split}
\end{equation}
Therefore it suffices to calculate $\E\left|\braket{u}{v}\right|^4$. Recall that 
\begin{equation}
    \braket{u}{v}=\alpha\beta e^{i(\varphi-\theta)}\braket{\phi}{\psi}+\sqrt{1-\alpha^2}\beta e^{i\varphi}\braket{\phi'}{\psi}+\sqrt{1-\beta^2}\alpha e^{-i\theta}\braket{\phi}{\psi'}+\sqrt{(1-\alpha^2)(1-\beta^2)}\braket{\phi'}{\psi'},
\end{equation}
where $\varphi,\theta$ are random phases. The first step in calculating $\E\left|\braket{u}{v}\right|^4$ is to perform the integral with respect to the random phases. This can be achieved by the following lemma.

\begin{lemma}
\label{lem:complexlem}
Let $q,g,h,l$ be complex numbers. Then  
\begin{equation}\label{eq:complexlem}
\begin{split}
    &\E_{\theta,\phi\sim[0,2\pi]}\left|e^{i(\phi-\theta)}q+e^{i\phi}g+e^{-i\theta}h+l\right|^4\\
    &=|q|^4+|g|^4+|h|^4+|l|^4\\
    &+4\left(|g|^2|h|^2+|l|^2|g|^2+|l|^2|h|^2+|q|^2|g|^2+|q|^2|h|^2+|q|^2|l|^2+qlg^*h^*+ghq^*l^*\right).
\end{split}
\end{equation}
\end{lemma}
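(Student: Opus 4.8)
The plan is to expand $|z|^4 = (z\bar z)^2$ as a sum over four indices and integrate term-by-term, using that the only surviving contributions are those whose total phase vanishes identically. I would write $z = \sum_{a=1}^4 c_a e^{i\mu_a}$ with coefficients $(c_1,c_2,c_3,c_4) = (q,g,h,l)$ and phases $\mu_1 = \phi-\theta$, $\mu_2 = \phi$, $\mu_3 = -\theta$, $\mu_4 = 0$; equivalently, I associate to each term the frequency vector $v_a \in \mathbb{Z}^2$ recording the coefficients of $\phi$ and $\theta$, so that $v_1 = (1,-1)$, $v_2 = (1,0)$, $v_3 = (0,-1)$, $v_4 = (0,0)$.

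Then $|z|^4 = \sum_{a,b,c,d} c_a \bar c_b c_c \bar c_d \, e^{i(\mu_a - \mu_b + \mu_c - \mu_d)}$, and since $\phi,\theta$ are independent uniform phases, $\E_{\theta,\phi} e^{i(\mu_a - \mu_b + \mu_c - \mu_d)}$ equals $1$ if $v_a - v_b + v_c - v_d = 0$ and $0$ otherwise. This reduces the computation to enumerating all quadruples $(a,b,c,d)$ with $v_a + v_c = v_b + v_d$, i.e. all pairs of unordered index-pairs $\{a,c\},\{b,d\}$ having the same frequency sum.

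The key observation is that among the ten possible pair-sums $v_a + v_c$, all are distinct except for the single coincidence $v_1 + v_4 = v_2 + v_3 = (1,-1)$. For every sum class realized by a unique pair, the matching quadruples are precisely those with $\{b,d\} = \{a,c\}$: the diagonal pairs $\{a,a\}$ yield $|c_a|^4$, and each off-diagonal pair $\{a,c\}$ yields, over its $2\times 2$ orderings, $4|c_a|^2|c_c|^2$. The coincident class $(1,-1)$ additionally allows $\{a,c\} = \{1,4\}$ to match $\{b,d\} = \{2,3\}$ and vice versa, producing the cross terms $4\,qlg^*h^*$ and $4\,ghq^*l^*$. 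Collecting all contributions then yields exactly the claimed expression.

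The computation is essentially a bookkeeping exercise rather than a conceptual difficulty; the one point that requires care is not to overlook the accidental frequency coincidence $v_1 + v_4 = v_2 + v_3$, which is the sole source of the two cross terms and the only way in which the answer differs from what one would obtain if the four phases were mutually independent.
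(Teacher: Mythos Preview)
Your proposal is correct and is essentially the same brute-force expansion the paper invokes (the paper gives no details beyond ``by brute force calculation''), organized cleanly via frequency vectors; your identification of the unique coincidence $v_1+v_4=v_2+v_3$ as the source of the two non-modulus cross terms is exactly the point of the computation.
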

The proof is by brute force calculation. To use this lemma, we assign the complex numbers $q,g,h,l$ as
\begin{equation}
    \begin{split}
       q&=\alpha\beta \braket{\phi}{\psi}\\
        g&=\sqrt{1-\alpha^2}\beta \braket{\phi'}{\psi}\\
        h&=\sqrt{1-\beta^2}\alpha\braket{\phi}{\psi'}\\
        l&=\sqrt{(1-\alpha^2)(1-\beta^2)}\braket{\phi'}{\psi'}.
    \end{split}
\end{equation}
It remains to calculate each term in RHS of Eq.~\eqref{eq:complexlem}. The key tools to calculate these terms are the moments of the random state $\ketbra{\phi'}$. Recall that for an arbitrary state $\ket{\phi}\in\mathbb{C}^d$, the random state $\ket{\phi'}$ is defined as a uniformly random state in $\mathbb{C}^{d-1}_{\phi_\perp}$, which is the $d-1$ dimensional subspace of $\mathbb{C}^d$ that is perpendicular to $\ket{\phi}$. The moments of $\ketbra{\phi'}$ is given as follows.

\begin{lemma}
\label{lemma:symperpprojector}
Let $\ket{\psi}\in\mathbb{C}^d$ be an arbitrary vector, and let $\mathbb{C}^{d-1}_{\psi_\perp}$ be the $d-1$ dimensional subspace of $\mathbb{C}^d$ that is perpendicular to $\ket{\psi}$. Then
\begin{equation}
    \E_{\psi'\sim \mathbb{C}^{d-1}_{\psi_\perp}}\ketbra{\psi'}^{\otimes k}=\frac{1}{(d+k-2)\cdots (d-1)}\left(I-\ketbra{\psi}\right)^{\otimes k}\sum_{\pi\in S_k}P_d(\pi)\left(I-\ketbra{\psi}\right)^{\otimes k}.
\end{equation}
In particular, the $k=1,2$ moments are given as
\begin{equation}
    \begin{split}
        &\E_{\psi'\sim \mathbb{C}^{d-1}_{\psi_\perp}}\ketbra{\psi'}=\frac{I-\ketbra{\psi}}{d-1},\\
        &\E_{\psi'\sim \mathbb{C}^{d-1}_{\psi_\perp}}\ketbra{\psi'}^{\otimes 2}=\frac{1}{d(d-1)}\left(\left(I-\ketbra{\psi}\right)^{\otimes 2}+\left(I-\ketbra{\psi}\right)^{\otimes 2}\swapop\left(I-\ketbra{\psi}\right)^{\otimes 2}\right).
    \end{split}
\end{equation}
\end{lemma}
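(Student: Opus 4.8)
The plan is to recognize the expectation $\E_{\psi'\sim\mathbb{C}^{d-1}_{\psi_\perp}}\ketbra{\psi'}^{\otimes k}$ as a normalized symmetric projector on the subspace $\mathbb{C}^{d-1}_{\psi_\perp}$, and then to re-express that projector in terms of operators on the full space $\mathbb{C}^d$. First I would observe that $\ket{\psi'}$ is a uniformly random unit vector in the $(d-1)$-dimensional space $V:=\mathbb{C}^{d-1}_{\psi_\perp}$, so that fixing any isometric identification $V\cong\mathbb{C}^{d-1}$ and applying Lemma~\ref{lemma:symprojector} in dimension $d-1$ gives
\begin{equation}
    \E_{\psi'\sim V}\ketbra{\psi'}^{\otimes k}=\binom{d+k-2}{k}^{-1}\Pi_V,
\end{equation}
where $\Pi_V$ denotes the orthogonal projector onto the symmetric subspace $\vee^k V\subseteq V^{\otimes k}$, regarded as an operator on $(\mathbb{C}^d)^{\otimes k}$ that annihilates the orthogonal complement of $V^{\otimes k}$.

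The key step is to compute $\Pi_V$ explicitly. Writing $P_\perp:=I-\ketbra{\psi}$ for the orthogonal projector of $\mathbb{C}^d$ onto $V$, I note that $P_\perp^{\otimes k}$ is exactly the projector onto $V^{\otimes k}$, and that it commutes with every permutation operator $P_d(\pi)$: conjugating $P_\perp^{\otimes k}$ by $P_d(\pi)$ permutes the tensor slots, but since the same factor $P_\perp$ sits in every slot the result is again $P_\perp^{\otimes k}$. Consequently $P_\perp^{\otimes k}$ commutes with the full symmetric projector $\psym{d}{k}=\frac{1}{k!}\sum_{\pi\in S_k}P_d(\pi)$ (Lemma~\ref{lemma:symprojector}), so the product $P_\perp^{\otimes k}\psym{d}{k}$ is itself an orthogonal projector, namely the projector onto the intersection of ranges $V^{\otimes k}\cap\ssym{d}{k}=\vee^k V$. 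This identifies $\Pi_V=P_\perp^{\otimes k}\psym{d}{k}=\frac{1}{k!}P_\perp^{\otimes k}\big(\sum_{\pi\in S_k}P_d(\pi)\big)P_\perp^{\otimes k}$, where the last equality uses the commutation and idempotence of $P_\perp^{\otimes k}$ to symmetrize the sandwich.

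Combining the two displays and simplifying the scalar $\binom{d+k-2}{k}^{-1}\cdot\frac{1}{k!}=\frac{1}{(d+k-2)\cdots(d-1)}$ yields the general formula. The $k=1$ and $k=2$ cases then follow by direct substitution: for $k=1$ the sum over $S_1$ is just $I$ and $P_\perp^2=P_\perp$, giving $(I-\ketbra{\psi})/(d-1)$; for $k=2$ the sum over $S_2$ is $I+\swapop$, producing the two stated terms $(I-\ketbra{\psi})^{\otimes 2}$ and $(I-\ketbra{\psi})^{\otimes 2}\swapop(I-\ketbra{\psi})^{\otimes 2}$ with normalization $\frac{1}{d(d-1)}$. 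I expect the only genuinely delicate point to be the justification in the second paragraph that a product of two commuting orthogonal projectors is the orthogonal projector onto the intersection of their ranges, together with the identification $V^{\otimes k}\cap\ssym{d}{k}=\vee^k V$; everything else is bookkeeping with binomial coefficients.
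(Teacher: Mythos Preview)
Your proposal is correct and follows essentially the same approach as the paper: both recognize the expectation as the normalized symmetric projector onto $\vee^k\mathbb{C}^{d-1}_{\psi_\perp}$ via Lemma~\ref{lemma:symprojector} in dimension $d-1$, and then rewrite that projector as $P_\perp^{\otimes k}\big(\sum_{\pi}P_d(\pi)\big)P_\perp^{\otimes k}$ up to normalization. Your justification via commuting projectors and the identification $V^{\otimes k}\cap\ssym{d}{k}=\vee^k V$ is slightly more explicit than the paper's one-line remark that ``each permutation operator on $(\mathbb{C}^{d-1}_{\psi_\perp})^{\otimes k}$ equals the same permutation on $(\mathbb{C}^d)^{\otimes k}$ with the basis $\ketbra{\psi}$ being projected out,'' but the content is the same.
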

\begin{proof}
The proof follows from a simple observation. Note that LHS is proportional to the orthogonal projector onto $\vee^k \mathbb{C}^{d-1}_{\psi_\perp}$. Lemma~\ref{lemma:symprojector} implies that 
\begin{equation}
\begin{split}
    \E_{\psi'\sim \mathbb{C}^{d-1}_{\psi_\perp}}\ketbra{\psi'}^{\otimes k}&=\frac{1}{(d+k-2)\cdots (d-1)}\left(\text{sum of all permutations on }(\mathbb{C}^{d-1}_{\psi_\perp})^{\otimes k}\right)\\
    &=\frac{1}{(d+k-2)\cdots (d-1)}\left(I-\ketbra{\psi}\right)^{\otimes k}\sum_{\pi\in S_k}P_d(\pi)\left(I-\ketbra{\psi}\right)^{\otimes k},
\end{split}
\end{equation}
where the second line follows from the fact that each permutation operator on $(\mathbb{C}^{d-1}_{\psi_\perp})^{\otimes k}$ equals the same permutation on $(\mathbb{C}^{d})^{\otimes k}$ with the basis $\ketbra{\psi}$ being projected out.
\end{proof}

We define the following operator
\begin{equation}
    \swapop_\psi:=\left(I-\ketbra{\psi}\right)^{\otimes 2}\swapop\left(I-\ketbra{\psi}\right)^{\otimes 2}
\end{equation}
which is the $\swapop$ operator projected onto $(\mathbb{C}^{d-1}_{\psi_\perp})^{\otimes 2}$.

Using Lemma~\ref{lemma:symperpprojector} it is easy to calculate the following useful quantities.

\begin{lemma}
\begin{equation}
    \E_{\phi'\sim\mathbb{C}^{d-1}_{\phi_\perp}} \left|\braket{\phi'}{\psi}\right|^4=\frac{2(1-f)^2}{d(d-1)}.
\end{equation}
\end{lemma}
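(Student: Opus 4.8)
The plan is to rewrite the fourth moment as a trace against the second moment of $\ketbra{\phi'}$, which is exactly the quantity furnished by Lemma~\ref{lemma:symperpprojector}. First I would use the identity
\begin{equation}
\left|\braket{\phi'}{\psi}\right|^4=\Tr\left(\ketbra{\phi'}^{\otimes 2}\ketbra{\psi}^{\otimes 2}\right),
\end{equation}
which holds because $\left|\braket{\phi'}{\psi}\right|^2=\Tr(\ketbra{\phi'}\ketbra{\psi})$ and squaring factorizes across the tensor product. Taking the expectation over $\ket{\phi'}$ and applying Lemma~\ref{lemma:symperpprojector} with the defining vector taken to be $\ket{\phi}$ (rather than $\ket{\psi}$) gives
\begin{equation}
\E_{\phi'}\left|\braket{\phi'}{\psi}\right|^4=\frac{1}{d(d-1)}\Tr\left(\left[(I-\ketbra{\phi})^{\otimes 2}+\swapop_\phi\right]\ketbra{\psi}^{\otimes 2}\right),
\end{equation}
where $\swapop_\phi=(I-\ketbra{\phi})^{\otimes 2}\swapop(I-\ketbra{\phi})^{\otimes 2}$.

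Then I would evaluate the two trace terms separately. For the first term the trace factorizes across the two tensor factors, and each factor equals $\Tr((I-\ketbra{\phi})\ketbra{\psi})=1-f$, so the contribution is $(1-f)^2$. For the second term I would move the projectors onto $\ket{\psi}$ by cyclicity of the trace, setting $\ket{\tilde\psi}:=(I-\ketbra{\phi})\ket{\psi}$, so that $\Tr(\swapop_\phi\ketbra{\psi}^{\otimes 2})=\Tr(\swapop\,\ketbra{\tilde\psi}^{\otimes 2})$. Using $\Tr(\swapop\,A\otimes B)=\Tr(AB)$ this collapses to $\Tr(\ketbra{\tilde\psi}\ketbra{\tilde\psi})=\|\tilde\psi\|^4$, and since $I-\ketbra{\phi}$ is an orthogonal projector we have $\|\tilde\psi\|^2=\bra{\psi}(I-\ketbra{\phi})\ket{\psi}=1-f$, so this term is again $(1-f)^2$. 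Summing the two equal contributions and dividing by $d(d-1)$ yields the claimed value $\frac{2(1-f)^2}{d(d-1)}$.

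There is no substantial obstacle here; the only point requiring a little care is the SWAP term, where one must remember that conjugating $\ketbra{\psi}^{\otimes 2}$ by the projectors turns the SWAP trace into $\|\tilde\psi\|^4$ and that the overlap $f=\left|\braket{\phi}{\psi}\right|^2$ enters only through the squared norm $\|\tilde\psi\|^2=1-f$ of the projected vector.
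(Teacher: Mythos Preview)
Your proposal is correct and follows essentially the same route as the paper: both apply Lemma~\ref{lemma:symperpprojector} to express the expectation as a trace against $(I-\ketbra{\phi})^{\otimes 2}+\swapop_\phi$, then evaluate the two pieces as $(1-f)^2$ each. Your treatment of the $\swapop_\phi$ term via the projected vector $\ket{\tilde\psi}$ is a bit more explicit than the paper's, which simply asserts $\Tr(\swapop_\phi\ketbra{\psi}^{\otimes 2})=(1-f)^2$, but the underlying computation is identical.
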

\begin{proof}
\begin{equation}
    \begin{split}
        \E \left|\braket{\phi'}{\psi}\right|^4&=\frac{1}{d(d-1)}\Tr(\left((I-\ketbra{\phi})^{\otimes 2}+\mathrm{SWAP}_{\phi}\right)\ketbra{\psi}^{\otimes 2})\\
        &=\frac{1}{d(d-1)}\left((1-f)^2+\Tr(\mathrm{SWAP}_{\phi}\ketbra{\psi}^{\otimes 2})\right)\\
        &=\frac{2(1-f)^2}{d(d-1)}.
    \end{split}
\end{equation}
\end{proof}

\begin{lemma}
\begin{equation}
    \E_{\substack{\phi'\sim\mathbb{C}^{d-1}_{\phi_\perp}\\\psi'\sim\mathbb{C}^{d-1}_{\psi_\perp}}}\left|\braket{\phi'}{\psi'}\right|^2 \left|\braket{\phi'}{\psi}\right|^2=\frac{(1-f) (d+2 f-2)}{d(d-1)^2}.
\end{equation}
\end{lemma}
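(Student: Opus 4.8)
The plan is to exploit the independence of $\phi'$ and $\psi'$ and evaluate the double expectation by iterated averaging, integrating out $\psi'$ first so that everything collapses onto moments of the single scalar $|\braket{\phi'}{\psi}|^2$. The key observation is that once $\phi'$ is fixed, the inner average over $\psi'$ involves only the first moment of $\ketbra{\psi'}$, which we already have from Lemma~\ref{lemma:symperpprojector}.

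Concretely, I would first write, using $\E_{\psi'}\ketbra{\psi'}=\frac{I-\ketbra{\psi}}{d-1}$,
\begin{equation}
    \E_{\psi'\sim\mathbb{C}^{d-1}_{\psi_\perp}}\left|\braket{\phi'}{\psi'}\right|^2=\Tr\!\left(\ketbra{\phi'}\cdot\frac{I-\ketbra{\psi}}{d-1}\right)=\frac{1-\left|\braket{\phi'}{\psi}\right|^2}{d-1}.
\end{equation}
Plugging this into the quantity of interest and pulling out the constant gives
\begin{equation}
    \E_{\phi',\psi'}\left|\braket{\phi'}{\psi'}\right|^2\left|\braket{\phi'}{\psi}\right|^2=\frac{1}{d-1}\left(\E_{\phi'}\left|\braket{\phi'}{\psi}\right|^2-\E_{\phi'}\left|\braket{\phi'}{\psi}\right|^4\right),
\end{equation}
so the entire problem reduces to the first two moments of $|\braket{\phi'}{\psi}|^2$ over the random vector $\phi'\sim\mathbb{C}^{d-1}_{\phi_\perp}$.

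Both moments are already available. The first moment follows again from the $k=1$ case of Lemma~\ref{lemma:symperpprojector}, namely $\E_{\phi'}|\braket{\phi'}{\psi}|^2=\Tr\big(\frac{I-\ketbra{\phi}}{d-1}\ketbra{\psi}\big)=\frac{1-f}{d-1}$, where $f=|\braket{\phi}{\psi}|^2$; the second moment is precisely the content of the preceding lemma, $\E_{\phi'}|\braket{\phi'}{\psi}|^4=\frac{2(1-f)^2}{d(d-1)}$. Substituting these yields
\begin{equation}
    \frac{1}{d-1}\left(\frac{1-f}{d-1}-\frac{2(1-f)^2}{d(d-1)}\right)=\frac{1-f}{(d-1)^2}\cdot\frac{d-2(1-f)}{d}=\frac{(1-f)(d+2f-2)}{d(d-1)^2},
\end{equation}
which is exactly the claimed identity.

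There is no genuine obstacle here: the result is a direct corollary of the two moment formulas established just above, and the only thing to be careful about is ordering the iterated expectation correctly (average over $\psi'$ first, with $\phi'$ held fixed) and noting that $\psi_\perp$ is also $(d-1)$-dimensional, so the same normalization $1/(d-1)$ applies. The mild subtlety worth stating explicitly is that independence of $\phi'$ and $\psi'$ is what licenses the iterated-expectation split; after that the computation is purely algebraic.
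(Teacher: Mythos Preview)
Your proof is correct. The paper takes a slightly different route: it writes the double expectation directly as a single trace,
\[
\Tr\!\left(\frac{(I-\ketbra{\phi})^{\otimes 2}+\mathrm{SWAP}_\phi}{d(d-1)}\cdot\frac{I-\ketbra{\psi}}{d-1}\otimes\ketbra{\psi}\right),
\]
thereby invoking the $k=2$ case of Lemma~\ref{lemma:symperpprojector} for $\phi'$ and the $k=1$ case for $\psi'$ simultaneously, and then evaluates the resulting traces term by term. Your approach instead integrates out $\psi'$ first to reduce everything to the scalar moments $\E_{\phi'}|\braket{\phi'}{\psi}|^2$ and $\E_{\phi'}|\braket{\phi'}{\psi}|^4$, and then feeds in the preceding lemma for the latter. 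The two are equivalent in content, but yours is a bit more modular: it recycles the previous lemma rather than redoing a $\mathrm{SWAP}_\phi$ trace computation, at the cost of one extra line of algebra to combine the two scalar moments. Neither approach has any real advantage in generality or difficulty; they are just two ways of ordering the same iterated integral.
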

\begin{proof}
\begin{equation}
    \begin{split}
        \E\left|\braket{\phi'}{\psi'}\right|^2 \left|\braket{\phi'}{\psi}\right|^2&=\Tr(\frac{(I-\ketbra{\phi})^{\otimes 2}+\mathrm{SWAP}_{\phi}}{d(d-1)}\cdot\frac{I-\ketbra{\psi}}{d-1}\otimes\ketbra{\psi})\\
        &=\frac{1}{d(d-1)^2}\left((d-2+f)(1-f)+\Tr(\mathrm{SWAP}_{\phi}\cdot I\otimes\ketbra{\psi})-(1-f)^2\right)\\
        &=\frac{1}{d(d-1)^2}\left((d+2f-3)(1-f)+1-f\right)\\
        &=\frac{(1-f) (d+2 f-2)}{d(d-1)^2}.
    \end{split}
\end{equation}
\end{proof}

\begin{lemma}
\begin{equation}
    \E_{\substack{\phi'\sim\mathbb{C}^{d-1}_{\phi_\perp}\\\psi'\sim\mathbb{C}^{d-1}_{\psi_\perp}}} \left|\braket{\phi'}{\psi'}\right|^4=\frac{2(d-2+f)^2+2(d-2+f^2)}{d^2(d-1)^2}.
\end{equation}
\end{lemma}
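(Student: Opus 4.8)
The plan is to reduce the fourth moment to a single trace of a product of two second-moment operators, each supplied by Lemma~\ref{lemma:symperpprojector}, and then evaluate that trace by exploiting the projector and swap structure. First I would rewrite $\left|\braket{\phi'}{\psi'}\right|^4 = \Tr(\ketbra{\phi'}^{\otimes 2}\ketbra{\psi'}^{\otimes 2})$ and use the independence of $\ket{\phi'}$ and $\ket{\psi'}$ to factor the expectation:
\begin{equation}
    \E\left|\braket{\phi'}{\psi'}\right|^4 = \Tr\left(\left(\E_{\phi'}\ketbra{\phi'}^{\otimes 2}\right)\left(\E_{\psi'}\ketbra{\psi'}^{\otimes 2}\right)\right).
\end{equation}
Substituting the $k=2$ moment from Lemma~\ref{lemma:symperpprojector} for each factor (writing $\phi=\ketbra{\phi}$, $\psi=\ketbra{\psi}$) turns the right-hand side into $\frac{1}{d^2(d-1)^2}$ times the trace of $\left((I-\phi)^{\otimes 2}+\swapop_\phi\right)\left((I-\psi)^{\otimes 2}+\swapop_\psi\right)$, which I would expand into four terms.

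The crucial simplification is that $(I-\phi)^{\otimes 2}$ commutes with $\swapop$ and $I-\phi$ is a projector, so $\swapop_\phi = (I-\phi)^{\otimes 2}\swapop$, and likewise $\swapop_\psi=(I-\psi)^{\otimes 2}\swapop$. Abbreviating $P=I-\phi$ and $Q=I-\psi$, this collapses each of the four terms to a scalar. The diagonal term $\Tr(P^{\otimes 2}Q^{\otimes 2})=\Tr((PQ)^{\otimes 2})$ and the double-swap term $\Tr(\swapop_\phi\swapop_\psi)$, the latter via $\swapop Q^{\otimes 2}\swapop = Q^{\otimes 2}$ so that $\swapop_\phi\swapop_\psi = (PQ)^{\otimes 2}$, both equal $(\Tr(PQ))^2$; the two cross terms, via $\Tr(\swapop(A\otimes B))=\Tr(AB)$ applied to $A\otimes B=(PQ)^{\otimes 2}$, both equal $\Tr((PQ)^2)$. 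Hence the whole trace is $2(\Tr(PQ))^2 + 2\Tr((PQ)^2)$, and it remains only to compute the two scalars $\Tr(PQ)$ and $\Tr((PQ)^2)$.

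These last two are short calculations using $\phi^2=\phi$, $\psi^2=\psi$, $\phi\psi\phi=f\phi$, $\psi\phi\psi=f\psi$, and $\Tr(\phi\psi)=f$. One gets $\Tr(PQ)=\Tr(I-\phi-\psi+\phi\psi)=d-2+f$ immediately. For the quadratic term I would use $\phi P=0$ to kill the $\phi$-piece when expanding $(PQ)^2 = (I-\phi-\psi+\phi\psi)^2$; the surviving pieces trace out to $\Tr((PQ)^2)=d-2+f^2$. Assembling $2(d-2+f)^2+2(d-2+f^2)$ and dividing by $d^2(d-1)^2$ yields the claimed identity. The main obstacle is purely the bookkeeping around the swap-projected operators — establishing $\swapop_\phi=(I-\phi)^{\otimes 2}\swapop$ and correctly pushing $\swapop$ through $Q^{\otimes 2}$ in the double-swap term — together with tracking the cancellations in $\Tr((PQ)^2)$; once those are in hand the rest is routine.
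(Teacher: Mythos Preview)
Your proposal is correct and follows essentially the same approach as the paper: both reduce to $\frac{1}{d^2(d-1)^2}\Tr\big(((I-\phi)^{\otimes 2}+\swapop_\phi)((I-\psi)^{\otimes 2}+\swapop_\psi)\big)$, expand into four terms, and evaluate each to either $(\Tr(PQ))^2=(d-2+f)^2$ or $\Tr((PQ)^2)=d-2+f^2$. Your upfront observation $\swapop_\phi=(I-\phi)^{\otimes 2}\swapop$ is a clean shortcut (the paper instead carries both projectors through and uses cyclicity), but the route and the term-by-term values are the same.
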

\begin{proof}
\begin{equation}
    \begin{split}
         \E \left|\braket{\phi'}{\psi'}\right|^4&=\frac{1}{d^2(d-1)^2}\Tr(\left((I-\ketbra{\phi})^{\otimes 2}+\mathrm{SWAP}_{\phi}\right)\left((I-\ketbra{\psi})^{\otimes 2}+\mathrm{SWAP}_{\psi}\right))\\
         &=\frac{(d-2+f)^2+2(d-2+f^2)+\Tr(\mathrm{SWAP}_{\phi}\cdot\mathrm{SWAP}_{\psi})}{d^2(d-1)^2},
    \end{split}
\end{equation}
where we have used $\Tr((I-\ketbra{\phi})^{\otimes 2}\cdot (I-\ketbra{\psi})^{\otimes 2})=(d-2+f)^2$, and
\begin{equation}
\begin{split}
    \Tr(\mathrm{SWAP}_{\phi}\cdot (I-\ketbra{\psi})^{\otimes 2})&=\Tr(\swapop (I-\ketbra{\phi})^{\otimes 2}(I-\ketbra{\psi})^{\otimes 2}(I-\ketbra{\phi})^{\otimes 2})\\
    &=\Tr((I-\ketbra{\phi})(I-\ketbra{\psi})(I-\ketbra{\phi})(I-\ketbra{\psi}))\\
    &=d-2+f^2.
\end{split}
\end{equation}
Finally,
\begin{equation}
\begin{split}
    \Tr(\mathrm{SWAP}_{\phi}\cdot\mathrm{SWAP}_{\psi})&=\left(\Tr((I-\ketbra{\psi})(I-\ketbra{\phi}))\right)^2\\
    &=(d-2+f)^2.
\end{split}
\end{equation}
\end{proof}

Having the above lemmas, and recall the fact that $\E\alpha^2=\E\beta^2=\frac{k+1}{d+k}$ and $\E \alpha^4=\E\beta^4=\frac{(k+2)(k+1)}{(d+k+1)(d+k)}$, we are now ready to calculate each term in RHS of Eq.~\eqref{eq:complexlem}.

Let's start with the last two terms. Using $\E\ketbra{\phi'}=\frac{I-\ketbra{\phi}}{d-1}$ and $\E\ketbra{\psi'}=\frac{I-\ketbra{\psi}}{d-1}$, we can show that
\begin{equation}
    \E \braket{\phi}{\psi}\braket{\psi}{\phi'}\braket{\phi'}{\psi'}\braket{\psi'}{\phi}=\frac{f^2-f}{(d-1)^2}.
\end{equation}
This gives
\begin{equation}
    \E qlg^*h^*=\E ghq^*l^*=\frac{(d-1)^2(k+1)^2}{(d+k)^2(d+k+1)^2}\cdot \frac{f^2-f}{(d-1)^2}=\frac{(k+1)^2}{(d+k)^2(d+k+1)^2}\cdot(f^2-f).
\end{equation}

Next, we have
\begin{equation}
    \begin{split}
        \E |q|^4&=\E \alpha^4\beta^4 f^2=\frac{(k+2)^2(k+1)^2}{(d+k+1)^2(d+k)^2} f^2\\
        \E |g|^4&=\E |h|^4\\
        &=\E(1-\alpha^2)^2\beta^4 \left|\braket{\phi'}{\psi}\right|^4\\
        &=\frac{(d-1) d (k+1) (k+2)}{(d+k)^2 (d+k+1)^2}\cdot (1-f)^2\cdot \frac{2}{d(d-1)}\\
        &=\frac{ 2(k+1) (k+2)}{(d+k)^2 (d+k+1)^2}\cdot (1-f)^2\\
        \E |l|^4&=\E(1-\alpha^2)^2(1-\beta^2)^2\left|\braket{\phi'}{\psi'}\right|^4\\
        &=\frac{(d-1)^2 d^2}{(d+k)^2 (d+k+1)^2}\E \left|\braket{\phi'}{\psi'}\right|^4\\
        &=\frac{2(d-2+f)^2+2(d-2+f^2)}{(d+k)^2 (d+k+1)^2}.
    \end{split}
\end{equation}
Next,
\begin{equation}
    \begin{split}
        \E |g|^2|h|^2&=\E\alpha^2(1-\alpha^2)\beta^2(1-\beta^2) \left|\braket{\phi'}{\psi}\right|^2\left|\braket{\phi}{\psi'}\right|^2\\
        &=\frac{ (k+1)^2}{(d+k)^2 (d+k+1)^2}\cdot (1-f)^2\\
        \E |l|^2|g|^2&=\E |l|^2|h|^2\\
        &=\E(1-\alpha^2)^2\beta^2(1-\beta^2)\left|\braket{\phi'}{\psi'}\right|^2 \left|\braket{\phi'}{\psi}\right|^2\\
        &=\frac{(d-1)^2 d (k+1)}{(d+k)^2 (d+k+1)^2}\E\left|\braket{\phi'}{\psi'}\right|^2 \left|\braket{\phi'}{\psi}\right|^2\\
        &=\frac{ (k+1)}{(d+k)^2 (d+k+1)^2}(1-f) (d+2 f-2)\\
        \E |q|^2|g|^2&=\E |q|^2|h|^2\\
        &=\E\alpha^2(1-\alpha^2)\beta^4 f \left|\braket{\phi'}{\psi}\right|^2\\
        &=\frac{ (k+1)^2 (k+2)}{(d+k)^2 (d+k+1)^2}f(1-f)\\
        \E |q|^2|l|^2&=\E\alpha^2(1-\alpha^2)\beta^2(1-\beta^2) f \left|\braket{\phi'}{\psi'}\right|^2\\
        &=\frac{ (k+1)^2}{(d+k)^2 (d+k+1)^2}f(d-2+f).
    \end{split}
\end{equation}

Then the variance is given by
\begin{equation}
\begin{split}
    \Var(w)&=\frac{(d+k)^4}{k^4}\E\left|\braket{u}{v}\right|^4-\frac{(d+2k)^2}{k^4}-\frac{2(d+2k)f}{k^2}-f^2\\
    &=\frac{(d+k)^2}{(d+k+1)^2}\bigg(\frac{(k+2)^2(k+1)^2}{k^4} f^2 +\frac{ 4(k+1) (k+2)}{k^4}\cdot (1-f)^2\\
    &+\frac{2(d-2+f)^2+2(d-2+f^2)}{k^4}\\
    &+\frac{ 4(k+1)^2}{k^4}\cdot (1-f)^2+\frac{8 (k+1)}{k^4}(1-f) (d+2 f-2)+\frac{ 8(k+1)^2 (k+2)}{k^4}f(1-f)\\
    &+\frac{ 4(k+1)^2}{k^4}f(d-2+f)+\frac{8(k+1)^2(f^2-f)}{k^4}\bigg)-\frac{(d+2k)^2}{k^4}-\frac{2(d+2k)f}{k^2}-f^2.
\end{split}
\end{equation}
To calculate an upper bound of the above exact expression of the variance, note that the sum inside the bracket must be positive, so we can relax the coefficient $\frac{(d+k)^2}{(d+k+1)^2}$ to 1. This gives

\begin{equation}
\begin{split}
    \Var(w)&\leq \frac{(k+2)^2(k+1)^2}{k^4} f^2 +\frac{ 4(k+1) (k+2)}{k^4}\cdot (1-f)^2+\frac{2(d-2+f)^2+2(d-2+f^2)}{k^4}\\
    &+\frac{ 4(k+1)^2}{k^4}\cdot (1-f)^2+\frac{8 (k+1)}{k^4}(1-f) (d+2 f-2)+\frac{ 8(k+1)^2 (k+2)}{k^4}f(1-f)\\
    &+\frac{ 4(k+1)^2}{k^4}f(d-2+f)+\frac{8(k+1)^2(f^2-f)}{k^4}-\frac{(d+2k)^2}{k^4}-\frac{2(d+2k)f}{k^2}-f^2\\
    &=\frac{4 f - 2 f^2}{k}+\frac{2 d f+f^2+4}{k^2}+\frac{4 d+4}{k^3}+\frac{d^2+2 d}{k^4}\\
    &=O\left(\frac{f}{k}+\frac{d f}{k^2}+\frac{1}{k^2}+\frac{d}{k^3}+\frac{d^2}{k^4}\right).
\end{split}
\end{equation}

\section{Proof of Lemma~\ref{lemma:blockdiagonal}}
\label{app:blockdiagonal}
In the following we write $\ket{u}$ as $\ket{0}$ and provide an analysis of Bob's post-measurement state
\begin{equation}
    \rho=\binom{d+k-1}{k}\E_{\phi\sim\mathbb{C}^d}\ketbra{\phi}^{\otimes k}\left|\braket{0}{\phi}\right|^{2k}.
\end{equation}

First, we decompose the integral with respect to random vector $\phi\sim\mathbb{C}^d$ using Lemma~\ref{lemma:decomposition} as 
\begin{equation}
    \ket{\phi}=\alpha e^{i\theta}\ket{0}+\sqrt{1-\alpha^2}\ket{v},
\end{equation}
then we can consider the integral with respect to 3 independent random variable $\alpha,\theta\,\ket{v}$, which becomes
\begin{equation}\label{eq:randomizephase}
    \begin{split}
        \rho&=\binom{d+k-1}{k}\E_{\alpha,\theta,v}\left(\alpha^2\ketbra{0}+\alpha e^{i\theta}\sqrt{1-\alpha^2}\ketbra{0}{v}+\alpha e^{-i\theta}\sqrt{1-\alpha^2}\ketbra{v}{0}+(1-\alpha^2)\ketbra{v}\right)^{\otimes k}\alpha^{2k}\\
        &=\binom{d+k-1}{k}\E_{\alpha,v}\sum_{t=0}^k\alpha^{2(k+t)}(1-\alpha^2)^{k-t}\binom{k}{t}^2\Pi_{\mathrm{sym}}^{d,k}\left(\ketbra{0}^{\otimes t}\otimes \ketbra{v}^{\otimes k-t}\right)\Pi_{\mathrm{sym}}^{d,k}.
    \end{split}
\end{equation}
Here we have performed integral over the random phase $\theta$, which follows from Lemma~\ref{lemma:lowerboundrandomphase}. 

From Eq.~\eqref{eq:randomizephase}, the next step is to perform integral over $v$. Note that here $v$ is a random vector on $\vspan\{\ket{1},\dots,\ket{d-1}\}$. Using Lemma~\ref{lemma:symprojector} we have
\begin{equation}
    \E_{v}\ketbra{v}^{\otimes k-t}=\binom{d+k-t-2}{k-t}^{-1}\Pi_{\mathrm{sym},0}^{d-1,k-t}.
\end{equation}
Here we use $\Pi_{\mathrm{sym},0}^{d-1,k-t}$ to denote the orthogonal projector onto $\vee^{k-t}\vspan\{\ket{1},\dots,\ket{d-1}\}$. This gives
\begin{equation}
    \rho=\binom{d+k-1}{k}\E_{\alpha}\sum_{t=0}^k\alpha^{2(k+t)}(1-\alpha^2)^{k-t}\binom{d+k-t-2}{k-t}^{-1}\binom{k}{t}^2\Pi_{\mathrm{sym}}^{d,k}\left(\ketbra{0}^{\otimes t}\otimes \Pi_{\mathrm{sym},0}^{d-1,k-t}\right)\Pi_{\mathrm{sym}}^{d,k}.
\end{equation}

Next, recall that we have defined the projectors $\Pi^t$ onto the subspaces
\begin{equation}
    W^t=\mathrm{span}\left\{\psym{d}{k}\ket{0}^{\otimes t}\otimes \ket{v}^{\otimes k-t}:\ket{v}\in\vspan\{\ket{1},\dots,\ket{d-1}\}\right\},
\end{equation}
where we have dropped the subscript $u$ as here $\ket{u}=\ket{0}$. We show that this projector can be written as
\begin{equation}
    \Pi^t=\binom{k}{t}\Pi_{\mathrm{sym}}^{d,k}\left(\ketbra{0}^{\otimes t}\otimes \Pi_{\mathrm{sym},0}^{d-1,k-t}\right)\Pi_{\mathrm{sym}}^{d,k}.
\end{equation}
Consider a orthogonal basis of $\ssym{d}{k}=\vspan B$ (Definition~\ref{def:symmetricsubspace}), where
\begin{equation}
    B=\left\{\sum_{i:T(i)=\ell}\ket{i_1,\dots,i_k}:\ell=(\ell_0,\dots,\ell_{d-1}),\ell_j\geq 0,\sum_{j=0}^{d-1}\ell_j=k\right\}.
\end{equation}
Here $\ell_j$ denotes the number of times that $j$ appear in the basis. Consider $\ket{b}\in B$ with pattern $\ell$. Clearly, if $\ell_0\neq t$ then $\Pi^t\ket{b}=0$. If $\ell_0=t$, then
\begin{equation}
    \begin{split}
        \Pi^t\ket{b}&=\binom{k}{t}\Pi_{\mathrm{sym}}^{d,k}\left(\ketbra{0}^{\otimes t}\otimes \Pi_{\mathrm{sym},0}^{d-1,k-t}\right)\ket{b}\\
        &=\binom{k}{t}\Pi_{\mathrm{sym}}^{d,k}\ket{0}^{\otimes t}\otimes \ket{b'}\\
        &=\ket{b},
    \end{split}
\end{equation}
where
\begin{equation}
    \ket{b'}=\sum_{i_1,\dots,i_{k-t}\neq 0:T(0^t,i_1,\dots,i_{k-t})=\ell}\ket{i_1,\dots,i_{k-t}}.
\end{equation}
Therefore, $\Pi^t$ is the orthogonal projector onto 
\begin{equation}
    \vspan\left\{\sum_{i:T(i)=\ell}\ket{i_1,\dots,i_k}:\ell=(\ell_0,\dots,\ell_{d-1}),\ell_j\geq 0,\ell_0=t,\sum_{j=0}^{d-1}\ell_j=k\right\},
\end{equation}
which equals to $W^t$.

Now we can write Bob's state as
\begin{equation}
    \rho=\binom{d+k-1}{k}\E_{\alpha}\sum_{t=0}^k\alpha^{2(k+t)}(1-\alpha^2)^{k-t}\binom{d+k-t-2}{k-t}^{-1}\binom{k}{t}\Pi^t.
\end{equation}
The final step is to calculate the integral with respect to $\alpha$. We have
\begin{equation}
\begin{split}
    \E_{\alpha}\alpha^{2(k+t)}(1-\alpha^2)^{k-t}&=\int_{0}^{1}\alpha^{2(k+t)}(1-\alpha^2)^{k-t}2(d-1)(1-\alpha^2)^{d-2}\alpha \dd\alpha\\
    (x=\alpha^2)&=\int_{0}^{1}(d-1)x^{k+t}(1-x)^{d+k-t-2}\dd x\\
    &=(d-1)\frac{(k+t)!(d+k-t-2)!}{(d+2k-1)!}.
\end{split}
\end{equation}
Here we have used the fact that for $m,n\in\mathbb{Z}_+$, $\int_{0}^{1}x^m (1-x)^n \dd x=\frac{m!n!}{(m+n+1)!}$. This concludes the proof of $\rho=\sum_{t=0}^k \beta_t \Pi^t$, where
\begin{equation}
    \begin{split}
        \beta_t&=\binom{d+k-1}{k}(d-1)\frac{(k+t)!(d+k-t-2)!}{(d+2k-1)!}\binom{d+k-t-2}{k-t}^{-1}\binom{k}{t}\\
        &=\frac{(k+t)(k+t-1)\cdots (t+1)}{(d+2k-1)(d+2k-2)\cdots (d+k)}.
    \end{split}
\end{equation}

\section{Proof of Lemma~\ref{lemma:singlecopyvariance}}
\label{app:singlecopyvariance}
As the variance involves up to the 4th moment of random states, we need the following technical lemma.

\begin{lemma}\label{lemma:randomstatemoments}
Let $A,B,C,D$ be Hermitian matrices. Then
\begin{equation}
    \E_{\psi\sim\mathbb{C}^d}\expval{A}{\psi}\expval{B}{\psi}=\frac{1}{d(d+1)}\left(\Tr(A)\Tr(B)+\Tr(AB)\right),
\end{equation}
and
\begin{equation}
\begin{split}
    \E_{\psi\sim\mathbb{C}^d}\expval{A}{\psi}\expval{B}{\psi}\expval{C}{\psi}&=\frac{1}{d(d+1)(d+2)}\bigg(\Tr(A)\Tr(B)\Tr(C)+\Tr(AB)\Tr(C)\\
    &+\Tr(A)\Tr(BC)+\Tr(C)\Tr(AB)+\Tr(ABC)+\Tr(ACB)\bigg),
\end{split}
\end{equation}
and similarly
\begin{equation}
\begin{split}
    &\E_{\psi\sim\mathbb{C}^d}\expval{A}{\psi}\expval{B}{\psi}\expval{C}{\psi}\expval{D}{\psi}\\
    &=\frac{1}{d(d+1)(d+2)(d+3)}\sum_{\pi\in S_4}\Tr(A\otimes B\otimes C\otimes D\cdot  P_d(\pi)).
\end{split}
\end{equation}
\end{lemma}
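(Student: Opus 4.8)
The plan is to derive all three identities from a single master formula for the moments of a Haar-random pure state, which follows immediately from Lemma~\ref{lemma:symprojector}. Equating the two expressions for the symmetric projector given there, namely $\binom{d+k-1}{k}\E_{\psi\sim\mathbb{C}^d}\ketbra{\psi}^{\otimes k}=\frac{1}{k!}\sum_{\pi\in S_k}P_d(\pi)$, and using $\binom{d+k-1}{k}k!=d(d+1)\cdots(d+k-1)$, I obtain
\begin{equation}
    \E_{\psi\sim\mathbb{C}^d}\ketbra{\psi}^{\otimes k}=\frac{1}{d(d+1)\cdots(d+k-1)}\sum_{\pi\in S_k}P_d(\pi).
\end{equation}
The second ingredient is the elementary observation that a product of expectation values is a single trace against a tensor power: since $\expval{A_j}{\psi}=\Tr(A_j\ketbra{\psi})$, one has $\prod_{j=1}^k\expval{A_j}{\psi}=\Tr\big((A_1\otimes\cdots\otimes A_k)\ketbra{\psi}^{\otimes k}\big)$. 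Taking expectations over $\psi$ and invoking linearity of the trace together with the master formula gives, for every $k$,
\begin{equation}
    \E_{\psi\sim\mathbb{C}^d}\prod_{j=1}^k\expval{A_j}{\psi}=\frac{1}{d(d+1)\cdots(d+k-1)}\sum_{\pi\in S_k}\Tr\big((A_1\otimes\cdots\otimes A_k)P_d(\pi)\big).
\end{equation}

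For $k=4$ this is exactly the third claimed identity, with no further computation required. For $k=2$ and $k=3$ it remains only to evaluate each trace $\Tr\big((A_1\otimes\cdots\otimes A_k)P_d(\pi)\big)$ in closed form. The key combinatorial fact I would invoke is that such a trace factorizes along the cycles of $\pi$:
\begin{equation}
    \Tr\big((A_1\otimes\cdots\otimes A_k)P_d(\pi)\big)=\prod_{\text{cycles }c\text{ of }\pi}\Tr\Big(\prod_{j\in c}A_j\Big),
\end{equation}
where the operators within each cycle are multiplied in cyclic order. This is standard and can be checked directly from the index definition of $P_d(\pi)$ in Definition~\ref{def:symmetricsubspace} (or by induction on cycle length); intuitively $P_d(\pi)$ threads the tensor factors around the cycles of $\pi$, collapsing each cycle into an ordinary matrix-product trace.

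With this factorization the remaining steps are pure bookkeeping. For $k=2$ the group $S_2$ consists of the identity, contributing $\Tr(A)\Tr(B)$, and the transposition $P_d((12))=\swapop$, contributing $\Tr(AB)$; dividing by $d(d+1)$ recovers the first identity. For $k=3$ the six elements of $S_3$ split into the identity, giving $\Tr(A)\Tr(B)\Tr(C)$, the three transpositions, giving $\Tr(AB)\Tr(C)$, $\Tr(BC)\Tr(A)$, and $\Tr(AC)\Tr(B)$, and the two $3$-cycles, giving $\Tr(ABC)$ and $\Tr(ACB)$; summing and dividing by $d(d+1)(d+2)$ yields the second identity. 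I expect no genuine obstacle: once the master formula and the cycle-trace factorization are in place the lemma is immediate, and the only care needed is to enumerate $S_3$ correctly and to track which operator product each cycle produces.
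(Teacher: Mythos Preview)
Your proposal is correct and follows exactly the approach the paper takes: the paper's proof is a one-liner invoking the identity $\E_{\psi\sim\mathbb{C}^d}\ketbra{\psi}^{\otimes k}=\frac{1}{(d+k-1)\cdots d}\sum_{\pi\in S_k}P_d(\pi)$ from Lemma~\ref{lemma:symprojector}, which is precisely your master formula. You have simply fleshed out the cycle-trace factorization that the paper leaves implicit; note in passing that your enumeration of the three transpositions correctly produces the term $\Tr(AC)\Tr(B)$, whereas the statement as printed has a typographical slip (the fourth term repeats $\Tr(AB)\Tr(C)$).
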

\begin{proof}
This directly follows from the characterization of $\psym{d}{k}$ (Lemma~\ref{lemma:symprojector}), which gives
\begin{equation}
    \E_{\psi\sim\mathbb{C}^d}\ketbra{\psi}^{\otimes k}=\frac{1}{(d+k-1)\cdots d}\sum_{\pi\in S_k}P_d(\pi).
\end{equation}
\end{proof}

We start with the second term of Eq.~\eqref{eq:totalvariance}. Using the expression for the variance of collision estimator in Lemma~\ref{lemma:collisionvariance}, we have
\begin{equation}
    \E_U\left[\Var(\tilde{g}(U,S)|U)\right]\leq \E_U\left[\frac{g(U)}{m^2}+\frac{1}{m}\left(\sum_b p_b(U) q_b(U)^2+\sum_b p_b(U)^2 q_b(U)\right)\right].
\end{equation}
We have already shown that $\E g(U)=\frac{1+f}{d+1}$, so the first term in RHS is $O(1/m^2 d)$. For the second term, notice that
\begin{equation}
\begin{split}
    \E_U \sum_{b=0}^{d-1} p_b(U) q_b(U)^2&=\E_U \sum_{b=0}^{d-1}\expval{U\rho U^\dag}{b}\expval{U\sigma U^\dag}{b}^2\\
    &=d\E_{\psi\sim\mathbb{C}^d}\expval{\rho}{\psi}\expval{\sigma}{\psi}^2\\
    &=O(1/d^2),
\end{split}
\end{equation}
where the third line follows from Lemma~\ref{lemma:randomstatemoments}. This implies that
\begin{equation}\label{eq:expectedvarianceoversamples}
    \E_U\left[\Var(\tilde{g}(U,S)|U)\right]=O\left(\frac{1}{m^2 d}+\frac{1}{m d^2}\right).
\end{equation}
Next we consider the first term of Eq.~\eqref{eq:totalvariance}, which is
\begin{equation}
    \begin{split}
        \Var_U(\E[\tilde{g}(U,S)|U])&=\Var_{U\sim\mathbb{U}(d)} g(U)\\
        &=\E_{U\sim\mathbb{U}(d)} g(U)^2-\frac{(1+f)^2}{(d+1)^2}\\
        &=\sum_{b_1,b_2=0}^{d-1}\E_{U\sim\mathbb{U}(d)}\expval{U\rho U^\dag}{b_1}\expval{U\sigma U^\dag}{b_1}\expval{U\rho U^\dag}{b_2}\expval{U\sigma U^\dag}{b_2}-\frac{(1+f)^2}{(d+1)^2}.
    \end{split}
\end{equation}
The above sum consists of two different cases: $b_1=b_2$ and $b_1\neq b_2$. In the second case, the distribution of $\{U^\dag \ket{b_1},U^\dag \ket{b_2}\}$ can be understood as follows: first sample a uniformly random vector $\ket{\psi}\sim\mathbb{C}^d$, then sample a uniformly random vector in  $\mathbb{C}^{d-1}_{\psi_\perp}$, denoted as $\ket{\psi'}\sim\mathbb{C}^{d-1}_{\psi_\perp}$. Recall that $\mathbb{C}^{d-1}_{\psi_\perp}$ is the $d-1$ dimensional subspace of $\mathbb{C}^d$ that is perpendicular to $\ket{\psi}$. Then we have
\begin{equation}\label{eq:varusimplified}
    \begin{split}
        \Var_U(\E[\tilde{g}(U,S)|U])&=d\E_{\psi\sim\mathbb{C}^d}\left(\expval{\rho}{\psi}\expval{\sigma}{\psi}\right)^2\\
        &+d(d-1)\E_{\substack{\psi\sim\mathbb{C}^d\\\psi'\sim\mathbb{C}^{d-1}_{\psi_\perp}} }\expval{\rho}{\psi}\expval{\sigma}{\psi}\expval{\rho}{\psi'}\expval{\sigma}{\psi'}-\frac{(1+f)^2}{(d+1)^2}.
    \end{split}
\end{equation}
To calculate the second term in RHS, we first calculate the integral with respect to $\psi'$ and then calculate the integral with respect to $\psi$, which is
\begin{equation}\label{eq:variancepsiperp}
    \E_{\substack{\psi\sim\mathbb{C}^d\\\psi'\sim\mathbb{C}^{d-1}_{\psi_\perp}} }\expval{\rho}{\psi}\expval{\sigma}{\psi}\expval{\rho}{\psi'}\expval{\sigma}{\psi'}=\E_{\psi\sim\mathbb{C}^d}\left[\expval{\rho}{\psi}\expval{\sigma}{\psi}\E_{\psi'\sim\mathbb{C}^{d-1}_{\psi_\perp} }\expval{\rho}{\psi'}\expval{\sigma}{\psi'}\right].
\end{equation}
We have shown in  Lemma~\ref{lemma:symperpprojector} that
\begin{equation}
    \E_{\psi'\sim \mathbb{C}^{d-1}_{\psi_\perp}}\ketbra{\psi'}^{\otimes 2}=\frac{1}{d(d-1)}\left(\left(I-\ketbra{\psi}\right)^{\otimes 2}+\left(I-\ketbra{\psi}\right)^{\otimes 2}\swapop\left(I-\ketbra{\psi}\right)^{\otimes 2}\right)
\end{equation}
and using this we have
\begin{equation}
\begin{split}
    &d(d-1)\E_{\psi'\sim \mathbb{C}^{d-1}_{\psi_\perp}}\expval{\rho}{\psi'}\expval{\sigma}{\psi'}\\
    &=\left(1-\expval{\rho}{\psi}\right)\left(1-\expval{\sigma}{\psi}\right)+\Tr(\rho(I-\ketbra{\psi})\sigma(I-\ketbra{\psi}))\\
    &=\left(1-\expval{\rho}{\psi}\right)\left(1-\expval{\sigma}{\psi}\right)+f+\expval{\rho}{\psi}\expval{\sigma}{\psi}-\expval{\left(\rho\sigma+\sigma\rho\right)}{\psi}\\
    &=1+f-\expval{\rho}{\psi}-\expval{\sigma}{\psi}+2\expval{\rho}{\psi}\expval{\sigma}{\psi}-\expval{\left(\rho\sigma+\sigma\rho\right)}{\psi}.
\end{split}
\end{equation}
Plugging this into Eq.~\eqref{eq:variancepsiperp}, we have
\begin{equation}
    \begin{split}
        &d(d-1)\E_{\substack{\psi\sim\mathbb{C}^d\\\psi'\sim\mathbb{C}^{d-1}_{\psi_\perp}} }\expval{\rho}{\psi}\expval{\sigma}{\psi}\expval{\rho}{\psi'}\expval{\sigma}{\psi'}\\
        &=\E_{\psi\sim\mathbb{C}^d}\bigg[(1+f)\expval{\rho}{\psi}\expval{\sigma}{\psi}-\expval{\rho}{\psi}^2\expval{\sigma}{\psi}-\expval{\rho}{\psi}\expval{\sigma}{\psi}^2\\
        &+2\expval{\rho}{\psi}^2\expval{\sigma}{\psi}^2-\expval{\rho}{\psi}\expval{\sigma}{\psi}\expval{\left(\rho\sigma+\sigma\rho\right)}{\psi}\bigg]\\
        &=\frac{(1+f)^2}{d (d+1)}+O(1/d^4)
    \end{split}
\end{equation}
where we have ignored the negative terms. Then Eq.~\eqref{eq:varusimplified} becomes
\begin{equation}
    \begin{split}
        \Var_U(\E[\tilde{g}(U,S)|U])&=d\cdot O(1/d^4)
        +\frac{(1+f)^2}{d (d+1)}+O(1/d^4)-\frac{(1+f)^2}{(d+1)^2}\\
        &=O(1/d^3),
    \end{split}
\end{equation}
and the total variance is given by
\begin{equation}
    \Var(\tilde{g}(U,S))=\Var_U(\E[\tilde{g}(U,S)|U])+\E_U[\Var(\tilde{g}(U,S)|U)]=O\left(\frac{1}{d^3}+\frac{1}{m^2 d}+\frac{1}{m d^2}\right).
\end{equation}

To show that the above upper bound is tight, we proceed by calculating the exact expression of the variance when $\rho$ and $\sigma$ are pure states. We start with the following technical lemma.

\begin{lemma}
Let $\rho,\sigma$ be pure states, and $f=\Tr(\rho\sigma)$. Then
\begin{equation}
    \begin{split}
        &\E_{\psi\sim\mathbb{C}^d}\expval{\rho}{\psi}\expval{\sigma}{\psi}=\frac{1+f}{d(d+1)}\\
        &\E_{\psi\sim\mathbb{C}^d}\expval{\rho}{\psi}^2\expval{\sigma}{\psi}=\E_{\psi\sim\mathbb{C}^d}\expval{\rho}{\psi}\expval{\sigma}{\psi}^2=\frac{2+4 f}{d (d+1) (d+2)}\\
        &\E_{\psi\sim\mathbb{C}^d}\expval{\rho}{\psi}\expval{\sigma}{\psi}\expval{\left(\rho\sigma+\sigma\rho\right)}{\psi}=\frac{8 f+4 f^2}{d (d+1) (d+2)}\\
        &\E_{\psi\sim\mathbb{C}^d}\expval{\rho}{\psi}^2\expval{\sigma}{\psi}^2=\frac{4+16 f+4 f^2}{d (d+1) (d+2) (d+3)}.
    \end{split}
\end{equation}
\end{lemma}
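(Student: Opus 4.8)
The plan is to apply the moment formulas of Lemma~\ref{lemma:randomstatemoments} directly, choosing the Hermitian arguments $A,B,C,D$ to match each target expression, and then to simplify the resulting traces using the three facts special to pure states: $\Tr(\rho)=\Tr(\sigma)=1$, the idempotency $\rho^2=\rho$ and $\sigma^2=\sigma$, and $\Tr(\rho\sigma)=f$. The central computational device is that for pure $\rho=\ketbra{a}$ and $\sigma=\ketbra{b}$, the trace of \emph{any} product of $\rho$'s and $\sigma$'s collapses to a power of $f$: using idempotency to merge consecutive identical factors in the (cyclic) word, a product whose cyclically reduced word alternates between $\rho$ and $\sigma$ exactly $r$ times equals $\Tr((\rho\sigma)^r)=f^r$, while a product containing only one of the two letters has trace $1$.

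For the first identity I would set $A=\rho$, $B=\sigma$ in the second-moment formula, giving $\frac{1}{d(d+1)}\left(\Tr(\rho)\Tr(\sigma)+\Tr(\rho\sigma)\right)=\frac{1+f}{d(d+1)}$. For the second identity I take $A=B=\rho$, $C=\sigma$ in the third-moment formula; the six terms indexed by $S_3$ evaluate to $1,1,f,f,f,f$, summing to $2+4f$, and the companion quantity $\E_\psi\expval{\rho}{\psi}\expval{\sigma}{\psi}^2$ follows immediately by the symmetry $\rho\leftrightarrow\sigma$. For the third identity I again use the third-moment formula but with $C=\rho\sigma+\sigma\rho$, which is Hermitian; here $\Tr(C)=2f$, and a short check of the remaining traces (for example $\Tr(AC)=\Tr(\rho^2\sigma+\rho\sigma\rho)=2f$ and $\Tr(ABC)=\Tr((\rho\sigma)^2)+\Tr(\rho\sigma^2\rho)=f^2+f$) yields the total $8f+4f^2$.

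The bulk of the work, and the main obstacle, is the fourth identity, for which I set $A=B=\rho$, $C=D=\sigma$ in the fourth-moment formula and must evaluate $\sum_{\pi\in S_4}\Tr\left((\rho\otimes\rho\otimes\sigma\otimes\sigma)\,P_d(\pi)\right)$. Since this trace factorizes over the cycles of $\pi$ as a product of traces of the corresponding products of the $A_i$, I would organize the $24$ permutations by cycle type and apply the alternation rule above. Concretely, the identity contributes $1$; the six transpositions contribute $1,1,f,f,f,f$ (the two transpositions inside $\{1,2\}$ or $\{3,4\}$ give $1$, the four mixed ones give $f$), summing to $2+4f$; each of the eight $3$-cycles contributes $f$, for $8f$; the six $4$-cycles contribute $4f+2f^2$ (the two whose reduced cyclic word is $\rho\sigma\rho\sigma$, namely $(1324)$ and $(1423)$, give $f^2$, the other four give $f$); and the three double transpositions $(12)(34),(13)(24),(14)(23)$ contribute $1,f^2,f^2$, for $1+2f^2$. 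Summing gives $4+16f+4f^2$, and dividing by $d(d+1)(d+2)(d+3)$ produces the claimed value. The only delicate point is bookkeeping the cycle orders correctly so that each mixed cycle is assigned the right power of $f$; because $\rho$ and $\sigma$ are idempotent, every individual evaluation is elementary and no genuine estimate is required.
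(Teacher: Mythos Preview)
Your proposal is correct and follows exactly the approach the paper indicates: the paper's proof is the single line ``This follows from a direct calculation with Lemma~\ref{lemma:randomstatemoments},'' and you have carried out precisely that calculation, with the cycle-type bookkeeping for the $S_4$ sum done correctly.
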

\begin{proof}
This follows from a direct calculation with Lemma~\ref{lemma:randomstatemoments}.
\end{proof}

Now we are ready to calculate the variance as follows.
\begin{equation}
\begin{split}
    \Var(\tilde{g}(U,S))&=\Var_U(\E[\tilde{g}(U,S)|U])+\E_U[\Var(\tilde{g}(U,S)|U)]\\
    &=\frac{4+16 f+4 f^2}{ (d+1) (d+2) (d+3)}+\frac{(1+f)^2}{d (d+1)}-\frac{2 (4 f+2)}{d (d+1) (d+2)}\\
    &+\frac{2 \left(4 f^2+16 f+4\right)}{d (d+1) (d+2) (d+3)}-\frac{4 f^2+8 f}{d (d+1) (d+2)}-\frac{(1+f)^2}{(d+1)^2}\\
    &+\E_U\left[\frac{g}{m^2}+\frac{(m-1)}{m^2}\sum_{b=0}^{d-1} p_b q_b^2+\frac{(m-1)}{m^2}\sum_{b=0}^{d-1} p_b^2 q_b-\frac{2m-1}{m^2}g^2\right]\\
    &=\frac{d^2 (1+f)^2-d (6-f) f+d+2 (1-f)^2}{d (d+1)^2 (d+2) (d+3)}\\
    &+\frac{1+f}{m^2(d+1)}+\frac{(m-1)}{m^2}\frac{4+8 f}{(d+1) (d+2)}-\frac{2m-1}{m^2}\frac{d^2 (1+f)^2+5 d (1+f)^2+2 (1-f)^2}{d (d+1) (d+2) (d+3)}.
\end{split}
\end{equation}

\section{Inner product estimation with independent classical shadow}\label{app:independentshadow}
Suppose Alice rotates her state $\rho$ with a random unitary $U$ and measures in the computational basis, obtaining result $x$. Define the classical shadow as the following operator
\begin{equation}
    S=(d+1)U^\dag\ketbra{x} U - I.
\end{equation}
It is shown~\cite{Huang2020predicting} that $\E S=\rho$, where the expectation is over the randomness of $U$ as well as the randomness of quantum measurement. Consider the following algorithm for distributed quantum inner product estimation:
\begin{enumerate}
    \item Alice obtains classical shadow $\{S_1,\dots,S_k\}$ using $k$ copies of her unknown state $\rho$.
    \item Bob obtains classical shadow $\{T_1,\dots,T_k\}$ using $k$ copies of his unknown state $\sigma$.
    \item They compute $g=\Tr(\left(\frac{1}{k}\sum_{i=1}^k S_i\right)\cdot \left(\frac{1}{k}\sum_{j=1}^k T_j\right))$.
\end{enumerate}
Note that each classical shadow estimation uses an independent unitary, so the classical shadows are independent from each other. Also, the above algorithm can be implemented with simultaneous message passing without shared randomness.

\begin{theorem}
For any $\varepsilon>0$, the above algorithm returns an estimate of $f=\Tr(\rho\sigma)$ within $\varepsilon$ additive error with probability at least $2/3$, provided that
\begin{equation}
    k\geq C\cdot \max\left\{\frac{1}{\varepsilon^2},\frac{d}{\varepsilon}\right\}
\end{equation}
for some constant $C>0$ that does not depend on $f$.
\end{theorem}
\begin{proof}

As $\E S_i=\rho$ and $\E T_j=\sigma$, the estimator $g$ satisfies $\E g=f:=\Tr(\rho \sigma)$. It remains to analyze the variance of $g$. We have
\begin{equation}
    \begin{aligned}
    \Var(g)&=\frac{1}{k^4}\E\left(\sum_{i,j=1}^k\Tr(S_i T_j)\right)^2 -f^2\\
    &=\frac{1}{k^4}\E\sum_{i,j,q,l=1}^k\Tr(S_i T_j)\Tr(S_q T_l) -f^2.
    \end{aligned}
\end{equation}
There are four cases:
\begin{enumerate}
    \item $i=q$, $j=l$, there are $k^2$ terms of the form $\E \Tr(ST)^2$.
    \item $i\neq q$, $j\neq l$, this gives $k^2(k-1)^2 f^2$.
    \item $i=q$, $j\neq l$, there are $k^2(k-1)$ terms of the form $\E\Tr(ST_1)\Tr(ST_2)$.
    \item $i\neq q$, $j=l$, there are $k^2(k-1)$ terms of the form $\E\Tr(S_1 T)\Tr(S_2 T)$.
\end{enumerate}
Below we calculate these terms assuming $\rho$ and $\sigma$ are pure states; the results will be upper bounds of the case when $\rho$ and $\sigma$ are mixed states. Let's start with case 3. We have
\begin{equation}
    \begin{aligned}
    \E\Tr(ST_1)\Tr(ST_2)&=\E\Tr(S\sigma)^2\\
    &= \E_U \sum_x\expval{U \rho U^\dag}{x}\left((d+1)\expval{U \sigma U^\dag}{x}-1\right)^2\\
    &=d \E_\psi\expval{\rho}{\psi}\left((d+1)\expval{\sigma}{\psi}-1\right)^2\\
    &=\E_{\psi} d(d+1)^2\expval{\rho}{\psi}\expval{\sigma}{\psi}^2-2d(d+1)\expval{\rho}{\psi}\expval{\sigma}{\psi}+1\\
    &=d(d+1)^2\frac{2 + 4f}{d(d + 1)(d + 2)}-2d(d+1)\frac{1+f}{d(d+1)}+1\\
    &=\frac{d(1+2f)}{d+2}.
    \end{aligned}
\end{equation}
Now the variance can be written as
\begin{equation}
\begin{aligned}
\Var(g)&=\frac{1}{k^2}\E \Tr(ST)^2+\frac{(k-1)^2}{k^2}f^2+\frac{2(k-1)}{k^2}\frac{d(1+2f)}{d+2}-f^2\\
    &= \frac{1}{k^2}\E \Tr(ST)^2+O(1/k).
\end{aligned}
\end{equation}
It remains to calculate $\E \Tr(ST)^2$. We have
\begin{equation}
    \begin{aligned}
    \E \Tr(ST)^2&=\E_{U,V}\sum_{x,y}\expval{U \rho U^\dag}{x}\expval{V \sigma V^\dag}{y}\left((d+1)^2\left|\matrixel{x}{UV^\dag}{y}\right|^2-(d+2)\right)^2\\
    &=d^2\E_{\phi,\psi}\expval{\rho}{\psi}\expval{\sigma}{\phi}\left((d+1)^2\left|\braket{\psi}{\phi}\right|^2-(d+2)\right)^2\\
    &=\E_{\phi,\psi}d^2(d+1)^4\expval{\rho}{\psi}\expval{\sigma}{\phi}\left|\braket{\psi}{\phi}\right|^4\\
    &-2d^2(d+1)^2(d+2)\expval{\rho}{\psi}\expval{\sigma}{\phi}\left|\braket{\psi}{\phi}\right|^2+(d+2)^2\\
    &=\E_{\phi,\psi}d^2(d+1)^4\expval{\rho}{\psi}\expval{\sigma}{\phi}\left|\braket{\psi}{\phi}\right|^4-2(d+2)(d+2+f)+(d+2)^2\\
    &=\frac{(d+1)^2}{(d+2)^2}\left(2d^2+10d+4d f+12+8f\right)-2(d+2)(d+2+f)+(d+2)^2\\
    &=\frac{d^3+2 d^2 f+4 d^2+2 d-4 f-2}{d+2}\\
    &=O(d^2).
    \end{aligned}
\end{equation}
Here, the fifth and sixth lines are calculated using the technique of writing moments of a random pure state as sum of permutations (Lemma~\ref{lemma:randomstatemoments}). This gives 
\begin{equation}
    \Var(g)=O\left(\frac{d^2}{k^2}+\frac{1}{k}\right).
\end{equation}
Note that the result $\E \Tr(ST)^2=O(d^2)$ is consistent with Lemma 6 of \cite{Huang2020predicting}.
\end{proof}

\end{document}